\documentclass[a4paper,onecolumn,oneside,10pt]{article}
\usepackage{geometry}
\usepackage{graphics}
\usepackage{epstopdf}
\usepackage[dvips]{epsfig}

\input xy
\xyoption{all}
\usepackage{comment}
\usepackage{hyperref}

\usepackage[T1]{fontenc}
\usepackage{amsthm}

\usepackage{mathtools}
\usepackage{mathdots,enumerate}
\usepackage{textcomp}
\usepackage{siunitx}
\usepackage{amssymb,amsfonts}
\usepackage{amsmath,paralist,enumitem,mathrsfs,graphicx}
\usepackage{epstopdf}
\usepackage{pgfplots}
\newcommand\norm[1]{\left\lVert#1\right\rVert}

\usetikzlibrary{pgfplots.dateplot}
\renewcommand{\qedsymbol}{$\blacksquare$}
\newcommand\restr[2]{{% we make the whole thing an ordinary symbol
		\left.\kern-\nulldelimiterspace % automatically resize the bar with \right
		#1 % the function
		\vphantom{\big|} % pretend it's a little taller at normal size
		\right|_{#2} % this is the delimiter
}}

\makeatletter
\let\save@mathaccent\mathaccent
\newcommand*\if@single[3]{%
	\setbox0\hbox{${\mathaccent"0362{#1}}^H$}%
	\setbox2\hbox{${\mathaccent"0362{\kern0pt#1}}^H$}%
	\ifdim\ht0=\ht2 #3\else #2\fi
}
\newcommand*\rel@kern[1]{\kern#1\dimexpr\macc@kerna}
\newcommand*\widebar[1]{\@ifnextchar^{{\wide@bar{#1}{0}}}{\wide@bar{#1}{1}}}
\newcommand*\wide@bar[2]{\if@single{#1}{\wide@bar@{#1}{#2}{1}}{\wide@bar@{#1}{#2}{2}}}
\newcommand*\wide@bar@[3]{%
	\begingroup
	\def\mathaccent##1##2{%
		%Enable nesting of accents:
		\let\mathaccent\save@mathaccent
		%If there's more than a single symbol, use the first character instead (see below):
		\if#32 \let\macc@nucleus\first@char \fi
		%Determine the italic correction:
		\setbox\z@\hbox{$\macc@style{\macc@nucleus}_{}$}%
		\setbox\tw@\hbox{$\macc@style{\macc@nucleus}{}_{}$}%
		\dimen@\wd\tw@
		\advance\dimen@-\wd\z@
		%Now \dimen@ is the italic correction of the symbol.
		\divide\dimen@ 3
		\@tempdima\wd\tw@
		\advance\@tempdima-\scriptspace
		%Now \@tempdima is the width of the symbol.
		\divide\@tempdima 10
		\advance\dimen@-\@tempdima
		%Now \dimen@ = (italic correction / 3) - (Breite / 10)
		\ifdim\dimen@>\z@ \dimen@0pt\fi
		%The bar will be shortened in the case \dimen@<0 !
		\rel@kern{0.6}\kern-\dimen@
		\if#31
		\overline{\rel@kern{-0.6}\kern\dimen@\macc@nucleus\rel@kern{0.4}\kern\dimen@}%
		\advance\dimen@0.4\dimexpr\macc@kerna
		%Place the combined final kern (-\dimen@) if it is >0 or if a superscript follows:
		\let\final@kern#2%
		\ifdim\dimen@<\z@ \let\final@kern1\fi
		\if\final@kern1 \kern-\dimen@\fi
		\else
		\overline{\rel@kern{-0.6}\kern\dimen@#1}%
		\fi
	}%
	\macc@depth\@ne
	\let\math@bgroup\@empty \let\math@egroup\macc@set@skewchar
	\mathsurround\z@ \frozen@everymath{\mathgroup\macc@group\relax}%
	\macc@set@skewchar\relax
	\let\mathaccentV\macc@nested@a
	%The following initialises \macc@kerna and calls \mathaccent:
	\if#31
	\macc@nested@a\relax111{#1}%
	\else
	%If the argument consists of more than one symbol, and if the first token is
	%a letter, use that letter for the computations:
	\def\gobble@till@marker##1\endmarker{}%
	\futurelet\first@char\gobble@till@marker#1\endmarker
	\ifcat\noexpand\first@char A\else
	\def\first@char{}%
	\fi
	\macc@nested@a\relax111{\first@char}%
	\fi
	\endgroup
}

\makeatletter
\def\ps@pprintTitle{%
	\let\@oddhead\@empty
	\let\@evenhead\@empty
	\def\@oddfoot{\footnotesize\itshape
		% line below modified from elsarticle.cls
		\ifx\@empty\@empty
		\else\@journal\fi\hfill\today}%
	\let\@evenfoot\@oddfoot	
}

\geometry{left=2.5cm, right=2.5cm, top=3.4cm, bottom=2.9cm, head=1cm, headsep=1cm, foot=1cm}

\newcommand{\dd}{\mathrm{d}}
\newcommand{\upperRomannumeral}[1]{\uppercase\expandafter{\romannumeral#1}}
\newcommand{\lowerRomannumeral}[1]{\lowercase\expandafter{\romannumeral#1}}
\usepackage{indentfirst}
\usepackage{epstopdf}
\usepackage[caption=false]{subfig}
\theoremstyle{plain}

\newtheorem{hyp}{Hypothesis}
\newtheorem{theorem}{Theorem}
\newtheorem{lemma}[theorem]{Lemma}
\newtheorem*{lemma*}{Lemma}

\newtheorem{corollary}[theorem]{Corollary}
\newtheorem{proposition}[theorem]{Proposition}
\newtheorem{rem}{Remark}
\theoremstyle{definition}

%\journal{Stochastic Processes and their Applications}

\title{The rough Hawkes Heston stochastic volatility model\footnote{The research of Sergio Pulido benefited from the financial support of the chairs ``Deep finance \& Statistics'' and ``Machine Learning \& systematic methods in finance'' of \'Ecole Polytechnique. Sergio Pulido and Simone Scotti acknowledge support by the Europlace Institute of Finance (EIF) and the Labex Louis Bachelier, research project: ``The impact of information on financial markets''. }}

\author{Alessandro Bondi\thanks{Classe di Scienze, Scuola Normale Superiore di Pisa, alessandro.bondi@sns.it}\and Sergio Pulido\thanks{Universit\'e Paris-Saclay, CNRS, ENSIIE, Univ Evry, Laboratoire de Math\'ematiques et Mod\'elisation d'Evry (LaMME), sergio.pulidonino@ensiie.fr} \and Simone Scotti \thanks{Universit\`a di Pisa, simone.scotti@unipi.it}}

\begin{document}
	\maketitle
	
\begin{abstract}
We study an extension of the Heston stochastic volatility model that incorporates rough volatility and jump clustering phenomena. 
In our model, named the rough Hawkes Heston stochastic volatility model, the spot variance is a rough Hawkes-type process proportional to the intensity process of the jump component appearing in the dynamics of the spot variance itself and the log returns. 
The model belongs to the class of affine Volterra models. 
In particular, the Fourier-Laplace transform of the log returns and the square of the volatility index can be computed explicitly in terms of solutions of deterministic Riccati-Volterra equations, which can be efficiently approximated using a multi-factor approximation technique. 
We calibrate a parsimonious specification of our model characterized by a power kernel and an exponential law for the jumps. 
We show that our parsimonious setup is able to simultaneously capture, with a high precision, the behavior of the implied volatility smile for both S\&P 500 and VIX options. 
In particular, we observe that in our setting the usual shift in the implied volatility of VIX options is explained by a very low value of the power in the kernel. 
Our findings demonstrate the relevance, under an affine framework, of rough volatility and self-exciting jumps in order to capture the joint evolution of the S\&P 500 and VIX.

{\bf JEL code: C63, G12, G13}

{\bf Keywords: Stochastic volatility, Rough volatility, Hawkes processes, Jump clusters, Leverage effect,
affine Volterra processes, VIX, joint calibration of S\&P 500 and VIX smiles.}

\end{abstract} 

\section{Introduction}\label{intro}
%%%%%%% Volatility is stochastic. The VIX index. The joint calibration problem. Our approach. %%%%%%%
The Black-Scholes model, where volatility is constant, and more generally classical local volatility models, where volatility is a function of time and spot asset prices, fail to reproduce the dynamics of implied volatility smiles of options written on the underlying asset. 
To overcome this limitation, multiple stochastic, stochastic-local, and path-dependent volatility models have been developed and studied in recent years.
The complexity of volatility modeling, however, has increased with the significant growth over time of markets on volatility indices, such as the VIX.
The rise in popularity of these markets is explained in part by their relevance to protect portfolios \cite{Rho}. %, and to measure market stress \cite{Horst2} 
It has therefore become fundamental to develop stochastic models able to capture the joint dynamics of the underlying prices and their volatility index. 
The task is difficult because classical stochastic models fail to calibrate simultaneously the volatility smiles of options on the underlying and its volatility index. 
This modeling challenge, known as the joint S\&P 500/VIX calibration puzzle \cite{Guyon, Guyon2}, has inspired the introduction of more sophisticated models, e.g. \cite{Gatheral, Guyon2, Guyon3}, that incorporate new features to the joint dynamics of the underlying and the volatility in order to solve the problem.
In this paper we tackle the challenge by proposing a tractable affine model with rough volatility and volatility jumps that cluster and that have the opposite direction but occur at the same time as the jumps of the underlying prices. 
In this introduction we give a brief literature review to explain the choice of our framework.

%%%%%%% VIX spikes and clusters. Volatility jumps + little intro to joint calibration %%%%%%%
The dynamics of the VIX volatility index are highly complex. 
In particular, they exhibit large and systematically positive variations over very short periods, with a tendency to form clusters of spikes during difficult periods like the 2008 financial crisis and the beginning of the COVID-19 pandemic in 2020.
This is accompanied by very long periods without any large fluctuation and a less important mean reversion speed.
These observations are in line with an increasing number of studies that indicate the presence of jumps in the volatility \cite{Dotsis,Todorov}, on the underlying \cite{Bates}, and the fact these jumps are common to the volatility and underlying \cite{Sepp}.

The growing interest in volatility indices has driven the standardization of contingent claims written on the volatility indices themselves. 
These volatility index markets have very unique features. 
For VIX futures and Exchange-Traded products these features are studied in \cite{AP2018}. 
The complexity of volatility markets is also exemplified by the difficulty to jointly model the behavior of the volatility smiles of vanilla options written on the underlying and its volatility index, see for instance \cite{Alos, Papanicolaou,RITO}. 
This longstanding puzzle is known as the S\&P 500 (SPX)/VIX calibration puzzle. 
A growing body of literature explains the difficulty arguing that ``the state-of-the-art stochastic volatility models in the literature cannot capture the S\&P 500 and VIX option prices simultaneously'', see \cite{Song-Xiu}.  
As pointed out in \cite{Guyon, Guyon2}, ``all the attempts at solving the joint S\&P 500/VIX smile calibration problem only produced imperfect, approximate fits.''
The problem is that usual stochastic models either fail to reproduce one or both shapes of the implied volatility for S\&P 500 and VIX options or, when both the shapes are coherent, the implied volatility levels are incorrect.

%%%%%%%  Volatility memory non Markovian. Volaitlity is rough. %%%%%%%

Access to high frequency data has improved our understanding of the microstructure of financial markets and the effects on volatility. 
In particular, recent studies indicate that non-Markovian models with rough volatility trajectories might be appropriate to better capture long time dependencies due to meta orders and the large contribution of automatic orders. 
This is examined in \cite{Cont11} which provides a general analysis of order-driven markets, the work in \cite{ComteRenault} which elucidates the memory-features of volatility, and the studies in \cite{EEFR,Gath1} which give a micro-structural justification to the newly developed rough volatility models.

%%%%%%%  The Heston model. Its limitations.  rHeston%%%%%%%
From a modeling point of view, affine models provide a convenient framework because they are flexible and, thanks to semi-explicit formulas for the Fourier-Laplace transform, fast computations can be performed using Fourier-based techniques \cite{DFS2003, DPS2000, F01}. 
The most popular affine stochastic volatility model is the Heston model \cite{Heston}, where the spot variance is a square-root mean-reverting CIR (Cox-Ingersoll-Ross \cite{CIR85}) process. 
This model is able to reproduce some stylized features like the mean-reverting property of the volatility and the leverage effect. 
It is, however, unable to reproduce other phenomena such as extreme paths of volatility during crisis periods (even for large values of the volatility of volatility parameter) and the at the money (ATM) skews of underlying options' implied volatility simultaneously for short and long maturities. 
These limitations, and the micro-structural behavior of markets described in the previous paragraph, motivated the introduction of the rough Heston model \cite{rh1,rh2}.
The rough Heston model is tractable as it belongs to the class of affine Volterra models \cite{sergio}, and semi-explicit formulas for the Fourier-Laplace transform are still available. 
Unfortunately, this model cannot reproduce the features of options written on the volatility index and the underlying simultaneously.

 %%%%%%%  Description of the model. Rough + jumps + affine + implementation, fourier laplace multifactor and affine VIX in terms of vol%%%%%%%
In order to model the joint behavior of S\&P 500 and VIX markets, consistent with empirical evidence, we add two specific features to the usual Heston model. First, we incorporate rough volatility by adding a power kernel proportional to $t^{\alpha-1}$, with $\alpha\in (1/2,1]$, to the dynamics of the spot variance. Second, we postulate common jumps for the volatility and the underlying with a negative leverage. 
The presence of jumps in both underlying and variance helps to reproduce a skewed implied volatility for vanilla options as in the Barndorff-Nielsen and Shephard model \cite{BarSheInBook, BarShe}. 
Inspired by the Hawkes framework, taking into account jump-clustering and endogeneity of financial markets, we model the spot variance to be proportional to the intensity process of the jump component appearing in the dynamics of the spot variance itself and the log returns. 
For these reasons, we name our model the rough Hawkes Heston model.

To keep mathematical and numerical tractability, we choose an affine specification of the model.
 As such, our model belongs to the class of affine Volterra processes \cite{sergio}, which has been recently extended to jump processes in \cite{primo, CT, CT2}. In particular, the Fourier-Laplace transform of the log returns and the square of the volatility index can be computed explicitly in terms of solutions of deterministic Riccati-Volterra equations, see Theorems \ref{t6} and \ref{t6V}. 
 We approximate the solutions of the Riccati-Volterra equations via a multi-factor scheme as in \cite{ee}, see Theorem \ref{stime}. 
 We leave for future study the implementation and analysis in our framework of other methods such as the Adams method \cite{Ada1, Ada2}, asymptotic expansions based on Malliavin calculus in the spirit of \cite{Alos-GR}, and hybrid approximation techniques for Volterra equations similar to those in \cite{CGP}. 
 
 The affine property is an advantage of our modeling approach compared to other models proposed to solve the SPX/VIX calibration problem, such as the quadratic rough Heston model \cite{Gatheral}, where pricing is done via Monte Carlo or machine learning techniques \cite{rosenbaum_zhang}. 
 In addition, our affine framework is convenient because Variance Swap prices and the square VIX index have explicit affine relations to the forward curve, see Corollary \ref{corollary-affine} and Remark \ref{Var_Swap}. 
This is a generalization, to the affine Volterra setting, of the affine relation already pointed out in \cite{Kallsen} within the classical affine exponential framework and empirically confirmed in \cite{MST20}.

 %%%%%%%  The joint calibration problem via jumps and other modeling approaches  %%%%%%%

Previous literature on jump-diffusion models focusing on the evolution of S\&P 500 and the VIX proposes either high-dimensional models \cite{ContKok13, PPR, Sepp}, or models based on hidden Markov chains \cite{GIP, PS}. 
These models require a large number of parameters and suffer from the lack of interpretability of the random factors. 
Our approach to model the joint SPX/VIX dynamics is different. As in \cite{BBSS}, we keep the number of parameters low by assuming that the jump intensity is proportional to the variance process itself, and jumps are common to the volatility and underlying with opposite signs. 
The main new ingredient of our model, compared to \cite{BBSS}, is the addition of a Brownian component and a power kernel to the variance process. 
This generates by construction a jump clustering effect and takes into account related findings in the rough volatility literature \cite{AlosLeonVives, Alos-S, Bayer, bennedsen_pakkanen,EEFR, rh2, Fuka,  Gath1, Gatheral,Livieri}. 

 %%%%%%%  The parameter alpha in our model close to 1/2. ATM skew  %%%%%%%

The rough Hawkes Heston model is able to reconcile the shapes and level of the S\&P 500 and VIX volatility smiles. 
An important role is played by the parameter $\alpha$ characterizing the kernel. 
As is the case for other rough volatility models, this parameter controls the explosion rate of the term structure of ATM skews for SPX option smiles as maturity goes to zero. 
We show that when $\alpha$ is near to $1/2$, the rate of explosion is in the range $[0.5, 0.6]$. 
This is consistent with similar findings in the rough volatility literature \cite{Alos-S, Bayer, bennedsen_pakkanen,EEFR, Fuka, Gath1, Gatheral}. 
In addition, in our framework, the parameter $\alpha$ plays a crucial role because it controls the level of the implied volatility of VIX options for short maturities. We observe, that as $\alpha$ approaches $1/2$ the levels of S\&P 500 and VIX smiles are coherent.

 %%%%%%% Summary  %%%%%%%
To summarize, the model that we propose in this paper shares many features with other existing models. 
These features are mainly: 
rough volatility \cite{Bayer, EEFR,  rh2, Fuka, Gath1, Gatheral}, jumps \cite{BarSheInBook,BarShe, Bates, ContKok13, PPR, Sepp}, the Hawkes/branching character of volatility \cite{BBSS, BrignoneSgarra, Horst2}, and the affine structure \cite{sergio, primo, DFS2003, DPS2000, F01, Kallsen, JMSZ}. 
Consequently we take advantage of the low regularity and memory features of rough volatility models, the large fluctuation of jumps, the clusters of Hawkes processes and the explicit Fourier-Laplace transform of the affine setup.
The specification that we adopt for the joint SPX/VIX calibration is parsimonious with only five evolution-related parameters. 
Moreover, all the parameters have a financial interpretation. 
The parameter $\alpha$ in the kernel controls the decay of the volatility memory, SPX ATM skews and the level of VIX smiles. 
We have in addition the classical parameters controlling the volatility mean reversion speed and the volatility of volatility, and two parameters related to the leverage effect that specify the correlation between Brownian motions and between the jumps in the asset and its volatility. 
Despite its robustness, the rough Hawkes Heston stochastic volatility model captures remarkably well the implied volatility surfaces of S\&P 500 and VIX at the same time.

 %%%%%%% Oraganization  %%%%%%%
The paper is organized as follows. 
Section \ref{s_model} lays out the essential hypotheses of our study and introduces the stochastic model under a general setup, i.e. with a general kernel and law for the jumps.
Section \ref{FL_logS} explains the derivation of the Fourier-Laplace transform of the log returns and the application to undelying's options pricing.
Section \ref{sect-VIX2} focuses on the VIX index characterizing the Fourier-Laplace transform of the VIX$^2$, and describes the Fourier-based formulas to price options on the VIX. 
Section \ref{sec_num} studies the multi-factor numerical scheme used in order to approximate the solutions to the Riccati-Volterra equations arising in Sections \ref{FL_logS} and \ref{sect-VIX2}.
Section \ref{sec_cal} details the calibration of our model to S\&P 500 and VIX options data.
Section \ref{sec_sensi} presents a complete and detailed sensitivity analysis of implied volatility curves with respect to the model parameters.
Section \ref{sec_conclusion} summarizes the conclusions of our study.  
Appendix \ref{ap_A} contains the proof of the necessary existence, uniqueness and comparison results for the Riccati-Volterra equations appearing in Section \ref{FL_logS}.
Appendix \ref{Lew} presents the proof of the Fourier-inversion formula used to price options on the underlying.
To finish, in Appendix \ref{ap_C} we prove the main result related to the convergence of the multi-factor approximation scheme for the Riccati-Volterra equations.

	\section{The model}\label{s_model}
	
	We study a stochastic volatility model where the spot variance  $\sigma^2=(\sigma^2_t)_{t\ge0}$ is a predictable process, with trajectories in $L^2_\text{loc}(\mathbb{R}_+)$, defined on a stochastic basis $(\Omega,\mathcal{F},\mathbb{Q},\mathbb{F}=(\mathcal{F}_t)_{t\ge0})$. We assume that the filtration $\mathbb{F}$ satisfies the usual conditions and that $\mathcal{F}_0$ is the trivial $\sigma-$algebra. 
	
	We consider, throughout our study, a kernel $K$ that satisfies the next requirement, see \cite{edu, ACLP, sergio, primo}.

	\begin{hyp}\label{c1}
		The kernel $K\in L^2_\emph{loc}(\mathbb{R}_+)$ is nonnegative, nonincreasing, not identically zero and continuously differentiable on $(0,\infty)$. Furthermore, its resolvent of the first kind $L$ exists and it is nonnegative and nonincreasing, i.e. $ s\mapsto L[s,s+t]$ is nonincreasing for every $t\ge0$. 
	\end{hyp}
	We recall that, given a kernel $K\in L^1_{\text{loc}}(\mathbb{R}_+; \mathbb{R}^{d\times d})$, an $\mathbb{R}^{d\times d}-$valued measure $L$ is called its (measure) resolvent of the first kind if $L\ast K=K\ast L=I$, where $I\in\mathbb{R}^{d\times d}$ is the identity matrix. The resolvent of the first kind does not always exist, but if it does then it is unique, see \cite[Theorem $5.2$, Chapter $5$]{g}. 
	
	We assume that the spot variance $\sigma^2$ is a $\mathbb{Q}\otimes \dd t-$a.e. nonnegative predictable process which satisfies the following stochastic affine Volterra equation of convolution type with jumps:
	\begin{equation}\label{variance}
		\sigma^2=g_0+K\ast \dd Z,\quad \mathbb{Q}\otimes \dd t-\text{a.e.}
	\end{equation}
	Here $Z=(Z_t)_{t\ge0}$ is a semimartingale starting at $0$ with associated jumps-measure $\mu(\dd t,\dd z)$ and compensated measure $\tilde{\mu}(\dd t,\dd z)=\mu(\dd t,\dd z)-\nu(\dd z)\sigma^2_t\,\dd t$, with $\nu$ a nonnegative measure on $\mathbb{R}_+$  such that  $\nu(\{0\})=0$ and $\int_{\mathbb{R}^{+}}|z|^2\nu(\dd z)<\infty$. Since the intensity of the jumps of $\sigma^2$ is proportional to $\sigma^2$ itself, the spot variance is a Hawkes-type process, which is coherent with other models that incorporate endogeneity of financial markets such as \cite{BBSS, CMS, rh2, Gonzato,JMS}. 
	More specifically, $Z$ is given by
	\[
	\dd Z_t=b\,\sigma^2_t\dd t+\sqrt{c}\,\sigma_t \,\dd W_{2,t} + \int_{\mathbb{R}_{+}}z\,\tilde{\mu}\left(\dd t,\dd z\right),\quad Z_0=0,
	\]
	where $b\in\mathbb{R},\,c>0$ and $W_2=(W_{2,t})_{t\ge 0}$ is an $\mathbb{F}-$Brownian motion. 
	%Here $\sigma=\left(\sigma_t\right)_{t\ge 0}$ is the predictable process defined by $\sigma_t=\sqrt{\sigma^2_t1_{\left\{\sigma^2\ge 0\right\}}}$. 
	In the sequel, we denote by $\widetilde{Z}=(\widetilde{Z}_t)_{t\ge0}$ the process $\widetilde{Z}_t=\sqrt{c}\,\sigma_t\,\dd W_{2,t}+\int_{\mathbb{R}_{+}}z\,\tilde{\mu}(\dd t,\dd z),\,t\ge0$. Notice that $\widetilde{Z}$ is a square-integrable martingale by \cite[Lemma $1$]{primo}. 
	The function $g_0$ is the initial input spot variance curve. By analogy with the rough Heston model introduced and studied in \cite{rh1,rh2}, we consider it of the form
	\begin{equation}\label{in_c}
		g_0\left(t\right)=\sigma_0^2+\beta\int_{0}^{t}K\left(s\right)\dd s,\quad t\ge 0,
	\end{equation}
	where $\sigma_0^2,\,\beta\ge0$. According to \cite[Appendix A]{primo}
	\begin{equation}\label{canonical}
		\sigma^2=g_0-R_{-bK}\ast g_0+E_{b,K}\ast \dd \widetilde{Z},\quad \mathbb{Q}\otimes \dd t-\text{a.e.},
	\end{equation}
	where $R_{-bK}$ is the resolvent of the second kind of $-bK$ and $E_{b,K}$ is the canonical resolvent of $K$ with parameter $b$. We recall that the resolvent of the second kind $R_K$ for a kernel $K\in L^1_{\text{loc}}(\mathbb{R}_+)$  is the unique solution $R_K\in L^1_{\text{loc}}(\mathbb{R}_+)$ of the two equations $K\ast R_K=R_K\ast K=K-R_K$. The canonical resolvent $E_{\lambda,K}$ of $K$ with parameter $\lambda$	is defined by $E_{\lambda,K}=-\lambda^{-1}R_{-\lambda K}$ for $\lambda\neq0$, whereas $E_{0,K}=K$,
	see \cite[Theorem 3.1, Chapter 2]{g} and the subsequent definition.
	
	\begin{rem}
	If we assume that $K$  and the shifted kernels $K(\cdot+1/n)$, $n\in\mathbb{N}$, satisfy Hypothesis \ref{c1}, the (weak) existence of the spot variance process $\sigma^2$, satisfying \eqref{variance}, is ensured by \cite[Theorem $2.13$]{edu} and \cite[Lemma $9$]{primo}. Assuming weak existence, weak uniqueness is established in \cite[Corollary $12$]{primo} under Hypothesis \ref{c1}. We refer to \cite{ACLP} and \cite{primo} for more information about stochastic Volterra equations and stochastic convolution for processes with jumps.
	\end{rem}
	
	A useful tool for the development of the theory is the adjusted forward process, which we now define. For every $t\ge0$, it is denoted by $(g_t(s))_{s>t}$ and it is a jointly measurable process on $\Omega\times (t,\infty)$ such that
	\begin{equation}\label{adjusted}
		g_t\left(s\right)= g_0\left(s\right)+\int_{0}^{t}K\left(s-r\right)\dd Z_r,\quad \mathbb{Q}-\text{a.s., }s>t.
	\end{equation}
	Thanks to \cite[Theorem 46]{meyer} and the fact that $\mathbb{F}$ satisfies the usual conditions, we can consider $g_t(\cdot)$ to be $\mathcal{F}_t\otimes \mathcal{B}(t,\infty)-$measurable.  \\
	%Note that for $t=0$ we have an abuse of notation, as $g_0$ represents both the initial input curve in \eqref{in_c} and the process just defined in \eqref{adjusted}. This, however, is not an issue as these two concepts coincide $\mathbb{Q}\otimes \dd t-$a.e. in $\Omega\times \left(0,\infty\right)$. In the following, we continue to consider $g_0$ as the initial input curve.\\
	Analogous arguments provide a version of the conditional expectation process $\mathbb{E}[\sigma^2|\mathcal{F}_t]=(\mathbb{E}[\sigma^2_s|\mathcal{F}_t])_{s>t}$ which is $\mathcal{F}_t\otimes \mathcal{B}(t,\infty)-$measurable. In particular, from \eqref{canonical},
	\begin{equation}\label{forward}
		\mathbb{E}\left[\sigma^2_s\Big| \mathcal{F}_t\right]=g_0-R_{-bK}\ast g_0+\int_{0}^{t}E_{b,K}\left(s-r\right)\dd \widetilde{Z}_r,\quad \mathbb{Q}-\text{a.s., }s>t.
	\end{equation}
	
	\begin{comment}
		The connection between the two processes just introduced is given by the next equation
		\begin{multline}\label{lias}
			g_t\left(s\right)=\mathbb{E}\left[\sigma^2_s-b\int_{t}^{s}K\left(s-r\right)\sigma^2_{r}\,\dd r\,\Big|\,\mathcal{F}_t\right]\\
			=\mathbb{E}\left[\sigma^2_s\Big|\, \mathcal{F}_t\right]-b\int_{t}^{s}K\left(s-r\right)\mathbb{E}\left[\sigma^2_r\Big|\,\mathcal{F}_t\right]\dd r
			,\quad \mathbb{Q}-\text{a.s., for a.e. }s>t,
		\end{multline}
		where the equality of the first and last term can be understood in the $\mathbb{Q}\otimes \dd t-$a.e. sense.
	\end{comment}
	
	We now prescribe the dynamics of the log returns process $X=(X_t)_{t\ge0}$ as follows:
	\begin{multline}\label{spot}
		\dd X_t=-\left(\frac{1}{2}+\int_{\mathbb{R}_+}\left(e^{-\Lambda z}-1+\Lambda z\right)\nu\left(\dd z\right)\right)\sigma^2_t\,\dd t+\sigma_t\left(\sqrt{1-\rho^2}\,\dd W_{1,t}+\rho\,\dd W_{2,t}\right)\\-\Lambda\int_{\mathbb{R}_+}z\,\tilde{\mu}\left(\dd t, \dd z\right),\quad X_0=0,
	\end{multline}
	where $\rho\in[-1,1]$ is a correlation parameter, 
	$W_1=(W_{1,t})_{t\ge0}$ is an $\mathbb{F}-$Brownian motion independent from $W_2$ and
	$\Lambda\ge0$ is a leverage parameter forcing common jumps for volatility and underlying with opposite signs. 
	This is coherent with empirical findings in \cite{Todorov}, stylized features studied 
	in \cite{Cont01}, and the financial/econometric 
	literature with jumps, e.g. \cite{BarSheInBook, BarShe, Bates, BBSS, CS15, RST22, Sepp}.
	We have assumed, for the  sake of readability and without lost of generality, that interest rates are zero.
	The price process of the underlying asset will be $S=(S_t)_{t\ge 0}=(S_0e^{X_t})_{t\ge0}$, where $S_0>0$ represents the initial price. An application of It\^o's formula shows that $S$ is a local martingale. Indeed,
	\begin{align*}
		\frac{\dd S_t}{S_{t-}}&=-\left(\frac{1}{2}+\int_{\mathbb{R}_+}\left(e^{-\Lambda z}-1+\Lambda z\right)\nu\left(\dd z\right)\right)\sigma_t^2\dd t+\sigma_t\left(\sqrt{1-\rho^2}\,\dd W_{1,t}+\rho\,\dd W_{2,t}\right)\\ &\quad
		-\Lambda \int_{\mathbb{R}_+}z\,\tilde{\mu}\left(\dd t, \dd z\right) +\frac{1}{2}\sigma_t^2\,\dd t+\int_{\mathbb{R}_+}\left(e^{-\Lambda z}-1+\Lambda z\right)\,\mu\left(\dd t,\dd z\right)\\
		&=\sigma_t \left(\sqrt{1-\rho^2}\,\dd W_{1,t}+\rho\,\dd W_{2,t}\right)+ \int_{\mathbb{R}_{+}}\left(e^{-\Lambda z}-1\right)\tilde{\mu}\left(\dd t,\dd z\right)
		\eqqcolon\dd N_t,
	\end{align*}	
	where $N=(N_t)_{t\ge0}$ is a local martingale with $N_0=0$. In particular, since $S$ starts at $S_0$, it follows that $S=S_0\mathcal{E}(N)$, where $\mathcal{E}$ denotes the Dol\'eans-Dade exponential. In the next section, see Corollary \ref{cor_mar}, we will improve on this result by showing that, for every $T>0$, the restriction of $S$ to $[0,T]$ is a true martingale.
	% In any case, note that we are working under a martingale measure $\mathbb{Q}$ with risk--free interest rate $r=0$. 
	
	\section{The Fourier-Laplace transform of the log returns}\label{FL_logS}
	In this section we study, for a fixed $T\ge0$, the conditional Fourier-Laplace transform of $X_T$, $\mathbb{E}[e^{wX_T}|\,\mathcal{F}_t]$, $t\in[0,T]$. Here $w\in\mathbb{C}$ is subject to suitable conditions that will be specified in the sequel. In particular, we want to find a formula that allow us to compute the prices of options written on the underlying asset using Fourier-inversion techniques \cite{DFS2003, DPS2000, F01,G2016}. We will adopt the following notation: for $z\in\mathbb{C}$ we denote by $\mathfrak{R} z$ and $\mathfrak{Im} z$ the real and imaginary parts of $z$, respectively. We let $\mathbb{C}_+$ [resp., $\mathbb{C}_-$] be the set of complex real numbers with nonnegative [resp., nonpositive] real part.
	
	Let us define the mapping $F\colon \mathbb{C}_+\times\mathbb{C}_-\to\mathbb{C}$ by
	\begin{equation}\label{F_log}
		F\left(u,v\right)=\frac{1}{2}\left(u^2-u\right)+\left(b+\rho\sqrt{c}\,u\right)v+\frac{c}{2}v^2
		+\int_{\mathbb{R}_+}\left[e^{\left(v-\Lambda u\right)z}-u\left(e^{-\Lambda z}-1\right)-1-vz\right]\nu\left(\dd z\right),
	\end{equation}
	for every $(u,v)\in\mathbb{C}_+\times \mathbb{C}_-$. For the development of the theory we need the following result about deterministic Riccati-Volterra equations, whose proof is postponed to Appendix \ref{ap_A}.
	\begin{theorem}\label{gl_log}
		Suppose that $K$ satisfies Hypothesis \ref{c1} and $w\in\mathbb{C}$ is such that  $\mathfrak{R}w\in[0,1]$.
		\begin{enumerate}[label=\emph{(\roman*)}]
			\item\label{1t1} There exists a unique continuous solution $\psi_w\colon\mathbb{R}_+\to\mathbb{C}_-$ of the Riccati-Volterra equation
			\begin{equation}\label{RV_log}
				\psi_w\left(t\right)=\int_{0}^{t}K\left(t-s\right) F\left(w, \psi_w\left(s\right)\right)\dd s=\left(K\ast F\left(w,\psi_w\left(\cdot\right)\right)\right)\left(t\right),\quad t\ge0.
			\end{equation}
			In particular, $\psi_{\mathfrak{R}w}$ is $\mathbb{R}_--$valued.
			\item\label{2t1}  The following inequalities hold:
			\begin{equation}\label{comp_log}
				\mathfrak{R}\psi_w\left(t\right)\le {\psi_{\mathfrak{R}w}}\left(t\right)\le 0,\quad t\ge 0.
			\end{equation}
		\end{enumerate} 
	\end{theorem}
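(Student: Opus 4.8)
The plan is to combine local well-posedness of \eqref{RV_log} with a comparison principle for Volterra equations driven by nonnegative kernels, exploiting in an essential way the sign and convexity structure of $F$. First I would check that $F$ is well defined and continuous on $\mathbb{C}_{+}\times\mathbb{C}_{-}$, and that $v\mapsto F(u,v)$ is locally Lipschitz uniformly for $u$ in compact sets: in the integrand of \eqref{F_log}, near $z=0$ the terms $e^{(v-\Lambda u)z}-1-vz$ and $-u(e^{-\Lambda z}-1)$ cancel at first order so the bracket is $O(|z|^{2})$, while for large $z$ one has $|e^{(v-\Lambda u)z}|\le 1$ since $\mathfrak{R}v\le 0\le\mathfrak{R}u$ and $\Lambda\ge 0$. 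A Picard/Banach fixed-point argument on $C([0,\delta];\mathbb{C})$, using $\|K\ast f\|_{L^{\infty}[0,\delta]}\le\|K\|_{L^{2}[0,\delta]}\|f\|_{L^{2}[0,\delta]}$ for $\delta$ small, then yields a unique continuous local solution of \eqref{RV_log}; for real $w$ the iteration preserves real-valuedness.

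The computational heart is a pair of pointwise estimates. Writing $u=u_{1}+iu_{2}$ and $v=v_{1}+iv_{2}$ with $u_{1}\in[0,1]$ and $v_{1}\le 0$, an expansion of \eqref{F_log} gives
\[
F(u_{1},0)\le 0,\qquad \mathfrak{R}F(u,v)\le F(u_{1},v_{1})-\tfrac12\big(u_{2}+\rho\sqrt{c}\,v_{2}\big)^{2}-\tfrac12\big(1-\rho^{2}\big)c\,v_{2}^{2}\le F(u_{1},v_{1}).
\]
The first inequality holds because $\tfrac12(u_{1}^{2}-u_{1})\le 0$ and, for each fixed $z>0$, $u_{1}\mapsto e^{-\Lambda u_{1}z}-u_{1}(e^{-\Lambda z}-1)-1$ is convex and vanishes at $u_{1}=0,1$, hence is nonpositive on $[0,1]$. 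For the second, $\mathfrak{R}(e^{(v-\Lambda u)z})\le e^{(v_{1}-\Lambda u_{1})z}$ takes care of the jump part, while the real parts of $\tfrac12(u^{2}-u)$, $(b+\rho\sqrt{c}\,u)v$ and $\tfrac{c}{2}v^{2}$ reassemble into $F(u_{1},v_{1})$ plus the displayed quadratic form in $(u_{2},v_{2})$, which is negative semidefinite precisely because $|\rho|\le 1$. I would also record that $v\mapsto F(u_{1},v)$ is convex ($\partial_{v}^{2}F(u_{1},v)=c+\int_{\mathbb{R}_{+}}z^{2}e^{(v-\Lambda u_{1})z}\nu(\dd z)>0$) and bounded below on $\mathbb{R}$, uniformly for $u_{1}\in[0,1]$.

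These feed into a comparison lemma, valid under Hypothesis \ref{c1}: if $\underline{\phi}\le K\ast H(\underline{\phi})$, $\bar{\phi}\ge K\ast H(\bar{\phi})$ on $[0,T]$ with $H$ locally Lipschitz and $\underline{\phi}(0)\le\bar{\phi}(0)$, then $\underline{\phi}\le\bar{\phi}$; one proves it by writing $H=\gamma\,\mathrm{id}+\widetilde{H}$ with $\gamma\ge 0$ large enough that $\widetilde{H}$ is nondecreasing on the compact range at hand, rewriting via the (nonnegative) canonical resolvent $E_{\gamma,K}$, and iterating a Gronwall estimate on $(\underline{\phi}-\bar{\phi})^{+}$. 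I would apply this with $H=F(\mathfrak{R}w,\cdot)$: taking $\bar{\phi}\equiv 0$ (a supersolution since $F(\mathfrak{R}w,0)\le 0$) and $\underline{\phi}=\psi_{\mathfrak{R}w}$ yields $\psi_{\mathfrak{R}w}\le 0$; taking $\underline{\phi}=\mathfrak{R}\psi_{w}$, which the second estimate above makes a subsolution of $\phi=K\ast F(\mathfrak{R}w,\phi)$ because $K\ge 0$, and $\bar{\phi}=\psi_{\mathfrak{R}w}$ yields $\mathfrak{R}\psi_{w}\le\psi_{\mathfrak{R}w}\le 0$, i.e. \eqref{comp_log}; in particular $\psi_{w}$ is $\mathbb{C}_{-}$-valued and, being real-valued by uniqueness when $w$ is real, $\psi_{\mathfrak{R}w}$ is $\mathbb{R}_{-}$-valued. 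Globality then follows from a priori bounds: for real $w=u_{1}$, $\psi_{u_{1}}\le 0$ and $F(u_{1},\cdot)\ge -C$ on $\mathbb{R}$ give $\psi_{u_{1}}(t)\ge -C\|K\|_{L^{1}[0,t]}$, so $\psi_{u_{1}}$ is bounded on compacts and extends to $\mathbb{R}_{+}$; for general $w$, $\mathfrak{R}\psi_{w}\le\psi_{\mathfrak{R}w}$ is bounded on compacts, so the only a priori quadratic term of $\mathfrak{Im}F(w,\psi_{w})$, namely $\mathfrak{Im}(\tfrac{c}{2}\psi_{w}^{2})=c\,\mathfrak{R}\psi_{w}\,\mathfrak{Im}\psi_{w}$, becomes linear in $\mathfrak{Im}\psi_{w}$, and $\mathfrak{Im}\psi_{w}=K\ast\mathfrak{Im}F(w,\psi_{w})$ is then controlled by a Gronwall inequality.

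I expect the main obstacle to be the comparison lemma itself: for Volterra equations the ODE-type (Nagumo) invariance principle is not automatic, so the structural hypotheses on $K$ in Hypothesis \ref{c1} — nonnegativity and monotonicity of $K$ and of its first-kind resolvent, and nonnegativity of the canonical resolvents $E_{\gamma,K}$ — have to be used to push the argument through despite $F(\mathfrak{R}w,\cdot)$ being nonmonotone on the range where the solution actually lives. Everything else reduces to a routine fixed-point-plus-continuation scheme and the elementary algebra of the pointwise estimates above.
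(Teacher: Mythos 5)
Your overall architecture coincides with the paper's: the two pointwise inequalities you isolate are exactly the paper's \eqref{bah} and \eqref{r_F}--\eqref{contr}, the comparison machinery for completely positive kernels is what the paper imports from \cite[Theorem C.1, Remark B.6]{ee}, and uniqueness by linearizing the difference is Step III. However, two steps do not close as written.

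First, the existence/invariance step is circular. $F(w,\cdot)$ is defined only on $\mathbb{C}_-$ (if $\nu$ has unbounded support, $\int e^{(v-\Lambda w)z}\nu(\dd z)$ genuinely diverges once $\mathfrak{R}v>\Lambda\mathfrak{R}w$), and the Picard map $\psi\mapsto K\ast F(w,\psi)$ does \emph{not} preserve $\{\mathfrak{R}\psi\le0\}$: although $\mathfrak{R}F(w,v)\le F(\mathfrak{R}w,\mathfrak{R}v)$, the right-hand side can be strictly positive for $\mathfrak{R}v<0$ (e.g. $w=1$, $b\ll0$). So the contraction can be run neither on $C([0,\delta];\mathbb{C})$ nor on the cone of $\mathbb{C}_-$-valued functions; and your later derivation of $\mathfrak{R}\psi_w\le\psi_{\mathfrak{R}w}$ from the comparison lemma already presupposes $\psi_w(s)\in\mathbb{C}_-$ and $\mathfrak{R}\psi_w(s)\le0$, since otherwise neither $\mathfrak{R}F(w,\psi_w(s))\le F(\mathfrak{R}w,\mathfrak{R}\psi_w(s))$ nor even the right-hand side is defined. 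The paper resolves this by first replacing $F(w,\cdot)$ with the globally defined $\widetilde F_w(v)=F(w,-\mathfrak{R}v^-+i\,\mathfrak{Im}v)+C_w\mathfrak{R}v^+$, for which $\mathfrak{R}\widetilde F_w(v)\le\zeta\cdot\mathfrak{R}v$ holds on all of $\mathbb{C}$, obtaining a noncontinuable solution of the modified equation, and only then proving $\mathfrak{R}\psi_w\le0$; you need this (or an equivalent) extension step. (Incidentally, in your comparison lemma the decomposition must be $H=-\gamma\,\mathrm{id}+\widetilde H$ with $\widetilde H$ nondecreasing and the resolvent $E_{-\gamma,K}$, not $H=\gamma\,\mathrm{id}+\widetilde H$.)

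Second, the non-explosion argument for $\mathfrak{Im}w\neq0$ is not a closed Gronwall loop. At the point where you want to control $\mathfrak{Im}\psi_w$, the quantity $\mathfrak{R}\psi_w$ is bounded only from \emph{above} (by $\psi_{\mathfrak{R}w}\le0$); the "linear" term $c\,\mathfrak{R}\psi_w\,\mathfrak{Im}\psi_w$ therefore has an unbounded coefficient, and $\mathfrak{Im}F(w,\psi_w)$ also contains the inhomogeneity $\rho\sqrt{c}\,\mathfrak{Im}w\,\mathfrak{R}\psi_w$, which is a priori unbounded. Conversely, the lower bound on $\mathfrak{R}\psi_w$ requires control of $|\mathfrak{Im}\psi_w|^2$ through $-\tfrac12\left(|\mathfrak{Im}w|+\sqrt{c}\,|\mathfrak{Im}\psi_w|\right)^2$ and the cosine term, so the two estimates feed into each other. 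The paper's Step II breaks this circle by bounding $\mathfrak{Im}\psi_w$ \emph{first}, using only $\mathfrak{R}\psi_w\le0$: the shifted quantities $\mathfrak{Im}\psi_w+\frac{\rho^+}{\sqrt{c}}\mathfrak{Im}w$ and $\mathfrak{Im}\psi_w-\widetilde{\Lambda}\,\mathfrak{Im}w$ are chosen so that every occurrence of $\mathfrak{R}\psi_w$ (and of the analogous jump terms) carries a coefficient of a definite sign — e.g. $c\,\mathfrak{R}\psi_w\bigl(\mathfrak{Im}\psi_w+\frac{\rho}{\sqrt{c}}\mathfrak{Im}w\bigr)$ — so that a one-sided linear Volterra comparison applies despite the unbounded coefficient; only afterwards is the lower bound $l\le\mathfrak{R}\psi_w$ derived. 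This sign bookkeeping is the substantive content of the globality step and is missing from your sketch. Your treatment of the real case ($\psi_{\mathfrak{R}w}\ge-C\|K\|_{L^1}$ from the uniform lower bound on $F(\mathfrak{R}w,\cdot)$ over $\mathbb{R}_-$) is fine.
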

	\noindent We also need the next preparatory lemma, which can be proven similarly to\,\cite[Lemma $6.1$]{edu}.
	\begin{lemma}\label{prep}
		Let $f_1,\,f_2,\,f_3\colon[0,T]\to\mathbb{R}$ be bounded measurable functions such that $f_3\le0$ in $[0,T]$. Then, the Dol\'eans-Dade exponential
		\[
		\mathcal{E}\left(\int_{0}^{t}f_1\left(s\right)\sigma_s\,\dd W_{1,s}+\int_{0}^{t}f_2\left(s\right)\sigma_s\,\dd W_{2,s}+ \int_{0}^{t}\int_{\mathbb{R}_{+}}\left(e^{f_3\left(s\right) z}-1\right)\tilde{\mu}\left(\dd s,\dd z\right)\right),\quad t\in\left[0,T\right]
		\]
		is a martingale.
	\end{lemma}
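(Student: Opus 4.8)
The plan is to prove that the candidate process is a true martingale by controlling its exponential moments, following the strategy of \cite[Lemma $6.1$]{edu} adapted to the present jump setting. Write $M=(M_t)_{t\in[0,T]}$ for the stochastic exponential in the statement. Since $f_1,f_2,f_3$ are bounded and $f_3\le 0$, the process $M$ is a nonnegative local martingale, hence a supermartingale, so it suffices to show $\mathbb{E}[M_T]=1$, or equivalently that the family $(M_\tau)_{\tau}$ over stopping times $\tau\le T$ is uniformly integrable. The natural route is a Novikov-type / linear-growth argument: define the stopping times $\tau_n=\inf\{t\ge0: \int_0^t\sigma_s^2\,\dd s\ge n\}\wedge T$ and show first that the stopped processes $M^{\tau_n}$ are martingales, then pass to the limit using that $\mathbb{E}\big[\int_0^T\sigma_s^2\,\dd s\big]<\infty$, which holds because $\sigma^2$ has trajectories in $L^2_{\mathrm{loc}}$ and, via \eqref{forward} together with Hypothesis \ref{c1} and integrability of $\nu$, $s\mapsto\mathbb{E}[\sigma_s^2]$ is locally bounded.

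The key computation is the following. Let $\Phi_t=\int_0^t f_1(s)\sigma_s\,\dd W_{1,s}+\int_0^t f_2(s)\sigma_s\,\dd W_{2,s}+\int_0^t\int_{\mathbb{R}_+}(e^{f_3(s)z}-1)\tilde\mu(\dd s,\dd z)$, so that $M=\mathcal{E}(\Phi)$. A direct application of It\^o's formula to $M$ against a smooth cutoff, combined with the explicit form of the compensator $\nu(\dd z)\sigma_t^2\,\dd t$, shows that the drift of $\log M_t + \tfrac12\langle \Phi^c\rangle_t$ involves only terms of the form $C(s)\,\sigma_s^2$ with $C$ bounded on $[0,T]$; here one uses that $f_3\le 0$ forces $e^{f_3(s)z}-1\in[-1,0]$, so that both $|e^{f_3(s)z}-1|$ and $|e^{f_3(s)z}-1|^2$ are dominated by suitable multiples of $\min(z,z^2)$, which is $\nu$-integrable. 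Consequently, along the localizing sequence $\tau_n$ one gets a uniform bound $\mathbb{E}[\mathcal{E}(2\Phi)_{\tau_n}]\le e^{C n}$, and more importantly the genuinely useful estimate $\sup_n\mathbb{E}[M_{\tau_n}\log^+ M_{\tau_n}]<\infty$ or a de la Vall\'ee-Poussin bound, obtained by comparing $M$ with the exponential of a process whose quadratic variation is controlled by $\mathrm{const}\cdot\int_0^T\sigma_s^2\,\dd s$.

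Concretely I would carry out the steps in this order: (1) record that $M$ is a nonnegative local martingale with localizing sequence $\tau_n$, hence $\mathbb{E}[M_{\tau_n}]=1$ for all $n$; (2) establish $\mathbb{E}\big[\int_0^T\sigma_s^2\,\dd s\big]<\infty$ from \eqref{forward}, Hypothesis \ref{c1}, and $\int_{\mathbb{R}_+}|z|^2\nu(\dd z)<\infty$; (3) show that $M_{\tau_n}\to M_T$ a.s.\ as $n\to\infty$, which follows because $\tau_n\uparrow T$ a.s.\ on the event $\{\int_0^T\sigma_s^2\,\dd s<\infty\}$, an event of full probability by step (2); (4) prove uniform integrability of $(M_{\tau_n})_n$ by the exponential-moment / de la Vall\'ee-Poussin estimate sketched above, exploiting the sign condition $f_3\le 0$ and boundedness of $f_1,f_2,f_3$; (5) conclude $\mathbb{E}[M_T]=\lim_n\mathbb{E}[M_{\tau_n}]=1$, so the supermartingale $M$ is a true martingale on $[0,T]$.

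The main obstacle is step (4): controlling the jump part of the stochastic exponential. For continuous martingales a clean Novikov condition closes the argument immediately, but here one must handle $\int_0^t\int_{\mathbb{R}_+}(e^{f_3(s)z}-1)\tilde\mu(\dd s,\dd z)$ and the fact that $\mathcal{E}(\Phi)$ for a jump martingale is $\exp\{\Phi^c-\tfrac12\langle\Phi^c\rangle\}\prod(1+\Delta\Phi)e^{-\Delta\Phi}$; the product term is where $f_3\le 0$ is essential, since it guarantees $1+\Delta\Phi=e^{f_3(s)z}\in(0,1]$ and prevents the exponential from blowing up, reducing the estimate to the compensator bound. This is exactly the point handled in \cite[Lemma $6.1$]{edu} and \cite[Lemma $1$]{primo}, and the present proof amounts to checking that their hypotheses are met in our parametrization; once the compensator is shown to be dominated by $\mathrm{const}\cdot\sigma_s^2\,\dd s$ with a constant depending only on $\|f_i\|_\infty$ and $\int\min(z,z^2)\nu(\dd z)$, the uniform-integrability bound and hence the martingale property follow.
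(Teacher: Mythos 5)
There is a genuine gap at your step (4), and it is exactly the point that the paper's intended argument (the one of \cite[Lemma $6.1$]{edu}, which the text cites) is designed to circumvent. Uniform integrability of $(M_{\tau_n})_n$ cannot be extracted from $\mathbb{E}\big[\int_0^T\sigma_s^2\,\dd s\big]<\infty$ under the original measure. A Novikov/Kazamaki or L\'epingle--M\'emin criterion would require \emph{exponential} moments of $\int_0^T\sigma_s^2\,\dd s$, which are not available for this Hawkes-type variance process for general parameters (this is why even the purely continuous case in the rough Heston literature does not use Novikov). Your proposed $L\log L$ bound does not help either: writing $\mathbb{E}[M_{\tau_n}\log^+ M_{\tau_n}]=\mathbb{E}^{\mathbb{Q}_n}[\log^+ M_{\tau_n}]$ with $\dd\mathbb{Q}_n=M_{\tau_n}\,\dd\mathbb{Q}$, and using $\log M\le \Phi$ together with the Girsanov drift of $\Phi$ under $\mathbb{Q}_n$, every estimate you can make reduces to controlling $\mathbb{E}^{\mathbb{Q}_n}\big[\int_0^T\sigma_s^2\,\dd s\big]$ \emph{uniformly in $n$} --- i.e.\ the first moment of the variance under the \emph{changed} measures, not under $\mathbb{Q}$. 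That is precisely the ingredient your sketch never produces, so steps (2) and (4) as written do not connect.

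The missing idea is the change-of-measure-plus-Volterra-structure argument: after stopping at $\tau_n$ and applying Girsanov (to $W_1,W_2$ and to the jump measure, whose $\mathbb{Q}_n$-compensator becomes $e^{f_3(s)z}\nu(\dd z)\sigma_s^2\,\dd s$), the spot variance satisfies, up to $\tau_n$, the same convolution equation \eqref{variance} with $b$ replaced by the bounded function $b+\sqrt{c}\,f_2(s)+\int_{\mathbb{R}_+}z(e^{f_3(s)z}-1)\nu(\dd z)$; here $f_3\le0$ and $\int|z|^2\nu(\dd z)<\infty$ guarantee this perturbation is bounded by a constant depending only on $\|f_2\|_\infty,\|f_3\|_\infty$. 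A deterministic resolvent/Gronwall bound for linear Volterra equations then yields $\sup_n\mathbb{E}^{\mathbb{Q}_n}\big[\int_0^T\sigma_s^2\,\dd s\big]\le c_T<\infty$ with $c_T$ independent of $n$. With this in hand one does not even need uniform integrability: $\mathbb{E}[M_T]\ge\mathbb{E}[M_{T\wedge\tau_n}1_{\{\tau_n\ge T\}}]=\mathbb{Q}_n(\tau_n\ge T)\ge 1-c_T/n\to1$, and since $M$ is a nonnegative supermartingale this forces $\mathbb{E}[M_T]=1$. Your steps (1), (3) and the observation about $1+\Delta\Phi=e^{f_3(s)z}\in(0,1]$ are correct and are part of this argument, but without the uniform $\mathbb{Q}_n$-moment bound the proof does not close.
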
 
	We are now ready to state the main result of this section. We introduce for every $\varepsilon\in\mathbb{R}$ the shift operator $\Delta_{\varepsilon}$, which, given $I\subset \mathbb{R}$ and a function $f\colon I\to\mathbb{C}$, assigns the function $\Delta_{\varepsilon}f\colon I-{\varepsilon}\to \mathbb{C}$ defined by $\Delta_{\varepsilon}f(t)=f(t+{\varepsilon}),\,t\in I-{\varepsilon}$.
	\begin{theorem}\label{t6}
		Suppose that $K$ satisfies Hypothesis \ref{c1}  and that the resolvent of the first kind $L$ is the sum of a locally integrable function and a point mass at $0$. Moreover, suppose that the total variation bound 
		\begin{equation*}
			\sup_{{\varepsilon}\in\left(0,\widebar{T}\right]}\norm{\Delta_{\varepsilon}K \ast L}_{\text{TV}\left(\left[0,\widebar{T}\right]\right)}<\infty
		\end{equation*} 
		holds for all $\widebar{T}>0$. Then, for every $w\in\mathbb{C}$ such that $\mathfrak{R}w\in[0,1],$ 
		\begin{equation}\label{final}
			\mathbb{E}\left[\exp\left\{wX_T\right\}\Big|\mathcal{F}_t\right]
			=
			\exp\left\{\widetilde{V}_t\left(w,T\right)\right\}
			,\quad \mathbb{Q}-\text{a.s., }t\in\left[0,T\right],
		\end{equation}
		where $\widetilde{V}_t(w,T)=wX_t+\int_{t}^{T}F(w,\psi_w(T-s))g_t(s)\dd s,\,t\in[0,T]$.
	\end{theorem}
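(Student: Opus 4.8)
The plan is to prove that the process $M=(M_t)_{t\in[0,T]}$ defined by $M_t:=\exp\{\widetilde V_t(w,T)\}$ is a true $\mathbb{Q}$-martingale on $[0,T]$; since $\widetilde V_T(w,T)=wX_T$ and $\mathcal{F}_0$ is trivial (so $M_0$ is deterministic), \eqref{final} then follows immediately by taking $\mathbb{E}[\,\cdot\mid\mathcal{F}_t]$. The argument splits into \emph{(a)} showing that $M$ is a local martingale and \emph{(b)} upgrading it to a true martingale.

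\emph{(a) Local martingale property.} I would first recast the forward integral in $\widetilde V_t(w,T)$ as an It\^o semimartingale in $t$. Using \eqref{adjusted}, the fact that $g_t(s)=\sigma^2_s$ for $s\le t$ (because $K$ is supported on $\mathbb{R}_+$), the stochastic Fubini theorem, and the substitution $s\mapsto T-s$ in the Riccati--Volterra equation \eqref{RV_log} written at $T-r$, one obtains
\[
\int_0^T F\!\left(w,\psi_w(T-s)\right)\Big(\int_0^t K(s-r)\,\dd Z_r\Big)\dd s=\int_0^t\psi_w(T-r)\,\dd Z_r,
\]
whence $\widetilde V_t(w,T)=wX_t+\widetilde V_0(w,T)+\int_0^t\psi_w(T-r)\,\dd Z_r-\int_0^t F\!\left(w,\psi_w(T-r)\right)\sigma^2_r\,\dd r$. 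Inserting the dynamics \eqref{spot} of $X$ and those of $Z$, and applying It\^o's formula to $M=e^{\widetilde V}$, the finite-variation part of $M$ becomes $M_{t-}\,\sigma^2_t$ times
\[
-F\!\left(w,\psi_w(T-t)\right)+\tfrac12(w^2-w)+\big(b+\rho\sqrt{c}\,w\big)\psi_w(T-t)+\tfrac{c}{2}\psi_w(T-t)^2+\int_{\mathbb{R}_+}\!\!\big[e^{(\psi_w(T-t)-\Lambda w)z}-w(e^{-\Lambda z}-1)-1-\psi_w(T-t)z\big]\nu(\dd z)
\]
once the Brownian quadratic-variation correction and the jump compensator are collected; this vanishes identically, being $-F+F$ by the very definition \eqref{F_log} of $F$. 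Hence $M=M_0\,\mathcal{E}(L^w)$, where $L^w$ is the local martingale with Brownian coefficients $w\sqrt{1-\rho^2}\,\sigma_t$ and $\big(w\rho+\sqrt{c}\,\psi_w(T-t)\big)\sigma_t$ and jump coefficient $e^{(\psi_w(T-t)-\Lambda w)z}-1$; this is well posed since $\mathfrak{R}\big(\psi_w(T-t)-\Lambda w\big)\le 0$ by Theorem \ref{gl_log}\ref{1t1} (the solution is $\mathbb{C}_-$-valued) together with $\Lambda,\mathfrak{R}w\ge 0$.

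\emph{(b) True martingale property.} When $w=\mathfrak{R}w\in[0,1]$ is real, Theorem \ref{gl_log}\ref{1t1} gives that $\psi_w$ is continuous and $\mathbb{R}_-$-valued, so the coefficients above are real, bounded on $[0,T]$, and the jump exponent $\psi_w(T-\cdot)-\Lambda w$ is $\le0$; Lemma \ref{prep} then applies verbatim and $M=M^{(\mathfrak{R}w)}$ is a true martingale, which already proves \eqref{final} in this case. For general $w$ with $\mathfrak{R}w\in[0,1]$, $M$ is a priori only a local martingale, and the task is to dominate $|M_t|=\exp\{\mathfrak{R}\widetilde V_t(w,T)\}$ by an integrable quantity, uniformly in $t$. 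The route I would take is: \emph{(i)} the pointwise estimate $\mathfrak{R}F(w,v)\le F(\mathfrak{R}w,\mathfrak{R}v)$ for $\mathfrak{R}w\in[0,1]$, $\mathfrak{R}v\le 0$, which follows by completing the square in the Brownian/correlation terms $\tfrac12 w^2+\rho\sqrt{c}\,wv+\tfrac{c}{2}v^2$ and using $\cos\le 1$ in the jump integrand; \emph{(ii)} the comparison $\mathfrak{R}\psi_w\le\psi_{\mathfrak{R}w}\le0$ from Theorem \ref{gl_log}\ref{2t1}, combined with the nonnegativity of the forward variance — rewriting $\widetilde V_t$ through $\mathbb{E}[\sigma^2_s\mid\mathcal{F}_t]$ via \eqref{forward} and \eqref{RV_log} so that the comparison constants stay deterministic — which yields $|M_t|\le C\,M^{(\mathfrak{R}w)}_t$; \emph{(iii)} since $M^{(\mathfrak{R}w)}$ is a martingale on the compact interval $[0,T]$ and hence of class (D), the family $\{M_\tau:\tau\le T\text{ a stopping time}\}$ is uniformly integrable, so the local martingale $M$ is a true martingale, and \eqref{final} follows by conditioning $M_T=e^{wX_T}$.

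Step \emph{(a)} is mostly bookkeeping, but justifying the stochastic-Fubini manipulation of the adjusted forward curve — exactly where the hypotheses that $L$ is the sum of a locally integrable function and a point mass at $0$ and that $\sup_{\varepsilon\in(0,\widebar{T}]}\norm{\Delta_\varepsilon K\ast L}_{\mathrm{TV}([0,\widebar{T}])}<\infty$ enter — is the technical core of that step. The genuine obstacle is step \emph{(b)} for complex $w$: producing a domination of $|M_t|$ that is simultaneously integrable and uniform in $t$. Here Theorem \ref{gl_log}\ref{2t1} and Lemma \ref{prep} are indispensable, and passing through the forward variance $\mathbb{E}[\sigma^2\mid\mathcal{F}_t]$ is what prevents the naive comparison with $M^{(\mathfrak{R}w)}$ from picking up random prefactors.
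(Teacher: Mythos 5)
Your step \emph{(a)} is correct and is essentially the paper's computation run in the opposite direction (the paper first defines the semimartingale $V(w,T)$ by its dynamics \eqref{y1} and then identifies it with $\widetilde V(w,T)$ via stochastic Fubini; you start from $\widetilde V$ and derive the dynamics). The overall strategy of step \emph{(b)} — dominate $|M|$ by the real-parameter martingale $M^{(\mathfrak{R}w)}$ and conclude by class (D) — is also the paper's. The gap is in your substep \emph{(b)(ii)}. After rewriting $\widetilde V_t(w,T)$ through the forward variance one gets, using \eqref{RV_log},
\begin{equation*}
\widetilde V_t(w,T)=wX_t+\int_t^T\Bigl[F\bigl(w,\psi_w(T-s)\bigr)-b\,\psi_w(T-s)\Bigr]\,\mathbb{E}\bigl[\sigma^2_s\,\big|\,\mathcal{F}_t\bigr]\,\dd s,
\end{equation*}
so, since $\mathbb{E}[\sigma^2_s\mid\mathcal{F}_t]\ge0$, the domination $|M_t|\le C\,M_t^{(\mathfrak{R}w)}$ would require the \emph{pointwise} inequality $\mathfrak{R}\bigl[F(w,\psi_w(\tau))-b\psi_w(\tau)\bigr]\le F(\mathfrak{R}w,\psi_{\mathfrak{R}w}(\tau))-b\psi_{\mathfrak{R}w}(\tau)$. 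Your two ingredients give $\mathfrak{R}F(w,v)\le F(\mathfrak{R}w,\mathfrak{R}v)$ and $\mathfrak{R}\psi_w\le\psi_{\mathfrak{R}w}\le0$, but to chain them you would need $v\mapsto F(\mathfrak{R}w,v)-bv$ to be nondecreasing on $\mathbb{R}_-$, and it is not: its derivative is $\rho\sqrt{c}\,\mathfrak{R}w+cv+\int_{\mathbb{R}_+}z(e^{(v-\Lambda\mathfrak{R}w)z}-1)\nu(\dd z)$, whose last two terms are nonpositive on $\mathbb{R}_-$ (already in the pure-diffusion case with $\rho=0$ the map is strictly decreasing, and then $|\mathfrak{R}\psi_w|\ge|\psi_{\mathfrak{R}w}|$ pushes the quadratic term the wrong way). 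So the comparison between $\mathfrak{R}\psi_w$ and $\psi_{\mathfrak{R}w}$ cannot be transported through $F$ evaluated along the solutions, and the domination does not follow.

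This is exactly why the paper does \emph{not} compare the integrands against a nonnegative forward curve. Instead it invokes \cite[Corollary $8$]{primo} to rewrite $V_t(w,T)$ as the affine functional \eqref{past} of the \emph{past} trajectory $(\sigma^2_s-g_0(s))_{s\le t}$, with a deterministic additive term $\int_0^{T-t}F(w,\psi_w(s))g_0(T-s)\dd s$ and random terms whose coefficients are $\psi_w(T-t)L(\{0\})$ and the Stieltjes measure $\dd\Pi_{T-t}$; the comparison of \cite[Theorem $11$]{primo} is then carried out on these coefficients, using \eqref{comp_log} together with the monotonicity of the first-kind resolvent $L$, which sidesteps the non-monotonicity of $F(\mathfrak{R}w,\cdot)$. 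Relatedly, you attribute the hypotheses that $L$ is a locally integrable function plus a point mass at $0$ and the total variation bound to the stochastic Fubini step in \emph{(a)}; in the paper these hypotheses are not needed there (that step only uses \cite[Theorem $5$]{primo}), but precisely in step \emph{(b)}, to legitimate the representation \eqref{past}. To repair your proof you would need to replace \emph{(b)(ii)} by this past-trajectory representation (or an equivalent device), keeping your \emph{(b)(iii)} unchanged.
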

	\begin{proof}
		
		Let $w\in\mathbb{C}$  be such that $\mathfrak{R}w\in[0,1]$. Define the c\`adl\`ag, adapted, $\mathbb{C}-$valued semimartingale  $(V_t(w,T))_{t\in[0,T]}$ by
		\begin{align}
			\begin{split}
				V_t\left(w,T\right)&=V_0\left(w,T\right)+wX_t+\int_{0}^{t}\psi_w\left(T-s\right) \dd \widetilde{Z}_s 
				\\&\quad 
				-\int_{0}^t\Bigg(\frac{1}{2}\left(w^2-w\right)+\rho\sqrt{c}\,w\,\psi_w\left(T-s\right) 
				+\frac{c}{2}\psi_w\left(T-s\right)^2 \\
				&\quad 
				+\int_{\mathbb{R}_+}\!\left(e^{\left(-\Lambda w+\psi_w\left(T-s\right)\right)z}-w\left(e^{-\Lambda z}-1\right)-1-\psi_w\left(T-s\right)z\right)\nu\left(\dd z\right)\Bigg)\sigma^2_s\,\dd s,\label{y1}
			\end{split}
			\\
			\begin{split}
				V_0\left(w,T\right)&=\int_{0}^{T}F\left(w,\psi_w\left(T-s\right)\right) g_0\left(s\right)\dd s.
				\label{y2_2}
			\end{split}
		\end{align}
		The same arguments as in the proof of  \cite[Theorem $5$]{primo}, which essentially rely on the stochastic Fubini's theorem (see, e.g., \cite[Theorem $65$, Chapter \upperRomannumeral{4}]{protter}),  allow us  to prove that 
		\begin{equation}\label{thj4.3}
			V_t\left(w,T\right)=\widetilde{V}_t\left(w,T\right),\quad \mathbb{Q}-\text{a.s.},\,  t\in\left[0,T\right].
		\end{equation}
		We now define $H(w,T)=(H_t(w,T))_{t\in[0,T]}=(\exp\{V_t(w,T)\})_{t\in[0,T]}$. By  {It\^o's formula} and the dynamics in \eqref{spot} and  \eqref{y1} we have, omitting $(w,T)$ for sake of readability, 
		\begin{align*}
			\frac{\dd H_{t}}{H_{t-}}&=\Bigg[w\,\dd X_t-\Bigg(\frac{c}{2}\,\psi_w\left(T-t\right)^2
			+	\int_{\mathbb{R}_+}\left(e^{\left(-\Lambda w+\psi_w\left(T-t\right)\right)z}-1 -w\left(e^{-\Lambda z}-1\right)-\psi_w\left(T-t\right)z\right)\nu\left(\dd z\right)
			\\&\quad+\frac{1}{2}\left(w^2-w\right)+\rho\sqrt{c}\,w\,\psi_w\left(T-t\right)
			\Bigg)\sigma^2_t\,\dd t 
			+\psi_w\left(T-t\right) \dd \widetilde{Z}_t\Bigg]
			+\frac{1}{2}\left({c}\,\psi_w\left(T-t\right)^2+w^2\right)\sigma^2_t\,\dd t
			\\&\quad 
			+\rho\sqrt{c}\,w\,\psi_w\left(T-t\right)\sigma^2_t\,\dd t
			+\int_{\mathbb{R}_+}\left(e^{\left(-\Lambda w+\psi_w\left(T-t\right)\right) z}-1-\left(-\Lambda w+\psi_w\left(T-t\right)\right)z\right)\mu\left(\dd t,\dd z\right)\\
			&=\left[\sigma_t\left(w\sqrt{1-\rho^2}\,\dd W_{1,t}+\left(w\rho+\sqrt{c}\,\psi_w\left(T-t\right)\right) \dd W_{2,t}\right)
			+\int_{\mathbb{R}_+}\left(e^{\left(-\Lambda w+\psi_w\left(T-t\right)\right) z}-1\right)\tilde{\mu}\left(\dd t,\dd z\right)\right], 
		\end{align*}
		with $H_0=\exp(V_0)$. We define $N(w,T)=(N_t(w,T))_{t\in[0,T]}$ by $N_0(w,T)=0$ and 
		\begin{equation*}
			\begin{array}{rcl}
				\displaystyle \dd N_t\left(w,T\right)&=& \displaystyle \sigma_t\left(w\sqrt{1-\rho^2}\,\dd W_{1,t}+\left(w\rho+\sqrt{c}\,\psi_w\left(T-t\right)\right) \dd W_{2,t}\right) \\
				&& \displaystyle+\int_{\mathbb{R}_+}\left(e^{\left(-\Lambda w+\psi_w\left(T-t\right)\right)z}-1\right)\tilde{\mu}\left(\dd t,\dd z\right).
			\end{array}
		\end{equation*}
		Then $N(w,T)$ is a local martingale and the previous computations show that, omitting again $(w,T)$, $H=\exp\{V_0\}\mathcal{E}(N)$ up to evanescence, where $\mathcal{E}$ denotes the Dol\'eans-Dade exponential. Therefore $H(w,T)$ is a local martingale. If it is indeed a true martingale, then \eqref{final} directly follows from \eqref{thj4.3} noting also that $\widetilde{V}_T(w,T)=wX_T$.
		
		In order to argue the martingale property of $H(w,T)$, first we observe that by Lemma \ref{prep} the real-valued process $H(\mathfrak{R}w,T)=(H_t(\mathfrak{R}w,T))_{t\in[0,T]}=(\exp\{V_t(\mathfrak{R}w,T)\})_{t\in[0,T]}$ is a true martingale. Secondly, we invoke \cite[Corollary $8$]{primo} to obtain the following alternative expression for $V(w,T)$ (an analogous one holds for $V(\mathfrak{R}w, T)$)
		\begin{multline}\label{past}
			V_t\left(w,T\right)=
			wX_t+\int_{0}^{T-t}F\left(w,\psi_w\left(s\right)\right)g_0\left(T-s\right)\dd s
			+\psi_w\left(T-t\right)L\left(\left\{0\right\}\right)\left(\sigma^2-g_0\right)\left(t\right)
			\\+\left(\dd {\Pi}_{T-t}\ast\left(\sigma^2-g_0\right)\right)\left(t\right)
			,\quad \text{for a.e. }t\in\left(0,T\right),\,\mathbb{Q}-\text{a.s.},
		\end{multline}
		where for every ${\varepsilon}>0$, $\Pi_{\varepsilon}(t)=\int_{0}^{\varepsilon}F(w,\psi_w(s)) (\Delta_{{\varepsilon}-s}K\ast L)(t)\dd s,\,t\ge0,$ is a locally absolutely continuous function. The application of this result is legitimate because the procedure carried out in \cite{primo} to infer \eqref{past} only depends on \eqref{variance}, \eqref{RV_log} and the boundedness on compact intervals of $\mathbb{R}_+$ of $F(w,\psi_w(\cdot))$, and does not rely on the expression of $F$. A similar argument together with \eqref{comp_log} allows us to parallel the comparison method in the proof of  \cite[Theorem $11$]{primo} to conclude that there is a constant $C>0$ such that
		\[
		\left|H_t\left(w,T\right)\right|=\left|\exp\left\{V_t\left(w,T\right)\right\}\right|=\exp\left\{\mathfrak{R}V_t\left(w,T\right)\right\}\le C\exp\left\{V_t\left(\mathfrak{R}w,T\right)\right\}=CH_t\left(\mathfrak{R}w,T\right),
		\]
		for $t\in[0,T]$, $\mathbb{Q}-\text{a.s.}$ At this point it is sufficient to invoke \cite[Lemma $1.4$]{J} to claim that $H(w,T)$ is a true martingale, hence the proof is complete.
	\end{proof}
	From the previous theorem we deduce the martingale property of our price process $S$ with a direct approach (it can also be obtained by Lemma \ref{prep}).
	\begin{corollary}\label{cor_mar}
		Under the hypotheses of Theorem \ref{t6}, the price process $S=(S_t)_{t\in[0,T]}$ is a martingale.
	\end{corollary}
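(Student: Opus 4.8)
The plan is to invoke Theorem~\ref{t6} at the single point $w=1$, which is legitimate since $\mathfrak{R}w=1\in[0,1]$, and to check that the right-hand side of \eqref{final} collapses to $e^{X_t}$. The crucial observation is that the Riccati--Volterra equation \eqref{RV_log} degenerates when $w=1$: evaluating \eqref{F_log} at $(u,v)=(1,0)$, the $\frac{1}{2}(u^2-u)$ term, the $(b+\rho\sqrt{c}\,u)v$ term and the $\frac{c}{2}v^2$ term all vanish, while the integrand reduces to $e^{-\Lambda z}-(e^{-\Lambda z}-1)-1\equiv 0$; hence $F(1,0)=0$. Consequently the constant function $0$ is an $\mathbb{R}_-$-valued continuous solution of \eqref{RV_log}, and by the uniqueness statement in Theorem~\ref{gl_log}\,\ref{1t1} we conclude $\psi_1\equiv 0$.

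Substituting $\psi_1\equiv 0$ into the formula for $\widetilde{V}_t(w,T)$ in Theorem~\ref{t6} gives $\widetilde{V}_t(1,T)=X_t+\int_t^T F(1,0)\,g_t(s)\,\dd s=X_t$, so \eqref{final} reads $\mathbb{E}[e^{X_T}\mid\mathcal{F}_t]=e^{X_t}$, that is $\mathbb{E}[S_T\mid\mathcal{F}_t]=S_t$ for all $t\in[0,T]$. Repeating the computation with $T$ replaced by an arbitrary $t_2\in[0,T]$ yields $\mathbb{E}[S_{t_2}\mid\mathcal{F}_{t_1}]=S_{t_1}$ for $0\le t_1\le t_2\le T$, and taking $t_1=0$ shows $\mathbb{E}[S_{t_2}]=S_0<\infty$, so $S$ is integrable; together with adaptedness these are exactly the defining properties of a martingale. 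Equivalently, one can argue more softly: from Section~\ref{s_model} the nonnegative process $S=S_0\mathcal{E}(N)$ is a local martingale, hence a supermartingale by Fatou, and the constancy $\mathbb{E}[S_t]=S_0$ upgrades it to a true martingale.

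I would also record the alternative route mentioned parenthetically in the statement, which bypasses Theorem~\ref{t6}. From the It\^o computation in Section~\ref{s_model}, $\dd N_t=\sigma_t\big(\sqrt{1-\rho^2}\,\dd W_{1,t}+\rho\,\dd W_{2,t}\big)+\int_{\mathbb{R}_+}(e^{-\Lambda z}-1)\,\tilde{\mu}(\dd t,\dd z)$ is precisely of the form covered by Lemma~\ref{prep}, with the bounded measurable coefficients $f_1\equiv\sqrt{1-\rho^2}$, $f_2\equiv\rho$ and $f_3\equiv-\Lambda\le 0$; Lemma~\ref{prep} then directly gives that $\mathcal{E}(N)$, hence $S=S_0\mathcal{E}(N)$, is a martingale.

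I do not expect a genuine obstacle: the content is the algebraic identity $F(1,0)=0$, reflecting that $e^{X}$ is a constant multiple of the stochastic exponential of the martingale $N$. The only points that demand minor care are the verification that the integrand defining $F(1,0)$ vanishes identically (so that no growth or integrability issue arises) and that the zero function indeed belongs to the solution class of Theorem~\ref{gl_log}, so that its uniqueness clause applies; and, if one opts for the supermartingale argument, recalling the nonnegativity of $S$ and the local-martingale property already established in Section~\ref{s_model}.
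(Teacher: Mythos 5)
Your proposal is correct and follows essentially the same route as the paper: apply Theorem \ref{t6} with $w=1$, observe from \eqref{F_log}--\eqref{RV_log} that $F(1,0)=0$ so that $\psi_1\equiv 0$ by uniqueness, and conclude that the exponent in \eqref{final} vanishes. The paper's version combines this with the nonnegative-local-martingale/supermartingale argument and the single identity $\mathbb{E}[S_T]=S_0$, which you also record, so the two proofs coincide in substance.
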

	\begin{proof}
		The computations at the end of Section \ref{s_model} show that the stock price $S$ is a nonnegative local martingale, hence it is a supermartingale. In order for it to be a martingale, it is sufficient to show that $\mathbb{E}[S_T]=S_0$. By \eqref{final} in Theorem \ref{t6} with $w=1$ we have
		\[
		\mathbb{E}\left[S_T\right]=S_0\exp\left\{\int_{0}^{T}F\left(1,\psi_1\left(t-s\right)\right)g_0\left(s\right)\dd s\right\}.
		\]
		From \eqref{F_log}-\eqref{RV_log}, we observe that $\psi_1\equiv0$ in $\mathbb{R}_+$. This implies that $F(1,\psi_1(\cdot))=0$ in $\mathbb{R}_+$, which concludes the proof.
	\end{proof}
	Equation \eqref{final} in Theorem \ref{t6} gives a semi-explicit expression to compute the Fourier-Laplace transform $\Psi^{X_T}$ of $X_T$ in a suitable region of $\mathbb{C}$, namely
	\begin{equation}\label{LP_X}
		\Psi^{X_T}\left(w\right)=\exp\left\{\int_{0}^TF\left(w,\psi_w\left(T-s\right)\right)g_0\left(s\right)\,\dd s\right\},\quad w\in\mathbb{C} \text{ such that }\mathfrak{R}w\in\left[0,1\right].
	\end{equation}
	As shown in the following proposition, whose proof is in Appendix \ref{Lew}, we can use $\Psi^{X_T}$ to price options with maturity $T$ on the underlying asset $S$ via Fourier-inversion techniques.	
	\begin{proposition}\label{price_Lewis}
		Fix a log strike $k>0$. Then, under the hypotheses of Theorem \ref{t6}, the price $C_S(k,T)$ of a call option on the underlying asset $S$ with log strike $k$ and maturity $T$ is
		\begin{equation}\label{pr_SPX}
			C_S\left(k,T\right)=S_0-\frac{1}{\pi}\sqrt{S_0 e^k}\int_{\mathbb{R}_+}\mathfrak{R}\left[e^{i\lambda\left(\log(S_0)-k\right)}\Psi^{X_T}\left(\frac{1}{2}+i\lambda\right)\right]\frac1{\frac{1}{4}+\lambda^2}\,\dd \lambda,
		\end{equation}
		and the price $P_S(k,T)$ of a put option with the same log strike, maturity and underlying is 
		\begin{equation}\label{put_SPX}
			P_S\left(k,T\right)=e^k-\frac{1}{\pi}\sqrt{S_0e^k}\int_{\mathbb{R}_+}\mathfrak{R}\left[e^{i\lambda\left(\log(S_0)-k\right)}\Psi^{X_T}\left(\frac{1}{2}+i\lambda\right)\right]\frac1{\frac{1}{4}+\lambda^2}\,\dd \lambda.
		\end{equation}
	\end{proposition}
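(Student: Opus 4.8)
The plan is to use the classical Lewis/Carr--Madan Fourier-inversion approach, combined with the martingale property of $S$ established in Corollary \ref{cor_mar} and the explicit formula \eqref{LP_X} for $\Psi^{X_T}$. First I would express the call payoff as a Fourier integral. A call option with log strike $k$ has payoff $(S_T - e^k)^+$; writing $S_T = S_0 e^{X_T}$, the classical identity (valid for $a = 1/2$ in the strip of analyticity) is
\begin{equation*}
\left(S_0 e^{x} - e^{k}\right)^{+} = \frac{\sqrt{S_0 e^{k}}}{2\pi}\int_{\mathbb{R}} e^{\left(\frac{1}{2}+i\lambda\right)\left(x - k + \log S_0\right)} \frac{\dd\lambda}{\frac{1}{4}+\lambda^2} + \left(S_0 e^{x} - \text{(boundary term)}\right),
\end{equation*}
or, more cleanly, I would invoke the representation $(e^{y}-1)^+ = \frac{1}{2\pi}\int_{\mathbb{R}} e^{(1/2+i\lambda)y}\,\dd\lambda/(\lambda^2+1/4)$ up to the appropriate rescaling, and be careful about the half-line-versus-full-line form of the final integrand (the paper writes $\int_{\mathbb{R}_+}$ with a real part, which is equivalent to $\frac{1}{2}\int_{\mathbb{R}}$ by conjugation symmetry). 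The key point to check is that the damping exponent $a = 1/2$ lies in the admissible region: since $\mathfrak{R}(1/2 + i\lambda) = 1/2 \in [0,1]$, Theorem \ref{t6} applies and $\Psi^{X_T}(1/2 + i\lambda)$ is well-defined and finite.

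Next I would take expectations and justify interchanging $\mathbb{E}$ with the $\dd\lambda$-integral by Fubini's theorem. This requires an integrability bound: one needs $\mathbb{E}[e^{X_T/2}] < \infty$, which holds because $\mathfrak{R}(1/2) = 1/2 \in [0,1]$ so Theorem \ref{t6} gives $\mathbb{E}[e^{X_T/2}] = \Psi^{X_T}(1/2) < \infty$, together with the uniform-in-$\lambda$ bound $|\mathbb{E}[e^{(1/2+i\lambda)X_T}]| = |\Psi^{X_T}(1/2+i\lambda)| \le \mathbb{E}[e^{X_T/2}]$ (indeed $|e^{(1/2+i\lambda)X_T}| = e^{X_T/2}$), so that the integrand is dominated by $C/(\lambda^2 + 1/4)$, which is integrable. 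After the interchange, $\mathbb{E}[e^{(1/2+i\lambda)X_T}]$ is replaced by $\Psi^{X_T}(1/2+i\lambda)$ via \eqref{LP_X}, producing the stated formula for $C_S(k,T)$; the leading term $S_0$ comes from $\mathbb{E}[S_T] = S_0$ (Corollary \ref{cor_mar}), which is exactly the boundary/residue term separated off in the Fourier representation. For the put I would either redo the computation with the put payoff $(e^k - S_T)^+$, or simply apply put-call parity $P_S(k,T) - C_S(k,T) = e^k - \mathbb{E}[S_T] = e^k - S_0$, which immediately converts \eqref{pr_SPX} into \eqref{put_SPX}.

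The main obstacle I anticipate is purely bookkeeping rather than conceptual: getting the contour/strip argument exactly right so that the separated term is precisely $S_0$ (for the call) and $e^k$ (for the put), and matching the normalization constants ($1/\pi$, the factor $\sqrt{S_0 e^k}$, the phase $e^{i\lambda(\log S_0 - k)}$, and the $\mathbb{R}_+$ versus $\mathbb{R}$ domain) to the exact form in the statement. This is the standard subtlety in Lewis-type formulas — the choice $a=1/2$ is self-dual and makes the call and put formulas differ only in the non-integral term — and it is handled by carefully tracking where the pole of $1/(w^2 - w)$ (equivalently $1/(\lambda^2 + 1/4)$ after the shift) sits relative to the integration line. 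Everything else (analyticity of $\Psi^{X_T}$ on the strip $\mathfrak{R}w \in [0,1]$, finiteness of the relevant moments) is already furnished by Theorem \ref{t6} and Corollary \ref{cor_mar}.
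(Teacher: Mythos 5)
Your strategy is sound and is, up to a change of variables, the paper's own proof: the paper Fourier-transforms the damped covered-call value $m\mapsto\mathbb{E}[e^{X_T}\wedge e^{m}]e^{-m/2}$ in the log-strike $m$ and then inverts, whereas you propose to Fourier-represent the payoff in the log-price $x$ and then integrate against the law of $X_T$; both hinge on exactly the two facts you identify, namely $\mathbb{E}[e^{X_T/2}]=\Psi^{X_T}(1/2)<\infty$ (for Fubini and for inversion) and $\mathbb{E}[S_T]=S_0$ (Corollary \ref{cor_mar}, producing the non-integral term). One correction, though, which is more than bookkeeping: the ``clean'' identity you display is false as written. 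A direct computation (e.g.\ from $\int_{\mathbb{R}}e^{i\lambda y}(\lambda^2+1/4)^{-1}\dd\lambda=2\pi e^{-|y|/2}$) gives
\[
\frac{1}{2\pi}\int_{\mathbb{R}} \frac{e^{(1/2+i\lambda)y}}{\lambda^2+\frac14}\,\dd\lambda \;=\; e^{y/2}e^{-|y|/2}\;=\;e^{y}\wedge 1\;=\;e^{y}-\left(e^{y}-1\right)^{+},
\]
so the right-hand side represents the \emph{covered-call} payoff $e^y\wedge1$, not $(e^y-1)^+$. This sign/identification is precisely the crux: the naked call payoff grows like $e^x$, so its generalized Fourier transform in $x$ only converges for damping exponent $>1$, outside the strip $\mathfrak{R}w\in[0,1]$ where $\Psi^{X_T}$ is available — this is the obstruction spelled out in the paper's Remark after the proposition. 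Working with $S_T\wedge e^k=S_T-(S_T-e^k)^+$, whose transform converges on the whole strip $(0,1)$ and in particular at $1/2$, is what makes the argument go through; with that correction your plan yields \eqref{pr_SPX} (the $\mathbb{R}_+$ form following from conjugate symmetry, as you note) and \eqref{put_SPX} by put-call parity, exactly as in the paper.
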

	\begin{rem}
		The expression in \eqref{pr_SPX} coincides with \cite[Formula $(3.11)$]{lewis}, but we have to independently prove it (see Appendix \ref{Lew}). Indeed, in \cite{lewis} the author obtains \eqref{pr_SPX} starting from the inversion of the generalized Fourier transform of the payoff function $w(x)=(e^x-e^k)^+,\,x\in\mathbb{R},$ of a call option with log strike $k$ (here $x$ represents the log price). Namely, for $x\in\mathbb{R},$
		\[
		w\left(x\right)=-\frac{1}{2\pi}\int_{iz_i-\infty}^{iz_i+\infty}\frac{e^{k(iz+1)}}{z^2-iz}e^{-izx}\dd z,\quad z_i>1.
		\]
		If we were to follow the same approach here, then we would find a problem: we only have proved that $\Psi^{X_T}$ is defined for complex numbers with real part in  $[0,1]$. Therefore, in the previous expression, we would need $z_i\in[0,1]$, which is a contradiction. This setback cannot be immediately fixed by considering put options and then applying the put-call parity formula, because again the intersection between the complex strip ($z_i<0$), where the Fourier transform for the payoff function is defined, and the strip where $\Psi^{X_T}(-i\,\cdot)$ is available is empty. We refer to \cite[Section $4$]{Sc} for a survey of pricing based on Fourier-inversion techniques. 
	\end{rem}
	
	\section{The Fourier-Laplace transform of VIX$^2$}\label{sect-VIX2}
	In this section the underlying asset $S$ represents the SPX index. Then, according to the CBOE VIX white paper and  \cite{more}, the theoretical value of VIX=$(\text{VIX}_T)_{T\ge0}$ is 
	\begin{equation}\label{VIX}
		\text{VIX}_T=\sqrt{\left(-\frac{2}{{\delta}}\mathbb{E}\left[ X_{T+{\delta}}-{X_{T}}\big| \mathcal{F}_T\right]\right)^+}\times 100,\quad T\ge0.
	\end{equation}
	Here ${\delta}=\frac{1}{12}$ and represents $30$ days, the time to expiration of the log contracts involved in the computation of the index. Note that in \eqref{VIX}, the positive part has been inserted to guarantee the good definition of the random variable $\text{VIX}_T$ in the whole space $\Omega$, however the radicand is nonnegative $\mathbb{Q}-$a.s., as we are about to show.
	
	We first derive, in the following theorem, an expression for $\mathbb{E}[ X_{T+{\delta}}-{X_{T}}| \mathcal{F}_T]$, $T\ge 0$, in terms of the adjusted forward process at time $T$, $g_T(\cdot)$.
	
	\begin{theorem}\label{theo-affine-VIX2}
		The log contract satisfies an infinite dimensional affine relation with respect to the adjusted forward process.
		More specifically,
		\begin{equation}\label{log-contract}
			\mathbb{E}\left[ X_{T+{\delta}}-{X_{T}}\big| \mathcal{F}_T\right]=c_1\int_{T}^{T+{\delta}}\left(1+b\left(E_{b,K}\ast 1\right)\left(T+{\delta}-s\right)\right)g_T\left(s\right) \dd s\le 0,\quad \mathbb{Q}-\text{a.s.},
		\end{equation}
		where $c_1=-(\frac{1}{2}+\int_{\mathbb{R}_{+}}(e^{-\Lambda z}-1+\Lambda z)\nu(\dd z))$.
	\end{theorem}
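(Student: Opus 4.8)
The idea is to isolate the drift in the log-return increment $X_{T+\delta}-X_T$, discard its martingale part under the conditional expectation, and then rewrite the remaining forward-variance term through the adjusted forward curve $g_T(\cdot)$ by inverting a linear Volterra equation of the second kind; the sign statement will then follow at once from $\sigma^2\ge0$ and $c_1\le0$.

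\textbf{Step 1: reduction to the forward variance.} Integrating \eqref{spot} over $[T,T+\delta]$ gives
\[
X_{T+\delta}-X_T=c_1\int_T^{T+\delta}\sigma^2_s\,\dd s+\left(M_{T+\delta}-M_T\right),
\]
where $M_t:=\int_0^t\sigma_s\left(\sqrt{1-\rho^2}\,\dd W_{1,s}+\rho\,\dd W_{2,s}\right)-\Lambda\int_0^t\int_{\mathbb{R}_+}z\,\tilde{\mu}(\dd s,\dd z)$. I would first record that $\mathbb{E}\left[\int_0^{T'}\sigma^2_s\,\dd s\right]<\infty$ for every $T'>0$: since $\widetilde{Z}$ is a square-integrable martingale vanishing at $0$ (see \cite[Lemma $1$]{primo}), with $[\widetilde{Z}]_{T'}=c\int_0^{T'}\sigma^2_s\,\dd s+\int_0^{T'}\int_{\mathbb{R}_+}z^2\,\mu(\dd s,\dd z)$, passing to expectations and using the compensator $\nu(\dd z)\sigma^2_t\,\dd t$ gives $\left(c+\int_{\mathbb{R}_+}z^2\,\nu(\dd z)\right)\mathbb{E}\left[\int_0^{T'}\sigma^2_s\,\dd s\right]=\mathbb{E}\big[[\widetilde{Z}]_{T'}\big]<\infty$, with $c>0$. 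Hence both Brownian integrals and $\int_0^{\cdot}\int_{\mathbb{R}_+}z\,\tilde{\mu}$ are true (square-integrable) martingales on $[0,T+\delta]$, so $\mathbb{E}[M_{T+\delta}-M_T\mid\mathcal{F}_T]=0$. Applying the conditional version of Tonelli's theorem (valid since $\sigma^2\ge0$, and using the $\mathcal{F}_T\otimes\mathcal{B}(T,\infty)$-measurable version of $s\mapsto\mathbb{E}[\sigma^2_s\mid\mathcal{F}_T]$ discussed before \eqref{forward}), one obtains
\[
\mathbb{E}[X_{T+\delta}-X_T\mid\mathcal{F}_T]=c_1\int_T^{T+\delta}\mathbb{E}[\sigma^2_s\mid\mathcal{F}_T]\,\dd s.
\]

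\textbf{Step 2: the Volterra inversion.} Subtracting \eqref{adjusted} from \eqref{variance} yields, for a.e.\ $s>T$ and $\mathbb{Q}$-a.s.,
\[
\sigma^2_s-g_T(s)=\int_T^s K(s-r)\,\dd Z_r=b\int_T^s K(s-r)\sigma^2_r\,\dd r+\int_T^s K(s-r)\,\dd\widetilde{Z}_r.
\]
Taking $\mathbb{E}[\,\cdot\mid\mathcal{F}_T]$ and using that the stochastic-convolution term against $\widetilde{Z}$ is centered — the same fact that underlies \eqref{forward}, established in \cite{primo} — I get
\[
g_T(s)=\mathbb{E}[\sigma^2_s\mid\mathcal{F}_T]-b\int_T^s K(s-r)\,\mathbb{E}[\sigma^2_r\mid\mathcal{F}_T]\,\dd r,\qquad\text{a.e. }s>T,\ \mathbb{Q}-\text{a.s.}
\]
Viewing this as a second-kind Volterra equation $\gamma=\phi-bK\ast\phi$ with $\phi(u):=\mathbb{E}[\sigma^2_{T+u}\mid\mathcal{F}_T]$ and $\gamma(u):=g_T(T+u)$ — both in $L^1_{\text{loc}}(\mathbb{R}_+)$ $\mathbb{Q}$-a.s., as $\phi\in L^1_{\text{loc}}$ and $K\in L^1_{\text{loc}}$ — I would invert it using the resolvent of the second kind $R_{-bK}=-b\,E_{b,K}$ of $-bK$, exactly as in \eqref{canonical}, obtaining $\phi=\gamma-R_{-bK}\ast\gamma=\gamma+b\,E_{b,K}\ast\gamma$ (the case $b=0$ is trivial, the correction term vanishing so that $\phi=\gamma$). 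Shifting back,
\[
\mathbb{E}[\sigma^2_s\mid\mathcal{F}_T]=g_T(s)+b\int_T^s E_{b,K}(s-r)\,g_T(r)\,\dd r,\qquad\text{a.e. }s>T,\ \mathbb{Q}-\text{a.s.}
\]

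\textbf{Step 3: conclusion and sign.} Substituting this into Step 1 and interchanging the order of integration,
\[
\int_T^{T+\delta}\!\int_T^s E_{b,K}(s-r)\,g_T(r)\,\dd r\,\dd s=\int_T^{T+\delta}g_T(r)\left(E_{b,K}\ast 1\right)(T+\delta-r)\,\dd r,
\]
which is legitimate by Fubini since $\norm{E_{b,K}}_{L^1([0,\delta])}\int_T^{T+\delta}|g_T(r)|\,\dd r<\infty$; this yields exactly \eqref{log-contract}. For the inequality, $e^{-\Lambda z}-1+\Lambda z\ge0$ on $\mathbb{R}_+$ forces $c_1\le-\tfrac12<0$, while $\int_T^{T+\delta}\mathbb{E}[\sigma^2_s\mid\mathcal{F}_T]\,\dd s=\mathbb{E}\left[\int_T^{T+\delta}\sigma^2_s\,\dd s\mid\mathcal{F}_T\right]\ge0$ because $\sigma^2\ge0$ $\mathbb{Q}\otimes\dd t$-a.e.; hence the right-hand side of \eqref{log-contract} is $\le0$, $\mathbb{Q}$-a.s. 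I expect Step 2 to be the delicate point: rigorously justifying the vanishing of $\mathbb{E}[\int_T^s K(s-r)\,\dd\widetilde{Z}_r\mid\mathcal{F}_T]$ and the ensuing resolvent inversion for a kernel $K$ that may blow up at the origin requires leaning on the stochastic-convolution results and resolvent calculus of \cite{g,primo} rather than on elementary manipulations.
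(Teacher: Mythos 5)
Your proof is correct, and it takes a genuinely different route from the paper's for the central computation. Both arguments share Step 1: the reduction to $c_1\int_T^{T+\delta}\mathbb{E}[\sigma^2_s\,|\,\mathcal{F}_T]\,\dd s$ via the martingale property of the local-martingale part of $X$, which the paper also obtains from \cite[Lemma $1$]{primo}. From there the paper works globally from time $0$: it expands $\mathbb{E}[\sigma^2_s|\mathcal{F}_T]$ using the canonical form \eqref{canonical} with $f_0=g_0-R_{-bK}\ast g_0$, applies the stochastic Fubini theorem twice (once against $\dd\widetilde{Z}$ and once against $\dd Z$, exploiting that $E_{b,K}\ast 1$ solves $\chi=K\ast(1+b\chi)$), and finishes with the resolvent identity $-\left(\left(R_{-bK}\ast g_0\right)\ast 1\right)=b\left(\left(E_{b,K}\ast 1\right)\ast g_0\right)$ before reassembling $g_T$ from \eqref{variance} and \eqref{adjusted}. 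You instead subtract \eqref{adjusted} from \eqref{variance} to isolate the dynamics after time $T$, condition to obtain the pathwise second-kind Volterra relation $g_T=\phi-bK\ast\phi$ for $\phi(\cdot)=\mathbb{E}[\sigma^2_\cdot\,|\,\mathcal{F}_T]$, and invert it $\omega$-by-$\omega$ using $R_{-bK}=-b\,E_{b,K}$; only a deterministic Fubini swap is then needed, and your justification of it (local integrability of $g_T$ and of $E_{b,K}$) is adequate. Your route is arguably more transparent: the affine relation appears as the inversion of a single linear Volterra equation, and the two stochastic Fubini applications are replaced by the single fact, already underlying \eqref{forward}, that $\mathbb{E}\left[\int_T^s K(s-r)\,\dd\widetilde{Z}_r\,\big|\,\mathcal{F}_T\right]=0$ for a.e.\ $s$ outside one null set — exactly the point you flag as delicate, and which is covered by the measurable-version discussion preceding \eqref{forward} together with \cite[Lemma $1$]{primo}. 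The paper's route, by contrast, reuses the canonical decomposition and the resolvent calculus of \cite{g,primo} that it needs elsewhere anyway. Your sign argument and the preliminary integrability estimate $\mathbb{E}\left[\int_0^{T'}\sigma^2_s\,\dd s\right]<\infty$ are both fine.
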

	\begin{proof}
		By \eqref{spot} and the martingale property of the local martingale part of the expression (see  \cite[Lemma $1$]{primo}), we have
		\begin{equation*}
			\mathbb{E}\left[ X_{T+{\delta}}-{X_{T}}\big| \mathcal{F}_T\right]= c_1 \int_{T}^{T+{\delta}}
			\mathbb{E}\left[ \sigma^2_s\big| \mathcal{F}_T\right]\dd s,\quad \mathbb{Q}-\text{a.s.}
		\end{equation*}
		Recalling that $\sigma^2\ge 0,\,\mathbb{Q}\otimes\dd t-$a.e., we infer that $\mathbb{E}[\sigma^2_s| \mathcal{F}_T]\ge0$ for a.e. $s>T,\,\mathbb{Q}-$a.s., hence the value of a log contract at time $T$ is nonpositive $\mathbb{Q}-$a.s. 
		
		By \eqref{canonical}, \eqref{forward}, the stochastic Fubini's theorem -- whose application is guaranteed by \cite[Lemma $1$]{primo} -- and a suitable change of variables, we infer that, $\mathbb{Q}-$a.s.
		\begin{align}\label{cont1}
			c_1^{-1}&\mathbb{E}\left[ X_{T+{\delta}}-{X_{T}}\big| \mathcal{F}_T\right]=\int_{0}^{T+{\delta}}f_0\left(s\right)\dd s-\int_{0}^{T}\sigma^2_s\,\dd s+\int_0^{T+{\delta}}\left(\int_{0}^{T}1_{\left\{r\le s\right\}}E_{b,K}\left(s-r\right)\dd \widetilde{Z}_r\right)\dd s
			\notag\\&=
			\int_{0}^{T+{\delta}}f_0\left(s\right)\dd s-\int_{0}^T\sigma^2_s\,\dd s+
			\int_{0}^{T}\left(E_{b,K}\ast 1\right)\left(T+{\delta}-r\right)\dd \widetilde{Z}_r\notag\\&
			=
			\int_{0}^{T+{\delta}}f_0\left(s\right)\dd s-\int_{0}^{T}\left(1+b\left(E_{b,K}\ast 1\right)\right)\left(T+{\delta}-s\right)\sigma^2_s\,\dd s+\int_{0}^{T}\left(E_{b,K}\ast 1\right)\left(T+{\delta}-r\right)\dd {Z}_r,
		\end{align}
		where $f_0=g_0-R_{-bK}\ast g_0$. Notice that $E_{b,K}\ast 1$ is the unique, continuous (nonnegative) solution of the linear Volterra equation $\chi=K\ast (1+b\chi)$. Then, another application of stochastic Fubini's theorem yields, $\mathbb{Q}-$a.s.,
		\begin{multline*}
			\int_{0}^{T}\left(E_{b,K}\ast 1\right)\left(T+{\delta}-r\right)\dd Z_r
			=
			\int_{0}^{T}\left(\int_{r}^{T+{\delta}}K\left(s-r\right)\left(1+b\left(E_{b,K}\ast 1\right)\left(T+{\delta}-s\right)\right)\dd s\right)\dd Z_r
			\\=
			\int_{0}^{T+{\delta}}\left(1+b\left(E_{b,K}\ast 1\right)\left(T+{\delta}-s\right)\right)\left(\int_{0}^{T}1_{\left\{r\le s\right\}}K\left(s-r\right)\dd Z_r\right)\dd s.
		\end{multline*}
		To conclude, we observe that by \cite[Theorem $2.2$ (\lowerRomannumeral{8}), Chapter $2$]{g}
		\[
		-\left(\left(R_{-bK}\ast g_0\right)\ast 1\right)\left(T+{\delta}\right)=b\left(\left( E_{b,K}\ast 1\right)\ast g_0 \right)\left(T+{\delta}\right),
		\]
		and plugging the previous two equalities in \eqref{cont1}, together with \eqref{variance}, \eqref{adjusted}, we obtain the relation in \eqref{log-contract}.
	\end{proof}
	We deduce the following corollary showing an affine relation between the square of the VIX index and the adjusted forward process. 
	\begin{corollary}\label{corollary-affine}
		The square of \emph{VIX}  satisfies an infinite dimensional affine relation with respect to the adjusted forward process. More specifically
		\begin{equation}\label{VIX2}
			\begin{array}{rcl}
				\emph{VIX}_T^2&=& \displaystyle -10^4\frac{2}{{\delta}}
				\mathbb{E}\left[ X_{T+{\delta}}-{X_{T}}\big| \mathcal{F}_T\right],\quad \mathbb{Q}-\text{a.s.} \\
				&=&\displaystyle -10^4\frac{2}{{\delta}}
				c_1\int_{T}^{T+{\delta}}\left(1+b\left(E_{b,K}\ast 1\right)\left(T+{\delta}-s\right)\right)g_T\left(s\right) \dd s,\quad \mathbb{Q}-\text{a.s.},
			\end{array}
		\end{equation}
		where $c_1=-(\frac{1}{2}+\int_{\mathbb{R}_{+}}(e^{-\Lambda z}-1+\Lambda z)\nu(\dd z))$.
	\end{corollary}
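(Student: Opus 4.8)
The plan is to read off both equalities directly from the definition of the VIX index and from Theorem \ref{theo-affine-VIX2}, since the corollary is essentially a restatement. First I would start from \eqref{VIX}, squaring it to obtain
\[
\text{VIX}_T^2=\left(-\frac{2}{\delta}\,\mathbb{E}\left[X_{T+\delta}-X_T\big|\mathcal{F}_T\right]\right)^{+}\times 10^4,\quad T\ge 0.
\]
The key point is then to remove the positive part: by Theorem \ref{theo-affine-VIX2} we have $\mathbb{E}[X_{T+\delta}-X_T|\mathcal{F}_T]\le 0$, $\mathbb{Q}$-a.s., so that $-\frac{2}{\delta}\mathbb{E}[X_{T+\delta}-X_T|\mathcal{F}_T]\ge 0$, $\mathbb{Q}$-a.s., and the positive part acts as the identity outside a $\mathbb{Q}$-null set. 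This yields the first equality in \eqref{VIX2}.

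For the second equality I would simply substitute the explicit expression \eqref{log-contract} for $\mathbb{E}[X_{T+\delta}-X_T|\mathcal{F}_T]$ furnished by Theorem \ref{theo-affine-VIX2}, namely
\[
\mathbb{E}\left[X_{T+\delta}-X_T\big|\mathcal{F}_T\right]=c_1\int_{T}^{T+\delta}\left(1+b\left(E_{b,K}\ast 1\right)\left(T+\delta-s\right)\right)g_T\left(s\right)\dd s,\quad \mathbb{Q}-\text{a.s.},
\]
with $c_1=-(\frac{1}{2}+\int_{\mathbb{R}_+}(e^{-\Lambda z}-1+\Lambda z)\nu(\dd z))$, into the first equality. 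This gives the affine representation of $\text{VIX}_T^2$ in terms of $g_T(\cdot)$ claimed in \eqref{VIX2}, where ``affine'' is understood in the infinite-dimensional sense: $\text{VIX}_T^2$ is a linear functional (here with zero constant term) of the adjusted forward curve $(g_T(s))_{s\in(T,T+\delta]}$, with deterministic weight $s\mapsto -10^4\frac{2}{\delta}c_1(1+b(E_{b,K}\ast 1)(T+\delta-s))$.

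There is no real obstacle here; the content has already been done in Theorem \ref{theo-affine-VIX2}. The only mild subtlety worth a sentence is the bookkeeping of the $\mathbb{Q}$-a.s.\ qualifiers: $\text{VIX}_T$ is defined on all of $\Omega$ via the positive part precisely so that it is a genuine random variable, but the two displayed identities in \eqref{VIX2} hold only $\mathbb{Q}$-a.s., since they rely on the a.s.\ sign of the conditional expectation of the log-return increment, which in turn rests on $\sigma^2\ge 0$ holding only $\mathbb{Q}\otimes\dd t$-a.e.
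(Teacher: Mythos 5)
Your proposal is correct and is exactly the argument the paper has in mind: the corollary is an immediate consequence of the definition \eqref{VIX} together with Theorem \ref{theo-affine-VIX2}, using the a.s.\ nonpositivity of the log contract to drop the positive part and then substituting \eqref{log-contract}. The paper does not even write out a separate proof, so your version (including the remark on the $\mathbb{Q}$-a.s.\ qualifiers) matches and slightly elaborates the intended reasoning.
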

	
	\begin{rem}\label{Var_Swap}
		Our framework also allows us to obtain an explicit infinite dimensional affine relation between the variance swaps and the adjusted forward process. Specifically, the variance swap rate is 
		\begin{equation}\label{swaps}
		\frac{1}{\delta}	\mathbb{E}\left[\left[X,X\right]_{T+\delta}-\left[X,X\right]_T\big| \mathcal{F}_T\right]
		=
		\frac{c_2}{\delta}\int_{T}^{T+{\delta}}\left(1+b\left(E_{b,K}\ast 1\right)\left(T+{\delta}-s\right)\right)g_T\left(s\right) \dd s,\quad \mathbb{Q}-\text{a.s.},
		\end{equation}
		where $c_2=1+\Lambda^2\int_{\mathbb{R}_{+}}|z|^2\nu(\dd z)$.
		Note that for $\Lambda=0$ we have $c_2=-2c_1$, hence in this case  log contracts and variance swaps coincide up to the factor $-2/\delta$ (see \eqref{log-contract}-\eqref{swaps}). Therefore, when there are no jumps in the dynamics of the underlying, by \eqref{VIX2} we recover the fact that $\text{\emph{VIX}}^2$ is a variance swap. Moreover, observe that the relation in \eqref{swaps} is an extension of \cite[Lemma $4.4$]{Kallsen} in the classical affine setting. We refer to \cite{ContKok13, more, MST20} for more details regarding the distinction between variance swaps and \emph{VIX}$^2$.	
		\end{rem}
	%We easily remark that a difference arises between $VIX^2$ and variance swaps coherent with
	%the analysis in literature, see \cite{ContKok13, more, MST20}.  

	We are now interested in finding the conditional Fourier-Laplace transform of $\text{VIX}^2_T$.
	Before addressing this question, we need some technical intermediate steps.
	We first recall the following functional space as defined in \cite{ee2}
	\begin{equation}\label{invar}
		\mathcal{G}_K=\left\{g\colon\mathbb{R}_+\to\mathbb{R}\text{ continuous : }g\left(0\right)\ge 0\text{ and }\Delta_{\varepsilon}g-\left(\Delta_{\varepsilon}K\ast L\right)\left(0\right)g-\dd \left(\Delta_{\varepsilon}K\ast L\right)\ast g\ge0,\,{\varepsilon}\ge0 \right\}.
	\end{equation}
	 
	\begin{lemma}
	 Suppose that $K$ satisfies Hypothesis \ref{c1}. Define the function  $h\colon\mathbb{R}_+ \to \mathbb{R}$ by
	\[
	h\left(t\right)
	=-10^4\frac{2}{{\delta}}c_1
	\left[1+b\left(E_{b,K}\ast 1\right)\left({\delta}-t\right)\right]
	1_{\left\{t\le {\delta}\right\}},\quad t\ge0.
	\]
	Then $h$ is a continuous nonnegative function on $[0,\delta)$ and  $t\mapsto\int_{\mathbb{R}_+}h(s)K(s+t)\dd s$ belongs to $\mathcal{G}_K$.
	\end{lemma}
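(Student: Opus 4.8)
The plan is to treat the two assertions separately, with the sign of $h$ feeding into the membership claim.

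\emph{Sign and continuity of $h$.} Since $e^{-x}-1+x\ge 0$ for $x\ge 0$, one has $c_1\le-\tfrac12<0$, so the prefactor $-10^4\tfrac{2}{\delta}c_1$ is strictly positive. Recall from the proof of Theorem~\ref{theo-affine-VIX2} that $\chi:=E_{b,K}\ast 1$ is the unique continuous, nonnegative solution of $\chi=K\ast(1+b\chi)$; in particular $\chi$ is continuous with $\chi(0)=0$, so $h$ is continuous on $[0,\delta)$ (where the indicator equals $1$), and $\phi:=1+b\chi$ is the continuous solution of $\phi=1+bK\ast\phi$ with $\phi(0)=1$. It then remains to prove $\phi\ge 0$ on $\mathbb{R}_+$, which yields $h\ge 0$ on $[0,\delta]$. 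If $b\ge 0$ this is immediate: the Picard iterates $\phi_0\equiv 1$, $\phi_{n+1}=1+bK\ast\phi_n$ are nondecreasing, bounded below by $1$, and converge locally uniformly to $\phi$, hence $\phi\ge 1$. If $b<0$ I would use that the canonical resolvent $E_{b,K}$ is itself nonnegative under Hypothesis~\ref{c1} (a standard property, see \cite{sergio, primo, edu}); then $\chi$ is nondecreasing, and from $\widehat{E_{b,K}}=\widehat{K}/(1-b\widehat{K})$ together with the fact that $x\mapsto x/(1+|b|x)$ is bounded by $|b|^{-1}$ one gets $\chi(\infty)=\lim_{\lambda\downarrow 0}\widehat{E_{b,K}}(\lambda)\le|b|^{-1}$, so $0\le\chi\le|b|^{-1}$ everywhere and $\phi=1-|b|\chi\ge 0$. (Equivalently, $\phi(t)=\mathbb{E}[\sigma^2_t]$ for the spot variance with input curve $g_0\equiv 1$, which is $\ge 0$; that shortcut, however, invokes weak existence of $\sigma^2$.)

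\emph{Membership in $\mathcal{G}_K$.} By the previous step $h\ge 0$ on $[0,\delta]$ and $h\equiv 0$ beyond $\delta$, so $\mu(\dd s):=h(s)\,\dd s$ is a finite nonnegative measure supported on $[0,\delta]$ and $\gamma(t):=\int_{\mathbb{R}_+}h(s)K(s+t)\,\dd s=\int_{\mathbb{R}_+}K(t+s)\,\mu(\dd s)$. Continuity of $\gamma$ and $\gamma(0)=\int K(s)\,\mu(\dd s)\ge 0$ follow from $K\in L^1_{\mathrm{loc}}$, $K$ nonincreasing and monotone/dominated convergence, so the content is the inequality in \eqref{invar}. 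Conceptually, $\gamma$ is (after the change of variables $s=\delta-r$) the adjusted forward curve $g_\delta(\delta+\cdot)$ of the \emph{deterministic} instance of \eqref{variance}--\eqref{adjusted} with vanishing initial curve $g_0\equiv 0\in\mathcal{G}_K$ and nondecreasing input $Z_r=\int_0^r h(\delta-u)\,\dd u$, $r\in[0,\delta]$; since $\mathcal{G}_K$ is invariant along such dynamics — this is exactly the role of the shift operator $\Delta_\varepsilon$ in \eqref{invar}, cf.\ \cite{ee2} — it follows that $\gamma\in\mathcal{G}_K$. Concretely one verifies \eqref{invar} directly: writing $\kappa_\varepsilon:=\Delta_\varepsilon K\ast L$ and letting $\dd^{+}\kappa_\varepsilon$ be the Stieltjes measure with $\dd^{+}\kappa_\varepsilon([0,t])=\kappa_\varepsilon(t)$, so that $(\Delta_\varepsilon K\ast L)(0)\gamma+\dd(\Delta_\varepsilon K\ast L)\ast\gamma=\dd^{+}\kappa_\varepsilon\ast\gamma$, Fubini yields
\[
\Delta_\varepsilon\gamma-\dd^{+}\kappa_\varepsilon\ast\gamma=\int_{\mathbb{R}_+}\bigl[\Delta_{\varepsilon+s}K-\dd^{+}\kappa_\varepsilon\ast\Delta_s K\bigr]\,\mu(\dd s),
\]
and, $\mu$ being nonnegative, it suffices to show $\dd^{+}\kappa_\varepsilon\ast\Delta_s K\le\Delta_{\varepsilon+s}K$ for all $s,\varepsilon\ge 0$. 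Taking Laplace transforms, $L\ast K\equiv 1$ gives $\lambda\widehat{L}(\lambda)=1/\widehat{K}(\lambda)$, hence the resolvent identity $\dd^{+}\kappa_\varepsilon\ast K=\Delta_\varepsilon K$; therefore, using $\Delta_{\varepsilon+s}K(t)=\Delta_\varepsilon K(t+s)=(\dd^{+}\kappa_\varepsilon\ast K)(t+s)$,
\[
\Delta_{\varepsilon+s}K(t)-\bigl(\dd^{+}\kappa_\varepsilon\ast\Delta_s K\bigr)(t)=\int_{(t,\,t+s]}K(t+s-r)\,\dd^{+}\kappa_\varepsilon(r),
\]
which is nonnegative because, under Hypothesis~\ref{c1} ($K$ nonincreasing, $L$ nonnegative with nonincreasing increments), $\kappa_\varepsilon=\Delta_\varepsilon K\ast L$ is nondecreasing, so $\dd^{+}\kappa_\varepsilon\ge 0$.

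\emph{Main difficulty.} The routine parts are the sign of $c_1$, the continuity statements, and the Fubini/Laplace bookkeeping. The two delicate points are: (i) the nonnegativity of $\phi=1+b(E_{b,K}\ast 1)$ in the mean-reverting case $b<0$, which genuinely uses the structure of Hypothesis~\ref{c1} (nonnegativity of the canonical resolvent) and the behaviour of $\widehat{K}$ near $0$, not merely $K\ge 0$; and (ii) the monotonicity of $\kappa_\varepsilon=\Delta_\varepsilon K\ast L$ — equivalently, that $\dd^{+}\kappa_\varepsilon$ is a nonnegative measure — which is precisely where the assumption that the first-kind resolvent $L$ is nonnegative and ``nonincreasing'' enters, and which is what makes the kernel-level comparison $\dd^{+}\kappa_\varepsilon\ast\Delta_s K\le\Delta_{\varepsilon+s}K$ (i.e.\ the $\Delta_\varepsilon$-invariance of $\mathcal{G}_K$ on curves of the form $\int_{\mathbb{R}_+}K(\cdot+s)\mu(\dd s)$) go through.
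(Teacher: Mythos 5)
Your proof is correct, and for the main claim --- that $\gamma(t)=\int_{\mathbb{R}_+}h(s)K(s+t)\,\dd s$ belongs to $\mathcal{G}_K$ --- it follows essentially the paper's argument: both proofs rest on the identity $\Delta_\varepsilon K=(\Delta_\varepsilon K\ast L)(0)K+\dd(\Delta_\varepsilon K\ast L)\ast K$ (your $\dd^{+}\kappa_\varepsilon\ast K=\Delta_\varepsilon K$), the fact that $\Delta_\varepsilon K\ast L$ is nondecreasing under Hypothesis \ref{c1}, the resulting pointwise comparison $\dd^{+}\kappa_\varepsilon\ast\Delta_s K\le\Delta_{\varepsilon+s}K$ (the paper phrases this as dropping the portion of $\dd(\Delta_\varepsilon K\ast L)\ast K$ over $(t,t+s]$), and Tonelli against the nonnegative density $h$. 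The only genuine divergence is in establishing $1+b(E_{b,K}\ast 1)\ge0$: the paper does this in one line by noting that this function is the unique continuous solution of $\chi=1+bK\ast\chi$ and invoking the comparison result \cite[Theorem C.1]{ee}, whereas you split on the sign of $b$ and, for $b<0$, combine the nonnegativity of $E_{b,K}$ with the Laplace-transform bound $\widehat{K}/(1+|b|\widehat{K})\le|b|^{-1}$ to get $0\le E_{b,K}\ast 1\le|b|^{-1}$. Your route is more explicit and yields a quantitatively sharper bound, but note that the nonnegativity of the canonical resolvent with \emph{negative} parameter is, under Hypothesis \ref{c1} (as opposed to complete monotonicity of $K$), itself a consequence of the same comparison machinery in \cite{ee} that the paper cites directly --- so you have not reduced the problem to anything more elementary, only rearranged which consequence of that machinery is taken as input. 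Both arguments are sound.
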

	
	\begin{proof}
	The first step is to show that $1+b(E_{b,K}\ast1)\ge0$ in $\mathbb{R}_+$, which implies that $h$ is also nonnegative. This can be deduced from the fact that this function is the unique, continuous solution  in $\mathbb{R}_+$ of the Volterra equation $\chi=1+bK\ast\chi$, which is nonnegative by \cite[Theorem C.$1$]{ee}. Secondly, $h$ has compact support, and under Hypothesis \ref{c1} for every ${\varepsilon}\ge 0$ the function $\Delta_{\varepsilon}K\ast L$ is right-continuous nondecreasing in $\mathbb{R}_+$ and (see the proof of \cite[Lemma $2.6$]{sergio})
	\[
	\Delta_{\varepsilon}K=\left(\Delta_{\varepsilon}K\ast L\right)\left(0\right)K+\dd \left(\Delta_{\varepsilon}K\ast L\right)\ast K,\quad \dd t-\text{a.e. in }\mathbb{R}_+.
	\]
	As a consequence, for every $t\ge0$
	\begin{multline*}
		\Delta_{\varepsilon}K\left(s+t\right)=
		\left(\Delta_{\varepsilon}K\ast L\right)\left(0\right)K\left(s+t\right)+\left(\dd \left(\Delta_{\varepsilon}K\ast L\right)\ast K\right)\left(s+t\right)\\
		\ge
		\left(\Delta_{\varepsilon}K\ast L\right)\left(0\right)K\left(s+t\right)+\int_{0}^{t}K\left(s+t-u\right)\dd \left(\Delta_{\varepsilon} K\ast L\right)\left( u\right)
		,\quad \text{for a.e. }s\in\left[0,{\delta}\right].
	\end{multline*}
	This implies, by Tonelli's theorem, that $t\mapsto\int_{\mathbb{R}_+}h(s)K(s+t)\dd s$ belongs to $\mathcal{G}_K$.	
        \end{proof}
	 
	We now define, for every $w\in\mathbb{C}_-$, the function $h_w(t)= w \cdot h(t),\,t\ge0,$ and consider the Riccati-Volterra equation  
	\begin{equation}\label{chunk}
		\phi_w=\int_{0}^\infty h_w\left(s\right)K\left(s+\cdot\right)	\dd s+K\ast \left(G\left(\phi_w\left(\cdot\right)\right)\right),
	\end{equation}
	where 
	\begin{equation}\label{G}
		G\left(u\right)= bu+\frac{c}{2}u^2+\int_{\mathbb{R}_+}\left(e^{uz}-1-uz\right)\nu\left(\dd z\right),\quad u\in\mathbb{C}_-.
	\end{equation}
	
	\begin{lemma}
		Suppose that $K$ satisfies Hypothesis \ref{c1}. For every $w\in\mathbb{C}_-$, there exists a unique continuous solution $\phi_w\colon\mathbb{R}_+\to\mathbb{C}_-$ to \eqref{chunk}. Moreover, 
		\begin{equation}\label{comp_VIX}
		\mathfrak{R}\phi_w\left(t\right)\le \phi_{\mathfrak{R}w}\left(t\right),\quad t\ge 0.
		\end{equation}
	\end{lemma}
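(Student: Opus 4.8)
My plan is to follow the proof of Theorem \ref{gl_log} (carried out in Appendix \ref{ap_A}) and, for \eqref{comp_VIX}, the comparison method of \cite[Theorem $11$]{primo}: equation \eqref{chunk} has the same structure as \eqref{RV_log}, the only differences being the forcing term $t\mapsto\int_{0}^{\infty}h_w(s)K(s+t)\,\dd s$ and the appearance of $G$ in place of $F(w,\cdot)$ (indeed $G=F(0,\cdot)$). I would first isolate the few structural properties that make that template apply, and then invoke it.

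First I would record the properties of the data. Writing $\mathcal{J}(t)=\int_{0}^{\infty}h(s)K(s+t)\,\dd s$, the preceding lemma gives that $\mathcal{J}$ is continuous, nonnegative and bounded on compact sets (it lies in $\mathcal{G}_K$), so the forcing term in \eqref{chunk} equals $w\,\mathcal{J}$ and has nonpositive real part $(\mathfrak{R}w)\mathcal{J}$. From \eqref{G}, the elementary estimates $|e^{uz}-1-uz|\le\frac12|u|^2z^2$ and $|e^{uz}-e^{u'z}-(u-u')z|\le\bigl(|u|\vee|u'|\bigr)|u-u'|z^2$ for $u,u'\in\mathbb{C}_-$, $z\ge 0$, together with $\int_{\mathbb{R}_+}|z|^2\nu(\dd z)<\infty$, show that $G\colon\mathbb{C}_-\to\mathbb{C}$ is analytic and Lipschitz on every bounded subset of $\mathbb{C}_-$, with $G(0)=0$, $G$ convex on $\mathbb{R}_-$ and $G(x)\ge bx$ there. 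The sign condition replacing the one used for $F$ in Appendix \ref{ap_A} is that, for every $u\in\mathbb{C}_-$,
\begin{equation*}
G(\mathfrak{R}u)-\mathfrak{R}G(u)=\frac{c}{2}\,(\mathfrak{Im}u)^2+\int_{\mathbb{R}_+}e^{(\mathfrak{R}u)z}\bigl(1-\cos((\mathfrak{Im}u)z)\bigr)\nu(\dd z)\ge 0,
\end{equation*}
hence $\mathfrak{R}G(u)\le G(\mathfrak{R}u)$ and, in particular, $\mathfrak{R}G(iy)\le 0$ for $y\in\mathbb{R}$; this is immediate from \eqref{G} and is the only genuinely new computation.

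With these facts in hand, the rest would follow Appendix \ref{ap_A}. A contraction argument in $C([0,\tau];\mathbb{C})$ for small $\tau$ (as in \cite{edu,primo,ee}, which only needs the local Lipschitzness of $G$ and the continuity of $w\,\mathcal{J}$) gives a unique continuous solution on a maximal interval $[0,t^\ast)$. To obtain $t^\ast=\infty$ with $\mathbb{C}_-$-values, I would treat first the real case $w=\mathfrak{R}w\le 0$: since $G(0)=0$, the constant $0$ solves \eqref{chunk} with null forcing while $w\,\mathcal{J}\le 0$, so the comparison principle for scalar convolution equations with nonnegative kernel (developed in Appendix \ref{ap_A}) forces $\phi_{\mathfrak{R}w}\le 0$ on $[0,t^\ast)$, and $G(x)\ge bx$ together with a Gronwall estimate provides a two-sided a priori bound on compact sets, so $\phi_{\mathfrak{R}w}\colon\mathbb{R}_+\to\mathbb{R}_-$. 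For general $w\in\mathbb{C}_-$, taking real parts in \eqref{chunk} and using $K\ge 0$ and $\mathfrak{R}G(u)\le G(\mathfrak{R}u)$ shows $\mathfrak{R}\phi_w\le(\mathfrak{R}w)\mathcal{J}+K\ast G(\mathfrak{R}\phi_w)$, i.e.\ $\mathfrak{R}\phi_w$ is a subsolution of the (global, real) equation solved by $\phi_{\mathfrak{R}w}$; the same comparison argument then yields \eqref{comp_VIX}, whence $\mathfrak{R}\phi_w\le\phi_{\mathfrak{R}w}\le 0$ and $\phi_w$ is $\mathbb{C}_-$-valued, and a further Gronwall bound on $|\mathfrak{Im}\phi_w|$ (using the Lipschitz constant of $G$ on the relevant bounded subset of $\mathbb{C}_-$) rules out blow-up, giving $t^\ast=\infty$.

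The part I expect to be the main obstacle is this last step: the comparison principle for the Riccati-Volterra equation and the a priori estimates upgrading local to global existence while staying in $\mathbb{C}_-$. These, however, are exactly the ingredients set up for Theorem \ref{gl_log} in Appendix \ref{ap_A}, and they transfer here unchanged once the three facts above (continuity and sign of the forcing $w\,\mathcal{J}$; analyticity, local Lipschitzness and vanishing at $0$ of $G$; and $\mathfrak{R}G(u)\le G(\mathfrak{R}u)$) are recorded.
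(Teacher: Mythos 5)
Your proposal is correct and follows essentially the same route as the paper, which likewise reduces the lemma to the machinery of Appendix \ref{ap_A} (Step \upperRomannumeral{3} with $\Lambda=0$ for uniqueness) and to the global-existence and comparison arguments of \cite[Theorem 10]{primo}, via exactly your observations that $G=F(0,\cdot)$, that the forcing $w\,\mathcal{J}$ has nonpositive real part, and that $\mathfrak{R}G(u)\le G(\mathfrak{R}u)$. One small caution: for the a priori bound on $\mathfrak{Im}\phi_w$ you cannot literally invoke ``the Lipschitz constant of $G$ on the relevant bounded subset'' (the strip $\{l\le\mathfrak{R}u\le 0\}$ in which the solution is so far confined is unbounded and $G$ is quadratic there); the argument that actually closes, as in Step \upperRomannumeral{2} of Appendix \ref{ap_A}, factors $\mathfrak{Im}G(u)=a(u)\,\mathfrak{Im}u$ with $a(u)\le b$ and uses one-sided comparisons with linear Volterra equations of constant coefficient.
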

	\begin{proof}
	Having in mind \cite[Theorem C.$1$]{ee}, the existence of a global  solution of \eqref{chunk} can be deduced as in \cite[Theorem $10$]{primo}, whereas the uniqueness of such $\phi_w$ is obtained with a procedure analogous to the proof of Theorem \ref{gl_log}, see \emph{Step \upperRomannumeral{3}} with $\Lambda=0$ in Appendix \ref{ap_A}. Moreover, again by analogy with \cite[Theorem $10$\,(\lowerRomannumeral{2})]{primo}, the comparison result \eqref{comp_VIX} holds. 
	\end{proof}
	Before stating the theorem that provides the conditional Fourier-Laplace transform of $\text{VIX}^2_T$, we define
	\begin{equation}\label{Phi}
		\Phi_w\left(t,s\right)= h_w\left(s-t\right)1_{\left\{s\ge t\right\}}+G\left(\phi_w\left(t-s\right)\right)1_{\left\{s< t\right\}},\quad t,s\ge0.
	\end{equation}
	\begin{theorem}\label{t6V}
		Assume the same hypotheses as in Theorem \ref{t6}. Then, for every $w\in\mathbb{C}_-$, 
		\begin{equation}\label{final_V}
			\mathbb{E}\left[\exp\left\{w\cdot\emph{VIX}^2_T\right\}\Big|\,\mathcal{F}_t\right]=\exp\left\{\widetilde{U}_t\left(w,T\right)\right\},\quad \mathbb{Q}-\text{a.s., }t\in\left[0,T\right],
		\end{equation}
		where $\widetilde{U}_t(w,T)=\int_{t}^{\infty}\Phi_w(T,s)g_t(s)\dd s,\,t\in[0,T]$.
	\end{theorem}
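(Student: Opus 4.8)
The proof runs parallel to that of Theorem~\ref{t6}, with the log-return integrand $F(w,\psi_w(\cdot))$ replaced by the VIX integrand $\Phi_w(T,\cdot)$ of~\eqref{Phi} and the Riccati--Volterra solution $\psi_w$ replaced by $\phi_w$ of~\eqref{chunk}. First I would identify the terminal value: by Corollary~\ref{corollary-affine} and the definitions of $h$ and $h_w=w\cdot h$, one has $w\cdot\text{VIX}^2_T=\int_T^{T+{\delta}}h_w(s-T)g_T(s)\,\dd s$, and since $h_w$ is supported on $[0,{\delta}]$ while $\Phi_w(T,s)=h_w(s-T)$ for $s\ge T$, this equals $\int_T^{\infty}\Phi_w(T,s)g_T(s)\,\dd s=\widetilde{U}_T(w,T)$. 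The same compact-support observation shows that every ``$\int_t^{\infty}$'' appearing below is really an integral over $[t,T+{\delta}]$, so all objects involved are well defined. It therefore suffices to exhibit a true martingale $(M_t(w,T))_{t\in[0,T]}$ with $M_t(w,T)=\exp\{\widetilde{U}_t(w,T)\}$ for $t\in[0,T]$.

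To this end I would introduce the c\`adl\`ag, adapted, $\mathbb{C}$-valued semimartingale
\begin{multline*}
U_t(w,T)=U_0(w,T)+\int_0^t\phi_w(T-s)\,\dd\widetilde{Z}_s\\
-\int_0^t\Bigl(\tfrac{c}{2}\phi_w(T-s)^2+\int_{\mathbb{R}_+}\bigl(e^{\phi_w(T-s)z}-1-\phi_w(T-s)z\bigr)\nu(\dd z)\Bigr)\sigma_s^2\,\dd s,
\end{multline*}
with $U_0(w,T)=\int_0^{\infty}\Phi_w(T,s)g_0(s)\,\dd s$ (a finite complex number, since $G(\phi_w(\cdot))$ and $h_w$ are bounded on the relevant compact sets and $g_0\in L^1_{\text{loc}}$). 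Exactly the stochastic-Fubini argument of \cite[Theorem~$5$]{primo} (cf.\ the passage leading to~\eqref{thj4.3} in the proof of Theorem~\ref{t6}), combined with the Riccati--Volterra equation~\eqref{chunk}, the definition~\eqref{G} of $G$, the decomposition $\dd Z=b\,\sigma^2\,\dd t+\dd\widetilde{Z}$, the relation $\sigma^2-g_0=K\ast\dd Z$ (see~\eqref{variance}) and a change of variables, gives $U_t(w,T)=\widetilde{U}_t(w,T)$, $\mathbb{Q}$-a.s., $t\in[0,T]$. Setting $M_t(w,T)=\exp\{U_t(w,T)\}$ and applying It\^o's formula, the quadratic-variation and jump-compensation terms produced by $\exp$ cancel the drift of $U$, so that $\dd M_t/M_{t-}=\dd N_t(w,T)$ with $N_0(w,T)=0$ and $\dd N_t(w,T)=\sqrt{c}\,\phi_w(T-t)\sigma_t\,\dd W_{2,t}+\int_{\mathbb{R}_+}(e^{\phi_w(T-t)z}-1)\tilde{\mu}(\dd t,\dd z)$; hence $M(w,T)=\exp\{U_0(w,T)\}\mathcal{E}(N(w,T))$ is a local martingale.

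It remains to upgrade $M(w,T)$ to a true martingale, following the same two-step scheme as in the proof of Theorem~\ref{t6}. For real $w=\mathfrak{R}w\le 0$ the coefficients of~\eqref{chunk} are real and $h_{\mathfrak{R}w}\le 0$, so $\phi_{\mathfrak{R}w}$ is $\mathbb{R}_-$-valued and, being continuous, bounded on $[0,T]$; Lemma~\ref{prep} then applies with $f_1\equiv 0$, $f_2(s)=\sqrt{c}\,\phi_{\mathfrak{R}w}(T-s)$ and $f_3(s)=\phi_{\mathfrak{R}w}(T-s)\le 0$, showing that $M(\mathfrak{R}w,T)$ is a true martingale. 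For general $w\in\mathbb{C}_-$ I would first derive the analogue of representation~\eqref{past} for $U_t(w,T)$ --- legitimate because, as in \cite{primo}, it relies only on~\eqref{variance},~\eqref{chunk}, the boundedness of $\Phi_w(T,\cdot)$ on compacts, and the hypotheses inherited from Theorem~\ref{t6} (that $L$ is the sum of a locally integrable function and a point mass at $0$, and the total-variation bound) --- and then parallel the comparison method in the proof of \cite[Theorem~$11$]{primo}, now using the comparison~\eqref{comp_VIX}, the identity $\mathfrak{R}h_w=h_{\mathfrak{R}w}\le 0$, and the elementary bounds $\mathfrak{R}(u^2)\le(\mathfrak{R}u)^2$ and $\mathfrak{R}(e^{uz})\le e^{(\mathfrak{R}u)z}$ for $u\in\mathbb{C}_-$, $z\ge 0$, to produce a constant $C>0$ with $|M_t(w,T)|=\exp\{\mathfrak{R}U_t(w,T)\}\le C\exp\{U_t(\mathfrak{R}w,T)\}=C\,M_t(\mathfrak{R}w,T)$ for $t\in[0,T]$, $\mathbb{Q}$-a.s. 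Invoking \cite[Lemma~$1.4$]{J} then shows that $M(w,T)$ is a true martingale, and since $M_T(w,T)=\exp\{\widetilde{U}_T(w,T)\}=\exp\{w\cdot\text{VIX}^2_T\}$ by the first step, we conclude $\exp\{\widetilde{U}_t(w,T)\}=M_t(w,T)=\mathbb{E}[M_T(w,T)\mid\mathcal{F}_t]=\mathbb{E}[\exp\{w\cdot\text{VIX}^2_T\}\mid\mathcal{F}_t]$, which is~\eqref{final_V}.

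The step I expect to be the main obstacle is this last one. Unlike in Theorem~\ref{t6}, the Riccati--Volterra equation~\eqref{chunk} for $\phi_w$ carries the extra inhomogeneous term $\int_0^{\infty}h_w(s)K(s+\cdot)\,\dd s$, so both the~\eqref{past}-type representation of $U_t(w,T)$ and the ensuing domination $|M_t(w,T)|\le C\,M_t(\mathfrak{R}w,T)$ have to be set up with care, keeping track of this term and of the compactly supported nonnegative function $h$. By contrast, the identity $U_t=\widetilde{U}_t$ (stochastic Fubini plus a change of variables) and the It\^o computation behind the local-martingale property are routine, if bookkeeping-heavy, adaptations of the corresponding arguments in Section~\ref{FL_logS}.
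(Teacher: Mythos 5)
Your proposal is correct and follows essentially the same route as the paper's proof: identify $\widetilde{U}_T(w,T)=w\cdot\mathrm{VIX}_T^2$ via the support of $h_w$, rewrite $\widetilde{U}_t(w,T)$ by stochastic Fubini as a c\`adl\`ag semimartingale whose exponential is a Dol\'eans-Dade exponential of the same local martingale $\widetilde{N}(w,T)$, and then upgrade to a true martingale via Lemma \ref{prep} for $\mathfrak{R}w$, the adapted affine-in-the-past representation of $U$ (handling the inhomogeneous term of \eqref{chunk}, which you rightly flag as the delicate point), the comparison \eqref{comp_VIX}, and \cite[Lemma $1.4$]{J}. The only cosmetic difference is that you define $U$ by its dynamics and then prove $U=\widetilde{U}$, whereas the paper starts from $\widetilde{U}$ and derives the same expression directly (using an auxiliary process $\bar g_T$ to organize the Fubini step).
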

	\begin{proof}
		Fix $w\in\mathbb{C}_-$. First of all, notice that by the definition of $h_w$ and \eqref{VIX2}
		\begin{multline}\label{f_value}
			\widetilde{U}_T\left(w,T\right)=\int_{T}^{\infty}h_w\left(s-T\right)g_T\left(s\right)\dd s=-10^4\frac{2}{{\delta}}c_1 w\int_{T}^{T+{\delta}}\left(1+b\left(E_{b,K}\ast 1\right)\left(T+{\delta}-s\right)\right)g_T\left(s\right)\dd s
			\\=w\cdot\text{VIX}_T^2,\quad \mathbb{Q}-\text{a.s.}
		\end{multline}
		We introduce the process
		\begin{equation*}%\label{prolong}
			%\bar{f}_T\left(s\right)=\begin{cases}
			%	\sigma^2_s,&s\in\left[0,T\right],\\
			%	\mathbb{E}\left[\sigma^2_s\big|\mathcal{F}_T\right],&s>T,
			%\end{cases}\qquad\qquad 
			\bar{g}_T\left(s\right)=\begin{cases}
				\sigma^2_s,&s\in\left[0,T\right],\\
				g_T\left(s\right),&s>T.
			\end{cases}
		\end{equation*}
		Note that by \eqref{variance} and \eqref{adjusted}, $\bar{g}_T(\cdot)$  is a joint measurable modification of $g_0+\int_{0}^{T}1_{\{r\le \cdot\}}K(\cdot-r)\dd Z_r$. For every $t\in[0,T]$, the stochastic Fubini's theorem, \eqref{chunk}, \eqref{Phi}, and suitable changes of variables, yield
		\begin{align}
			&\int_{0}^{\infty}\Phi_w\left({T},s\right)\left(\bar{g}_t\left(s\right)-g_0\left(s\right)\right)\dd s=
			\int_{0}^{\infty}\Phi_w\left({T},s\right)\left(\int_{0}^{t}1_{\left\{u\le s\right\}}K\left(s-u\right)\dd Z_u\right)\dd s\notag\\
			&\qquad=
			\int_{0}^{t}\left(\int_{0}^{\infty}h_w\left(s\right)K\left(s+{T}-u\right)\dd s
			+\int_{0}^{{T}-u}K\left(s\right)G\left(\phi_w\left({T}-u-s\right)\right)\dd s\right)\dd Z_u\notag\\
			&\qquad=\int_{0}^{t}\phi_w\left({T}-u\right)\dd Z_u,\quad \mathbb{Q}-\text{a.s.}
			\label{p1}
		\end{align}
		Moreover, by \eqref{Phi}, the following equality holds:
		\begin{multline}\label{p2}
			\int_{0}^{t}\Phi_w\left({T},s\right)\sigma^2_s\,\dd s=\int_{0}^{t}h_w\left(s-{T}\right)1_{\left\{s\ge{T}\right\}}\sigma^2_s\,\dd s+
			\int_{0}^{t}G\left(\phi_w\left({T}-s\right)\right)\sigma^2_s\,\dd s\\=\int_{0}^{t}G\left(\phi_w\left({T}-s\right)\right)\sigma^2_s\,\dd s.
		\end{multline}
		Recalling the definition of $\widetilde{U}_t(w,T)$, we combine \eqref{p1} and \eqref{p2} to write
		\begin{multline}\label{cadlag}
			\widetilde{U}_t\left(w,T\right)=\int_{t}^{\infty}\Phi_w\left({T},s\right)g_0\left(s\right)\dd s+\int_{0}^{\infty}\Phi_w\left({T},s\right)\left(\bar{g}_t\left(s\right)-g_0\left(s\right)\right)\dd s-\int_{0}^{t}\Phi_w\left({T},s\right)\left(\sigma^2_s-g_0\left(s\right)\right)\dd s\\
			=
			\int_{0}^{\infty}\Phi_w\left({T},s\right)g_0\left(s\right)\dd s+\int_{0}^{t}\phi_w\left({T}-u\right)\dd Z_u-\int_{0}^{t}G\left(\phi_w\left({T}-s\right)\right)\sigma^2_s\,\dd s,\quad \mathbb{Q}-\text{a.s.}
		\end{multline}
		In the sequel we denote by $U(w,T)=(U_t(w,T))_{t\in[0,T]}$ the c\`adl\`ag process defined by the rightmost side of \eqref{cadlag}. An application of {It\^o's formula} together with \eqref{G} shows that $E(w,T)=(\exp\{U_t(w,T)\})_{t\in[0,T]}$ is a local martingale, namely $E(w,T)=\exp\{\int_{0}^{\infty}\Phi_w(T-s)g_0(s)\dd s\}\mathcal{E}(\widetilde{N}(w,T))$, where $\mathcal{E}$ denotes the Dol\'eans-Dade exponential and  $\widetilde{N}(w,T)=(\widetilde{N}_t(w,T))_{t\in[0,T]}$ is defined by
		\[
		\dd \widetilde{N}_t\left(w,T\right)=\sqrt{c}\,\phi_w\left(T-t\right)\sigma_t\,\dd W_{2,t}+\int_{\mathbb{R}_{+}}\left(e^{\phi_w\left(T-t\right)z}-1\right)\tilde{\mu}\left(\dd t, \dd z\right),\quad \widetilde{N}_0\left(w,T\right)=0.
		\]
		
		If $E(w,T)$ is a true martingale, then \eqref{final_V} follows from \eqref{f_value} and \eqref{cadlag}. As in the proof of Theorem \ref{t6}, we search for an expression of $U(w,T)$ which is affine on the past trajectory of $\sigma^2$. However, we cannot directly invoke \cite[Theorem $7$]{primo} due to the different structure of the Riccati-Volterra equation in \eqref{chunk} and of the process $U(w,T)$ itself. Fortunately, we can adapt the procedure in the proof of \cite[Theorem $7$]{primo}. Specifically, thanks to the local boundedness of $\Phi_w(T,\cdot)$ (see~\eqref{Phi}), $\mathbb{Q}-$a.s.,
		\[
		U_t\left(T,w\right)=\int_{t}^{T+{\delta}} \Phi_w\left({T},s\right)g_0\left(s\right)\dd s+
		\phi_w\left({T}-t\right)Z_t
		+
		\left(\pi_{{T}+{\delta}-t}\ast\left(\sigma^2-g_0\right)\right)\left(t\right),\quad \text{for a.e. }t\in\left(0,T\right).
		\]
		Here the functions $$\pi_{{T}+{\delta}-t}(u)=\int_{0}^{{T}+{\delta}-t}\Phi_w({T},{T}+{\delta}-s)((\Delta_{{T}+{\delta}-t-s}K)'\ast L)(u)\dd s,\quad t\in(0,T),$$ are well defined for almost every $u\in\mathbb{R}_+$ and belong to $L^1_\text{loc}(\mathbb{R}_+).$ At this point, for every $t\in(0,T)$ we introduce the locally absolutely continuous function
		\begin{multline*}
			\widetilde{\Pi}_{{T}+{\delta}-t}\left(u\right)=\int_{0}^{u}\pi_{{T}+{\delta}-t}\left(s\right)\dd s+\phi_w\left({T}-t\right)L\left(\left[0,u\right]\right)
			\\=\int_{0}^{{T}+{\delta}-t}\Phi_w\left({T},{T}+{\delta}-s\right)\left(\left(\Delta_{{T}+{\delta}-t-s}K\right)\ast L\right)\left(u\right) \dd s,\quad u\ge0,
		\end{multline*}
		where the second equality is due to \eqref{chunk} and a suitable change of variables. Therefore, also recalling \eqref{variance}, 	the previous formula for $U(w,T)$ can be rewritten as, $\mathbb{Q}-$a.s., for a.e. $t\in(0,{T})$,
		\begin{equation*}
			U_t\left(w,T\right)=\int_{t}^{{T}+{\delta}} \Phi_w\left({T},s\right)g_0\left(s\right)\dd s
			+
			\left(\dd \widetilde{\Pi}_{{T}+{\delta}-t}\ast\left(\sigma^2-g_0\right)\right)\left(t\right)\\+\phi_w\left(T-t\right)L\left(\left\{0\right\}\right)\left(\sigma^2-g_0\right)\left(t\right),
		\end{equation*}
		which is an affine expression in terms of the past trajectories of $\sigma^2$. Now by Lemma \ref{prep} the real-valued process $E(\mathfrak{R}w,T)=(\exp\{U_t(\mathfrak{R}w,T)\})_{t\in[0,T]}$ is a true martingale. Thus, thanks to \eqref{comp_VIX}, we can parallel the comparison argument in the proof of  \cite[Theorem $11$]{primo} to deduce that 
		\[
		\left|\exp\left\{U_t\left(w,T\right)\right\}\right|=\exp\left\{\mathfrak{R}U_t\left(w,T\right)\right\}\le C\exp\left\{U_t\left(\mathfrak{R}w,T\right)\right\},\quad t\in\left[0,T\right],\,\mathbb{Q}-\text{a.s.},
		\]
		for some constant $C>0$. An application of \cite[Lemma $1.4$]{J} completes the proof.
	\end{proof}
	\begin{comment}
		\begin{itemize}
			\item Let $\bar{\psi}_w(\cdot;\bar{T})\colon\mathbb{R}_+\to\mathbb{R}_-$ be a real, nonpositive, continuous solution of the Riccati-Volterra equation
			\[
			\chi=\int_{\mathbb{R}_+}\mathfrak{R}h_w\left(s;\bar{T}\right)K\left(s+\cdot\right)ds+K\ast \left(F\left(\chi\right)\right).
			\] 
			The argument in Appendix B proves that $\mathfrak{R}\psi_w\le \bar{\psi}_w$ in $\mathbb{R}_+.$ Moreover, for every $t\in[0,\bar{t}]$, explicit computations show that 
			\begin{multline*}
				\tilde{\bar{\Pi}}_{\bar{t}+\bar{T}-t}\left(r\right)-\mathfrak{R}\left(\tilde{\Pi}_{\bar{t}+\bar{T}-t}\left(r\right)\right)=
				-\int_{\left(0,\bar{t}-t\right]}\left[\bar{\psi}_w\left(\bar{t}-t-u\right)-\mathfrak{R}\psi_w\left(\bar{t}-t-u\right)\right]L\left(r+ds\right)
				\\+
				\int_{0}^{\bar{t}-t}\left(F\left(\bar{\psi}_w\left(s\right)\right)-\mathfrak{R} F\left(\psi_w\left(s\right)\right)\right)ds,\quad r\ge0,
			\end{multline*}
			hence it is an increasing function of $r$. Now one can follow the procedure in Theorem $8$ to conclude that $\exp\{V^w\}$ is a true martingale.
		\end{itemize}
	\end{comment}
	
	\subsection{VIX put options and futures prices}
	Theorem \ref{t6V} provides a semi-explicit formula for the Fourier-Laplace transform $\lambda_T$ of $\text{VIX}_T^2$ in $\mathbb{C}_-$, namely
	\begin{multline}\label{LP_V}
		\lambda_T\left(w\right)
		=\mathbb{E}\left[\exp\left\{w\cdot\text{VIX}^2_T\right\}\right]=
		\exp\left\{\int_{0}^\infty \Phi_w\left(T,s\right)g_0\left(s\right)\dd s\right\}
		\\=
		\exp\left\{ \int_{0}^{{\delta}}h_w\left(s\right)g_0\left(s+T\right)\dd s+ \left(g_0\ast G\left(\phi_w\left(\cdot\right)\right)\right)\left(T\right)
		\right\}
		,\quad w\in\mathbb{C}_-.
	\end{multline}
	This allows us to price put options written on $\text{VIX}$ with the Fourier-inversion technique for the bilateral Laplace transform shown in \cite{chic}. More specifically, for a log strike $k\in\mathbb{R}$, the payoff function of such options defined on the whole real line  is  $w(x)=(e^k-\sqrt{x^+})^+,\,x\in\mathbb{R}$, where $x^+$ represents $\text{VIX}^2$. Then, denoting by $P(k,T)$ the price of a put option with maturity $T$ (and log strike $k$) we have (cfr. \cite[Equations $(7.6)$-$(7.8)$]{chic}) 
	\begin{align}\label{pr_VIX}
		P\left(k,T\right)&=\mathbb{E}\left[\left(e^k-\text{VIX}_T\right)^+\right]=
		-\frac{1}{4\sqrt{\pi} i}\int_{z_r-i\infty}^{z_r+i\infty}\frac{\text{erf}\left(e^k\sqrt{z}\right)}{z^{3/2}}\lambda_T\left(z\right)\dd z\notag\\&
		=-\frac{1}{4\sqrt{\pi}}\int_{\mathbb{R}}\mathfrak{R}\left[\frac{\text{erf}\left(e^k\sqrt{z_r+iu}\right)}{\left(z_r+iu\right)^{3/2}}\lambda_T\left(z_r+iu\right)\right]\dd u\notag
		\\
		&=	-\frac{1}{2\sqrt{\pi}}\int_{\mathbb{R}_+}\mathfrak{R}\left[\frac{\text{erf}\left(e^k\sqrt{z_r+iu}\right)}{\left(z_r+iu\right)^{3/2}}\lambda_T\left(z_r+iu\right)\right]\dd u,\quad z_r<0.
	\end{align}
	{\color{black} Here $\text{erf}$ represents the error function $\text{erf}\,z=\frac{2}{\sqrt{\pi}}\int_{0}^{z}e^{-t^2}\dd t,\,z\in\mathbb{C}$}, and for $z\in\mathbb{C}$ and $a\ge0$, we consider the power $z^a=\Lambda^a e^{ia\theta},$ where $z=\Lambda e^{i\theta}$ with $\Lambda\ge0,\,\theta\in(-\pi,\pi]$. In particular, we write $\sqrt{z}=z^{1/2}$. The last equality in \eqref{pr_VIX} is due to the fact that the integrand is even. Indeed, this follows from the well-known symmetry relation $\text{erf}\,\bar{z}=\widebar{{\text{erf}\,z}},\,z\in\mathbb{C}$, as well as the identities (for $u\neq 0$)
	\[
	\mathfrak{R}\left(\sqrt{z_r+iu}\right)=\sqrt{\frac{z_r+\sqrt{z_r^2+u^2}}{2}},\qquad \mathfrak{Im}\left(\sqrt{z_r+iu}\right)=\text{sgn}\left(u\right)\sqrt{\frac{-z_r+\sqrt{z_r^2+u^2}}{2}}.
	\]
	
	Moreover, we can use $\lambda_T$ to determine $\mathbb{E}[\text{VIX}_T],$ i.e., the futures price of VIX at time $T$. In order to do this, notice that for every $x\ge0$ the function $(\sqrt{\pi s})^{-1}(e^{-xs}-1)+\sqrt{x}\,\text{erf}(\sqrt{sx}),\,s> 0$, is an antiderivative of $(2\sqrt{\pi})^{-1}(1-e^{-xs})s^{-3/2},\,s>0$. From this relation we deduce the following integral representation for the square-root function 
	\[
	\sqrt{x^+}=\frac{1}{2\sqrt{\pi}}\int_0^\infty \frac{1-e^{-sx^+}}{s^{\frac{3}{2}}}\,\dd s,\quad x\in\mathbb{R}.
	\]
	An application of Tonelli's theorem yields
	\begin{equation}\label{futures}
		\mathbb{E}\left[\text{VIX}_T\right]=\frac{1}{2\sqrt{\pi}}\int_{0}^{\infty}\frac{1-\lambda_T\left(-s\right)}{s^{\frac{3}{2}}}\,\dd s.
	\end{equation}

	\section{Numerical approximation of the model}\label{sec_num}
	According to the formulae in \eqref{pr_SPX}-\eqref{put_SPX} and \eqref{pr_VIX}, in order to price options on $S$ and VIX with maturity $T$, one needs to compute $\Psi^{X_T}(w_1)$ and $\lambda_T(w_2)$, where $w_1,\,w_2$ belong to appropriate regions of $\mathbb{C}$. In addition, the values  $\lambda_T(-s),\,s\ge0,$ are also necessary to determine the futures price of VIX at time $T$. Consequently, looking at the expressions of these Fourier-Laplace transforms in \eqref{LP_X} and \eqref{LP_V}, the solutions of the Riccati-Volterra equations \eqref{RV_log} and \eqref{chunk}, i.e., $\psi_{w_1},\,\phi_{w_2}$ and $\phi_{-s}$, have to be approximated on the interval $[0,T]$. Among the available numerical methods to approximate them we choose the multi-factor scheme suggested in \cite{ee}. Another possibility would have been the Adams scheme \cite{Ada1,Ada2} or hybrid schemes as in \cite{CGP}. The multi-factor scheme consists in approximating the kernel $K$ with a weighted sum of exponentials, namely with functions $K_n,\,n\in\mathbb{N},$ of the form	
	\begin{equation}\label{K_n}
		K_n\left(t\right)=\sum_{j=1}^{n}m_{j,n}e^{-x_{j,n}t},\quad t\ge0,
	\end{equation}
	where $m_{j,n},x_{j,n}>0,\,j=1,\dots,n$. In what follows, we write  $\mathbf{m}=\{m_{j,n}|\, j=1,\dots,n,\,n\in\mathbb{N}\}$ and $\mathbf{x}=\{x_{j,n}|\, j=1,\dots,n,\,n\in\mathbb{N}\}$.  Notice that $K_n,\,n\in\mathbb{N},$ is completely monotone on $(0,\infty)$,
	{\color{black} meaning that it is nonnegative and infinitely differentiable on this interval,  with nonpositive [resp., nonnegative] odd [resp., even] $k-$derivative, $k\in\mathbb{N}$.}
	% \footnote{A function $f$ is called completely monotone on $(0,\infty)$ if it is
	%	infinitely differentiable there with $(-1)^kf^{(k)}(t) \ge 0$ for all $t > 0$ and $k = 0, 1, \ldots$.}.
	%are given by the expression in  \cite[Equation $\left(3.6\right)$]{ee} and  depend on the choice of a non--negative sequence $\left(\rho_{j,n}\right)_{j=0,\dots,n}$. 
	More details about this approximation and the idea behind it can be found in Remark \ref{approx_ker} below and in the references therein.
	
	Given $n\in\mathbb{N}$ and $w\in\mathbb{C}$ such that $\mathfrak{R}w\in[0,1]$, we now introduce the Riccati-Volterra equation
	\begin{equation}\label{RV_log_n}
		\psi_{w,n}\left(t\right)=\int_{0}^{t}K_n\left(t-s\right) F\left(w, \psi_{w,n}\left(s\right)\right)\dd s=\left(K_n\ast F\left(w,\psi_{w,n}\left(\cdot\right)\right)\right)\left(t\right),\quad t\ge0.
	\end{equation}
	Note that the existence and uniqueness of $\psi_{w,n}$ is guaranteed by Theorem \ref{gl_log}\,(\lowerRomannumeral{1}), because $K_n$ satisfies Hypothesis \ref{c1}. The advantage in considering \eqref{RV_log_n} instead of \eqref{RV_log} is that its solution $\psi_{w,n}$ can be obtained by numerically solving a system of integral equations with standard methods. More precisely, $\psi_{w,n}(t)=\sum_{j=1}^{n}m_{j,n}\psi^{(j)}_{w,n}(t)$ for every $t\ge0$, where
	\[
	\psi^{\left(j\right)}_{w,n}\left(t\right)= e^{-x_{j,n}t}\int_{0}^{t}e^{x_{j,n}s}F\left(w,\sum_{k=1}^{n}m_{k,n}\psi^{\left(k\right)}_{w,n}\left(s\right)\right)\dd s,\quad j=1,\dots,n.
	\]
	Analogously, for every $n\in\mathbb{N}$ and $w\in\mathbb{C}_-$, we consider the Riccati-Volterra equation
	\begin{equation}\label{VIX_n}
		\phi_{w,n}\left(t\right)=\int_{0}^\infty h_w\left(s\right)K_n\left(s+t\right)	\dd s+\left(K_n\ast \left(G\left(\phi_{w,n}\left(\cdot\right)\right)\right)\right)\left(t\right),\quad t\ge0.
	\end{equation}
	We have that $\phi_{w,n}(t)=\sum_{j=1}^{n}m_{j,n}\phi^{(j)}_{w,n}(t)$, $t\ge0$, with
	\[
	\phi^{\left(j\right)}_{w,n}\left(t\right)=e^{-x_{j,n}t}\left(\int_{0}^{\infty}h_w\left(s\right)e^{-x_{j,n}s}\dd s
	+\int_{0}^{t}e^{x_{j,n}s}G\left(\sum_{k=1}^nm_{k,n}\phi_{w,n}^{\left(k\right)}\left(s\right)\right)\dd s\right),\quad j=1,\dots,n.
	\]
	The following theorem offers an estimate on the uniform distance on $[0,T]$ between $\psi_w$ and $\psi_{w,n}$, as well as  between $\phi_w$ and $\phi_{w,n}$.  In the former case, it generalizes \cite[Theorem $4.1$]{ee} to our framework with jumps. Its proof, which we postpone to Appendix \ref{ap_C}, relies on results related to Riccati-Volterra equations which are proved in Appendix \ref{ap_A}.  
	\begin{comment}
		{\begin{lemma}\label{pr}
				Take $p,\,q\ge0$. Then $E_{p+q,K}(t)\le e^{qt}E_{p,K}$ for a.e. $t\ge0$.
			\end{lemma}
			\begin{proof}
				Fix $p,\,q\ge0$; observe that $E_{p+q,K}$ solves the linear Volterra equation $\chi=K+(p+q)K\ast \chi$ in $\mathbb{R}_+$. Then denoting by $\widetilde{E}$ the unique $L^2_\text{loc}(\mathbb{R}_+)$ function satisfying $\chi(t)=e^{qt}K(t)+pe^{qt}(K\ast \chi)(t)$ for a.e. $t\ge0$, we can claim that $E_{p+q,K}\le \widetilde{E}$ a.e. in $\mathbb{R}_+$. Indeed, since by Hypothesis \ref{c1} $|K(t)|\le e^{qt}K(t)$ and $(p+q)K(t)\le pe^{qt}K(t)$ for $t>0$, this assertion follows from an application of 
		\end{proof}}
	\end{comment}
	\begin{theorem}\label{stime}
		Assume that $K$ satisfies Hypothesis \ref{c1}. Let $T>0$ and denote by $E_{\lambda,n}$ the canonical resolvent of $K_n$ with parameter $\lambda\in\mathbb{R}$, $n\in\mathbb{N}$. 
		\begin{enumerate}[label=$\left(\emph{\roman*}\right)$]
			\item\label{t11} Suppose that $\int_{0}^{T}|E_{b+\rho^+\sqrt{c},n}(s)|\dd s\le \widetilde{C}$ for every $n\in\mathbb{N}$, where $\widetilde{C}=\widetilde{C}(\rho,b, \mathbf{m},\mathbf{x},T)>0$. Then there exists a constant $C=C(\rho,b,c,\Lambda,\nu, \mathbf{m}, \mathbf{x}, T)>0$ such that, for every $w\in\mathbb{C}$ with $\mathfrak{R}w\in[0,1]$ and $n\in\mathbb{N}$, 
			\begin{multline}\label{bou_log}
				\sup_{t\in\left[0,T\right]}\left|\psi_w\left(t\right)-\psi_{w,n}\left(t\right)\right|\le C\left(1+\left|\mathfrak{Im}w\right|^6\right){\int_{0}^{T}E_{C\left(1+\left|\mathfrak{Im}w\right|^2\right),K}\left(s\right)\dd s}
				\\\times\int_{0}^{T}\left|K_n\left(s\right)-K\left(s\right)\right|\dd s.
			\end{multline}
			In addition, if $b<0$ and $\rho<0$ then the constant $C$ does not depend on $\mathbf{m}$ or $\mathbf{x}$, and the dependence on $T$ is via $\norm{K}_{L^1(\left[0,T\right])}$.
			\item \label{t12}
			Suppose that $\int_{0}^{T\vee {\delta}}|E_{b^+\!,n}(s)|\dd s\le \widetilde{C}$ for every $n\in\mathbb{N}$, where $\widetilde{C}=\widetilde{C}(b, \mathbf{m},\mathbf{x},T,{\delta})>0$. Then there exists a constant $C=C(b,c,\Lambda,\nu, \mathbf{m}, \mathbf{x}, T,{\delta})>0$ such that, for every $w\in\mathbb{C_-}$ and $n\in\mathbb{N}$, 
			\begin{equation}\label{bou_vix}
				\sup_{t\in\left[0,T\right]}\left|\phi_w\left(t\right)-\phi_{w,n}\left(t\right)\right|\le C\left(1+\left|w\right|^6\right){\int_{0}^{T}E_{C\left(1+\left|w\right|^2\right),K}\left(s\right)\dd s}\int_{0}^{T\vee{\delta}}\left|K_n\left(s\right)-K\left(s\right)\right|\dd s.
			\end{equation}
		\end{enumerate}
	\end{theorem}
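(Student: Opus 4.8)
The plan is to treat both parts in parallel, since equation \eqref{RV_log_n} and equation \eqref{VIX_n} have the same structure: a ``driving'' term involving the kernel, plus a convolution $K_n\ast(\,\cdot\,)$ of a nonlinearity ($F$ or $G$) evaluated at the unknown. First I would write the difference $\psi_w-\psi_{w,n}$ (resp. $\phi_w-\phi_{w,n}$) as the sum of two contributions: one coming from replacing $K$ by $K_n$ in the convolution with the nonlinearity evaluated at the \emph{same} argument, and one coming from the difference of the nonlinearities evaluated at the two (different) solutions. Concretely, for the log-price case,
\begin{equation*}
\psi_w-\psi_{w,n}=\bigl((K-K_n)\ast F(w,\psi_w(\cdot))\bigr)+\bigl(K_n\ast\bigl(F(w,\psi_w(\cdot))-F(w,\psi_{w,n}(\cdot))\bigr)\bigr),
\end{equation*}
and analogously for $\phi$, with an additional term $\int_0^\infty h_w(s)(K-K_n)(s+\cdot)\,\dd s$ coming from the inhomogeneous part. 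The first term is controlled by $\|K_n-K\|_{L^1([0,T])}$ (or $L^1([0,T\vee\delta])$) times a uniform bound on $F(w,\psi_w(\cdot))$ on $[0,T]$; such a bound is available from Theorem \ref{gl_log} and the a priori estimates on $\psi_w$ proved in Appendix \ref{ap_A}, and it is polynomial in $|\mathfrak{Im}\,w|$ (degree $2$, which after the convolution iteration will produce the degree $6$ in the statement — more on this below). The second term is handled by a local Lipschitz estimate: $|F(w,v_1)-F(w,v_2)|\le L(w)\,|v_1-v_2|$ for $v_1,v_2$ in the relevant region of $\mathbb{C}_-$, where $L(w)$ grows like $1+|\mathfrak{Im}\,w|^2$ because of the quadratic term in $v$ and the exponential-moment term in $\nu$ (here one crucially uses that $\mathfrak{R}v\le 0$, by the comparison inequality \eqref{comp_log}, to keep $e^{(v-\Lambda w)z}$ bounded).

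Once the difference is in the form $|\psi_w-\psi_{w,n}|(t)\le A_n(t)+\bigl(|K_n|\ast L(w)|\psi_w-\psi_{w,n}|\bigr)(t)$ with $A_n$ the explicit first term, I would apply a Volterra-type Grönwall inequality. The natural tool here is the generalized Grönwall lemma for convolution kernels (as in \cite{g} or as used in \cite{ee}), which yields
\begin{equation*}
\sup_{t\in[0,T]}|\psi_w-\psi_{w,n}|(t)\le \sup_{t\in[0,T]}A_n(t)\cdot\Bigl(1+L(w)\int_0^T E_{L(w)\|\cdot\|,n}(s)\,\dd s\Bigr),
\end{equation*}
i.e. the resolvent of $|K_n|$ with parameter proportional to $L(w)$ enters. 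The subtle point is that we want the final bound to involve the resolvent $E_{C(1+|\mathfrak{Im}\,w|^2),K}$ of the \emph{true} kernel $K$, not of $K_n$. This is exactly where the hypothesis $\int_0^T|E_{b+\rho^+\sqrt c,n}(s)|\,\dd s\le\widetilde C$ uniformly in $n$ is used, together with comparison results for resolvents under domination of kernels (Hypothesis \ref{c1} gives $K_n\ge 0$ and one compares $|K_n|=K_n$ against $K$ componentwise on the Laplace-transform side, or via the resolvent equation): one transfers the $K_n$-resolvent bound to a $K$-resolvent bound at the cost of the constant $\widetilde C$, which gets absorbed into $C$. In the special case $b<0,\rho<0$ the linear drift is dissipative, the relevant resolvents are themselves dominated by $\|K\|_{L^1}$-type quantities uniformly, and no control on $\mathbf m,\mathbf x$ is needed — this is the last refinement to check.

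The main obstacle I anticipate is \textbf{bookkeeping the $w$-dependence carefully enough to land exactly on $1+|\mathfrak{Im}\,w|^6$ and the resolvent parameter $1+|\mathfrak{Im}\,w|^2$}, rather than some cruder bound. The mechanism is: the a priori sup-bound on $|\psi_w|$ is $O(1+|\mathfrak{Im}\,w|^2)$ (from one application of the fixed-point/comparison argument to \eqref{RV_log}); plugging this into $F$ gives $|F(w,\psi_w(\cdot))|=O(1+|\mathfrak{Im}\,w|^4)$ wait — one has to be careful, $F$ is quadratic in $v$ and linear in $w$ times $v$, and the term $\frac12(u^2-u)$ contributes $O(|\mathfrak{Im}\,w|^2)$, so $|F|$ is $O((1+|\mathfrak{Im}\,w|^2)^2)=O(1+|\mathfrak{Im}\,w|^4)$; combined with the Lipschitz constant $L(w)=O(1+|\mathfrak{Im}\,w|^2)$ appearing multiplicatively when the Grönwall resolvent factor is linearized, this produces the $1+|\mathfrak{Im}\,w|^6$ prefactor and leaves $1+|\mathfrak{Im}\,w|^2$ inside the exponential resolvent. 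For part \ref{t12} the analysis is identical with $\Lambda=0$ and $w$ (not just its imaginary part) playing the role of the spectral parameter, since $w\in\mathbb{C}_-$ is unrestricted there; the inhomogeneous term $\int_0^\infty h_w(s)(K-K_n)(s+\cdot)\,\dd s$ contributes an extra $O(|w|)$ but is still dominated by $\|K_n-K\|_{L^1([0,T\vee\delta])}$ since $h$ has compact support in $[0,\delta]$ and is bounded. Everything else — measurability, continuity of the solutions, finiteness of the $\nu$-integrals — is routine given the standing hypotheses and the results of Appendix \ref{ap_A}.
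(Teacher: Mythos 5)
Your overall architecture (uniform a priori bounds polynomial in $|\mathfrak{Im}\,w|$, a two-term splitting of $\psi_w-\psi_{w,n}$, a local Lipschitz estimate for $F$ in its second argument, then a Volterra--Gr\"onwall step via \cite[Corollary C.$4$]{ee}) is the same as the paper's, and your degree bookkeeping is right: the forcing term is $O(1+|\mathfrak{Im}\,w|^4)$, the Lipschitz/coefficient bound is $O(1+|\mathfrak{Im}\,w|^2)$, and their product gives the $1+|\mathfrak{Im}\,w|^6$ prefactor with $1+|\mathfrak{Im}\,w|^2$ left inside the resolvent parameter.

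There is, however, a genuine gap created by your choice of splitting. You write
$\psi_w-\psi_{w,n}=(K-K_n)\ast F(w,\psi_w)+K_n\ast\left(F(w,\psi_w)-F(w,\psi_{w,n})\right)$,
so your Gr\"onwall step runs against the kernel $K_n$ and produces the canonical resolvent $E_{L(w),n}$ of $K_n$ with the \emph{large}, $w$-dependent parameter $L(w)\sim C(1+|\mathfrak{Im}\,w|^2)$. To reach the stated bound you then claim one can ``transfer the $K_n$-resolvent bound to a $K$-resolvent bound at the cost of the constant $\widetilde C$.'' This step is not justified: the hypothesis only controls $\int_0^T|E_{b+\rho^+\sqrt c,n}|$, i.e.\ the resolvent at one fixed, small parameter, and there is no domination $K_n\le K$ (nor the reverse) that would let you compare $E_{\lambda,K_n}$ with $E_{\lambda,K}$ for arbitrary $\lambda$; crude bounds such as $\int_0^T E_{\lambda,n}\le\sum_k\lambda^{k-1}\Vert K_n\Vert_{L^1}^k$ give exponential growth in $\lambda$ that is not obviously comparable, uniformly in $n$ and $w$, to $\int_0^T E_{\lambda,K}$. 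The paper avoids the issue entirely by the opposite splitting, $\psi_w-\psi_{w,n}=(K-K_n)\ast F(w,\psi_{w,n})+K\ast\left(F(w,\psi_w)-F(w,\psi_{w,n})\right)$: the difference of nonlinearities is convolved with the \emph{true} kernel $K$, so \cite[Corollary C.$4$]{ee} directly yields the resolvent $E_{C(1+|\mathfrak{Im}\,w|^2),K}$ appearing in \eqref{bou_log}, while the hypothesis on $\int_0^T|E_{b+\rho^+\sqrt c,n}|$ is used only where it is actually needed, namely to obtain the a priori bound \eqref{le1} on $\psi_{w,n}$ uniformly in $n$ (the inhomogeneous term $(K-K_n)\ast F(w,\psi_{w,n})$ is then controlled by that bound together with $\Vert K_n-K\Vert_{L^1}$). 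Swapping which factor receives $K-K_n$ repairs your argument; as written, the transfer step would fail or at least requires a substantial additional comparison argument that you have not supplied.
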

	
	\begin{rem}\label{approx_ker}
		When the kernel $K$ is completely monotone, a standard way to determine $\mathbf{m}$ and $\mathbf{x}$ in \eqref{K_n} relies on the Bernstein-Widder theorem (see, e.g., \cite[Theorem $2.5$, Chapter~$5$]{g}), according to which there exists a nonnegative measure $\mu$ on $\mathbb{R}_+$ such that $K(t)=\int_{\mathbb{R}_{+}}e^{-xt}\mu(\dd x),\,t>0$. Approximating $\mu$ with a weighted sum of Dirac measures gives $K_n$. More specifically, for a fixed $n\in\mathbb{N}$ it is customary to take a strictly increasing sequence of nonnegative numbers $(\rho_{j,n})_{j=0,\dots,n}$, and then choose for every $j=1,\dots,n$,
		\[
		m_{j,n}=\int_{\rho_{j-1,n}}^{\rho_{j,n}}\mu\left(\dd y\right),\qquad x_{j,n}=m_{j,n}^{-1}\int_{\rho_{j-1,n}}^{\rho_{j,n}}y\,\mu\left(\dd y\right).
		\]
		We mention that in some instances (most notably when $K$ is the fractional kernel, see, e.g., \cite[Example $2.6$]{american_sergio}) it is possible to show the convergence $K_n\to K$ in $L^2_{\emph{loc}}(\mathbb{R}_+)$. Thanks to \cite[Theorem $3.1$, Chapter $2$]{g}, this ensures the validity of the hypotheses required in both points of Theorem \ref{stime}, and therefore the convergence of the multi-factor scheme.
	\end{rem}
	
\section{Calibration}\label{sec_cal}

We have shown that efficient Fourier-based methods can be applied to the rough Hawkes Heston model in order to price options on the underlying prices and their volatility index. 
Based on these techniques, in this section we calibrate a parsimonious specification of the rough Hawkes Heston model to S\&P 500 and VIX options data on May 19th, 2017. 
This is the same data set as in \cite{Gatheral}. 
Here the process $X$ represents the logarithm of the ratio between the spot underlying price and the forward at initial time. 
For our parametrization, we choose -- as in the rough volatility models -- a power kernel of the form $t^{\alpha-1}/\Gamma(\alpha)$, $\alpha\in(1/2,1]$. 
For the law of the jumps, and to keep the number of parameters low, we choose an exponential distribution with rate 1, $\nu(\dd z)=\exp(-z)\,\dd z$. 
Our parsimonious specification of the model has therefore -- other than the two parameters related to the initial variance curve $(\beta,\sigma_0^2)$ -- five evolution-related parameters $(\alpha,  \rho, b, c, \Lambda)$. 
Like in \cite{Gatheral}, we concentrate on short maturities for which, as pointed out in \cite{Guyon2}, ``VIX derivatives are most liquid and the joint calibration is most difficult.''
The resulting calibrated parameters are reported in Table \ref{Table:CalibratedParameters}. 

\begin{table}[h!]
    \centering
    \begin{tabular}{|c|c|c|c|c||c|c|}
    \hline
        $\alpha$ & $ \rho$  &   $b$ & $c$ & $\Lambda$  &$\beta$  & $\sigma^2_0$\\ \hline
        0.506 & -0.737  & -2.008 &  0.156 & 0.242  & 0.048 & 0.007 \\ \hline
    \end{tabular}
    \caption{Calibrated parameters}
    \label{Table:CalibratedParameters}
\end{table}

We observe that the value of $\alpha$ is very close to its lower bound limit $0.5$. 
This is coherent with previous estimates in the  rough volatility literature, see for instance \cite{Alos-S, Bayer, bennedsen_pakkanen,EEFR, Fuka, Gath1, Gatheral}. %\cite{Fuka, Gath1, Gatheral}. 
%Our analysis confirm then the consistency of this crucial estimate. 
The estimation of the correlation parameter $\rho$ is also in line with empirical estimates, e.g. \cite{Cont01}, and what is commonly known as the {\it leverage effect} \cite{CS15,EEFR, MT22}. 
We notice that for the joint calibration we can keep the vol-of-vol parameter $c$ small because an important part of the volatility fluctuation is captured by the {\it self-exciting} jumps controlled by the parameters $\alpha$ and $\Lambda$.
This responds to the issue, raised in \cite{Guyon2}, that  ``very large negative skew of short-term SPX options, which in continuous models implies a very large volatility of volatility, seems inconsistent with the comparatively low levels of VIX implied volatilities.''

 The calibrated implied volatility smiles for the S\&P 500 and VIX options are plotted in Figures \ref{graph-calibration-SP} and \ref{graph-calibration-VIX}, respectively. 
We zoom the calibration of the S\&P 500 options around the money in Figure \ref{graph-calibration-SP-zoom}. 
These graphs show that the model fits remarkably well both S\&P 500 and VIX implied volatilities. 
The shape of the smile around the money for S\&P 500 options is well-captured and the distance to the bid-ask corridor -- across the maturities -- is at most of one bid-ask spread. 
For the two shortest maturities most of the model implied volatilities around the money are actually inside the bid-ask corridor.
The fit is not perfect for very negative log-moneyness.
This is also seen -- possibly to a less extent -- in the quadratic rough Heston model \cite{Gatheral}.
We conjecture that, at the cost of increasing the complexity of the model, even better results could be obtained if we replace the exponential law for the jumps by a law with Pareto tails as suggested in \cite{Cont01, JMSZ} and the references therein. 
Regarding the VIX implied volatilities, we observe that -- even for options deep out-of the-money -- the model implied volatilities stay almost systematically within the bid-ask corridor, whether it is calculated using call or for put options.

\section{Sensitivities of the implied volatilities}\label{sec_sensi}

In this section we study the sensitivity of the implied volatilities of S\&P 500 and VIX options to the parameters of the rough Hawkes Heston model. 
Starting from the calibrated parameters presented in Table \ref{Table:CalibratedParameters}, we analyze the impact of a change in the evolution-related parameters $(\alpha,  \rho, b, c, \Lambda)$ and the initial curve parameters $(\beta,\sigma_0^2)$ on the implied volatilities for the shortest maturity, and the shortest and longest maturities, respectively.

We begin with the sensitivity with respect to the parameter $\alpha\in (0.5,1]$, which as we will see plays a crucial role in our model. We can observe in Figure \ref{fig:sensitivity-alpha} - as is the case for other rough volatility models -- that modifications of the parameter $\alpha$ change the ATM skew of the implied volatility of S\&P 500 options. The right convexity and ATM skew around the money can only be obtained for very low values of the parameter $\alpha$, confirming the findings in the rough volatility literature. To elucidate the influence of the parameter $\alpha$ on the ATM skews, we plot in Figure \ref{graph-skew-decay} the log-log plots of ATM skews as a function of maturity, for the calibrated parameters and different values of $\alpha$. We observe that a perfect power decay is captured by $\alpha=0.506$, but not by higher values of $\alpha$. For $\alpha=0.506$, the linear fit is almost perfect with a $-0.597$ power decay and an unquestionable coefficient of determination $R^2=0.99905$. These findings are coherent with the results in the rough volatility literature, e.g. \cite{Bayer, Gath1}, indicating a power law for the ATM skew as a function of maturity given approximately by $T^{-\frac12}$. For other values of $\alpha$, the linear fit is also observed for the shortest maturities. We plot in Figure \ref{graph-skew-decay}, the estimated power decay for the short maturities as a function of $\alpha$. This plot shows that the relationship between the power decay and $\alpha$ is approximately linear.

More importantly, within the joint calibration framework, the parameter $\alpha$ has a big impact on the level and shape of implied volatilities of VIX options. This is confirmed by Figure \ref{fig:sensitivity-alpha}. In particular, the difference in level between the implied volatilities of VIX options for $\alpha =0.506$ and $\alpha=0.6$ is similar to the one between $\alpha =0.6$ and $\alpha=0.9$. As $\alpha$ decreases the implied volatilities shift downwards. This feature is fundamental to bring down the VIX implied volatilities maintaining the correct skew for SPX implied volatilities, explaining therefore the shift mentioned in \cite{Guyon, Guyon2}. We ratify therefore -- within the affine framework -- the relevance of rough non-Markovian volatility to jointly calibrate SPX and VIX smiles.

We now analyze the dependency of the implied volatilities with respect to the other parameters. 
Figure \ref{fig:sensitivity-others} shows the sensitivities with respect to the evolution-related parameters $(b,c, \rho, \Lambda)$.
We notice that -- unless we zoom around the money -- the sensitivity of the SPX smiles with respect to $(b,c,\Lambda)$ is relatively small. 
The main effect of an increment in the reverting speed $-b$ is a shift slightly downwards of the SPX implied volatility and a more pronounced upward shift and a reduction of the concavity on the VIX implied volatility.
The impact of the volatility of volatility $c$ is similar for SPX options, with a slight change of concavity, and a more pronounced and less symmetric effect on the level and concavity of implied volatility of VIX options. 
As usual, the correlation parameter $\rho$ plays a big role by moving the minimum value to the left ($\rho<0$) or to the right ($\rho>0$).
Obviously, the VIX smiles do not depend on the correlation $\rho$.
The effect of the (jump) leverage $\Lambda$ is relatively small on SPX implied volatilities but fundamental on the VIX implied volatilities. 
For SPX implied volatilities, the impact of $\Lambda$ could be reduced to a rotation with the money as pivot.
The parameter $\Lambda$ also controls the level of VIX implied volatility out-of-the-money. As $\Lambda$ increases this level goes down, achieving the correct shift for the calibrated parameter. 
This effect is similar to the one observed for the vol-of-vol $c$, but the sensitivity is larger, and it allows us to keep a low value of $c$ for the joint calibration.
This explains, the importance in our model of self-exciting jumps in opposite directions for the underlying and volatility.

We now turn to the parameters $(\beta,\sigma_0^2)$ of the initial curve $g_0(t)=\sigma_0^2+\beta\int_0^t K(s)\,\dd s$, $t\ge 0$.
Figure \ref{fig:sensitivity-spot-curve} shows the SPX and VIX implied volatility sensitivities for the shortest and longest maturity.
The impact of both parameters is similar for SPX and VIX options.
When $\sigma_0^2$ or $\beta$ increase the SPX implied volatilities move up and to the right, while the VIX implied volatilities move down and the concavity increases.

\section{Conclusion}\label{sec_conclusion}

We develop and study a new stochastic volatility model named the rough Hawkes Heston model. It is a tractable affine Volterra model with rough volatility and volatility jumps that cluster and that have the opposite direction but occur at the same time as the jumps of the underlying prices. This model shares many features with other existing models, mainly the Heston \cite{Heston},  Barndorff-Nielsen and Shephard \cite{BarShe}, and rough Heston \cite{rh2} models. It takes advantage of the low regularity and memory features of rough volatility models, the large fluctuation of jumps, the clusters of Hawkes processes and the explicit Fourier-Laplace transform of the affine setup. By combining the modeling advantages of these approaches, it is able to better capture the joint dynamics of underlying prices and their volatility index in a tractable fashion. The addition of a singular kernel in the dynamics of the volatility, together with jumps, incorporates not only the rough volatility feature but also a jump-clustering component. The presence of common jumps in the underlying and the volatility in opposite directions is coherent with previous studies such as \cite{Todorov}. Moreover, the introduction of jumps that cluster -- as in  \cite{BBSS} --  is in accordance with empirical findings, e.g. \cite{Cont01, Cont11}. Similar to \cite{BarShe,rh2,Heston}, the rough Hawkes Heston model is parsimonious with only five evolution-related parameters, and it belongs to the class of affine Volterra models \cite{sergio,primo}, which allows efficient Fourier-based techniques for pricing.

The parameter $\alpha$ describing the power kernel in the volatility dynamics controls -- as in the rough Heston model -- the underlying implied volatilities ATM skews for short maturities. Our calibration example indicates that this value is close to 0.5 which agrees with previous estimates in the rough volatility literature \cite{Bayer, Gath1}. This is not, however, the only role played by the parameter $\alpha$ in our setup because the power kernel also introduces a jump-clustering feature to the model. As a consequence, the parameter $\alpha$ plays a crucial role in controlling the level of VIX implied volatilities. Together with the jump-leverage parameter $\Lambda$, the power kernel allows us to bring down the VIX implied volatilities maintaining the correct skew for SPX implied volatilities, consequently capturing the shift mentioned in \cite{Guyon, Guyon2}. This confirms the relevance, in our affine framework, of rough volatility and clustering jumps to model simultaneously the S\&P 500 and VIX dynamics.

The affine relation between variance swap rates and forward variance -- which generalizes the affine relation between variance swap rates and spot variance in the classical framework \cite{Kallsen} -- is a by-product of our affine Volterra framework. This affine relation has been confirmed empirically in \cite{MST20}.

To conclude, the rough Hawkes Heston model is able -- in a tractable and parsimonious fashion -- to jointly calibrate S\&P 500 and VIX options. The parsimonious character of our model is an advantage compared to other models that jointly calibrate SPX/VIX options with either a large number of parameters \cite{ContKok13,Guyon3} or based on martingale transport considerations \cite{Guyon2}. The affine character of the rough Hawkes Heston model allows fast pricing using Fourier-techniques, instead of Monte Carlo or machine learning methods as those used for instance in \cite{Gatheral, rosenbaum_zhang}. Moreover, all the parameters in our model have a financial interpretation, and a complete sensitivity analysis shows that they are not redundant since each of them controls a different feature of the S\&P 500 and VIX volatility smiles.

%\section*{Data Availability Statement}
%The data that support the findings of this study are available from the corresponding author upon reasonable request. 

\appendix

\section{Proof of Theorem \ref{gl_log}}\label{ap_A}
In this appendix we prove Theorem \ref{gl_log} regarding the Riccati-Volterra equation \eqref{F_log}-\eqref{RV_log} used to study the Fourier-Laplace transform of the log returns $(X_t)_{t\ge0}$. {\color{black}
We use the following notation:
given $u,\,v\in\mathbb{C},$ let $[u,v]$ be the segment in $\mathbb{C}$ having $u$ and $v$ as endpoints, i.e. $[u,v]=\{z\in\mathbb{C}: z=(1-t)u+tv,\,t\in[0,1]\}$, and denote by $u\vee v=\mathfrak{R}u\vee \mathfrak{R}v+i \,\mathfrak{Im}u\vee\mathfrak{Im}v.$}
\begin{proof}
	Fix $w\in\mathbb{C}$ with  $\mathfrak{R}w\in[0,1]$.

	\emph{\ref{1t1}}  The proof of this point is divided into three steps. In the first step, we show the existence of a noncontinuable solution $\psi_w$ of \eqref{RV_log}. In the second step, we prove that $\psi_w$ does not explode in finite time, i.e., that it is global solution. To conclude, in the third and last step, we prove the uniqueness of $\psi_w$.
	
	\underline{\emph{Step \upperRomannumeral{1}}.}  Let us compute from \eqref{F_log}, for every $v\in\mathbb{C}_-$,
	\begin{multline}\label{r_F}
		\mathfrak{R}{F}\left(w,v\right)=\frac{1}{2}\left(\left|\mathfrak{R}w\right|^2-\mathfrak{R}w\right)+\left(b+\rho\sqrt{c}\,\mathfrak{R}w\right)\mathfrak{R}v
		+\frac{c}{2}\left|\mathfrak{R}v\right|^2-\frac{1}{2}\left(\left|\mathfrak{Im}w\right|^2+c\left|\mathfrak{Im}v\right|^2+2\rho\sqrt{c}\,\mathfrak{Im}w\mathfrak{Im}v\right)\\
		+\int_{\mathbb{R}_+}\left[e^{\left(\mathfrak{R}v-\Lambda \mathfrak{R}w\right)z}\cos\left(\left(\mathfrak{Im}v-\Lambda \mathfrak{Im}w\right)z\right)-\mathfrak{R}w\left(e^{-\Lambda z}-1\right)-1-\mathfrak{R}vz\right]\nu\left(\dd z\right).
	\end{multline}
	Since  $|\rho|\le 1$ we have $|\rho\sqrt{c}\,\mathfrak{Im}w\mathfrak{Im}v|\le\sqrt{c}|\mathfrak{Im}w||\mathfrak{Im}v|$, which implies 
	\begin{equation}\label{contr}
		-\frac{1}{2}\left(\left|\mathfrak{Im}w\right|^2+c\left|\mathfrak{Im}v\right|^2+2\rho\sqrt{c}\,\mathfrak{Im}w\mathfrak{Im}v\right)
		\le 
		-\frac{1}{2}\left(\left|\mathfrak{Im}w\right|-\sqrt{c}\left|\mathfrak{Im}v\right|\right)^2\le 0.
	\end{equation}
	Recalling that $\mathfrak{R}w\in[0,1]$, we then obtain
	\begin{align}\label{st1}
		\notag&
		\mathfrak{R}{F}\left(w,v\right)\le
		\left(b+\rho\sqrt{c}\,\mathfrak{R}w\right)\mathfrak{R}v+\frac{c}{2}\left|\mathfrak{R}v\right|^2+\int_{\mathbb{R}_+}\left[e^{-\Lambda\mathfrak{R}wz}-\mathfrak{R}w\left(e^{-\Lambda z}-1\right)-1\right]\nu\left(\dd z\right)\notag\\&\qquad\qquad\qquad\qquad\qquad\qquad\qquad\qquad\qquad\qquad\qquad\qquad\notag
		+ \int_{\mathbb{R}_+}\left[e^{\left(\mathfrak{R}v-\Lambda\mathfrak{R}w\right)z}-e^{-\Lambda\mathfrak{R}wz}-\mathfrak{R}vz\right]\nu\left(\dd z\right)\\&
		\le 
		\left(b+\rho\sqrt{c}\,\mathfrak{R}w+\int_{\mathbb{R}_+}z\left(e^{-\Lambda\mathfrak{R}wz}-1\right)\nu\left(\dd z\right)\right)\mathfrak{R}v+\frac{c}{2}\left|\mathfrak{R}v\right|^2
		+ \int_{\mathbb{R}_+}e^{-\Lambda\mathfrak{R}wz}\left(e^{\mathfrak{R}vz}-1-\mathfrak{R}vz\right)\nu\left(\dd z\right),
	\end{align}
	where for the second inequality we use 
	\begin{equation}\label{bah}
		e^{-\Lambda\mathfrak{R}wz}-\mathfrak{R}w\left(e^{-\Lambda z}-1\right)-1\le0,\quad z\ge 0.
	\end{equation}
	Let $h\colon\mathbb{R}_+\times \mathbb{R}_-\to\mathbb{R}_-$ be the continuous function defined by 
	\begin{equation*}
		h\left(x,y\right)=
		\begin{cases}
			\frac{1}{y}\int_{\mathbb{R}_+}e^{-\Lambda x z}\left(e^{yz}-1-yz\right)\nu\left(\dd z\right),&y<0\\
			0,&y=0
		\end{cases},\quad x\ge0,
	\end{equation*} 
	and note that $y\cdot h(x,y)=\int_{\mathbb{R}_+}e^{-\Lambda xz}(e^{yz}-1-yz)\nu(\dd z).$ At this point, we can use \eqref{st1} to show that
	\begin{equation}\label{b1}
		\mathfrak{R}{F}\left(w,v\right)\le 
		\left(C_w+\frac{c}{2}\mathfrak{R}v+h\left(\mathfrak{R}w,\mathfrak{R}v\right)\right)\mathfrak{R}v,\quad v\in\mathbb{C}_-,
	\end{equation}
	where $C_w=b+\rho\sqrt{c}\,\mathfrak{R}w+\int_{\mathbb{R}_+}z(e^{-\Lambda \mathfrak{R}wz}-1)\nu(\dd z)$.
	
	We now introduce the function $\widetilde{F}_w\colon\mathbb{C}\to\mathbb{C}$ given by 
	$$\widetilde{F}_w\left(v\right)=F\left(w,-\mathfrak{R}v^-+i\mathfrak{Im}v\right)+C_w\mathfrak{R}v^+,\quad v\in\mathbb{C}.$$
	Observe that, by construction (see also \eqref{b1})
	\[
	\mathfrak{R}\widetilde{F}_w\left(v\right)
	\le 
	\left(C_w-\frac{c}{2}\mathfrak{R}v^-+h\left(\mathfrak{R}w,-\mathfrak{R}v^-\right)\right)\mathfrak{R}v
	,\quad v\in\mathbb{C}.
	\] 
	Since $\widetilde{F}_w$ is continuous, we can invoke \cite[Therorem $1.1$, Chapter $12$]{g} to assert the existence of a continuous, noncontinuable solution $\psi_w\colon[0,T_{\text{max}})\to\mathbb{C}$ of the equation 
	\begin{equation}\label{st0}
		\chi=K\ast \widetilde{F}_w\left(\chi\left(\cdot\right)\right),\quad t\in\left[0,T_{\text{max}}\right),
	\end{equation} 
	for some $T_{\text{max}}\in(0,\infty].$ If we can show that $\mathfrak{R}\psi_w\le 0$ in $[0,T_{\text{max}})$, then we conclude that $\psi_w$ is indeed a noncontinuable solution of \eqref{RV_log}, as well.
	To this end, consider the continuous function $\zeta(t)=C_w-\frac{c}{2}\mathfrak{R}\psi_w(t)^-+h(\mathfrak{R}w,-\mathfrak{R}\psi_w(t)^-)$ defined for $t\in[0,T_{\text{max}})$. Taking the real part in \eqref{st0}, for every $T\in(0,T_\text{max})$, we obtain
	\begin{equation*}
		\mathfrak{R}\psi_w\left(t\right)=-\gamma_T\left(t\right)
		+\int_{0}^{t}K\left(t-s\right) 
		\zeta\left(s\right)\mathfrak{R}\psi_w\left(s\right)
		\dd s,\quad t\in\left[0,T\right],
	\end{equation*}
	where $\gamma_T(t)=\int_{0}^{t}K(t-s)1_{\{s\le T\}}(\zeta(s)\mathfrak{R}\psi_w(s)-\mathfrak{R}\widetilde{F}_w(\psi_w(s)))\dd s$. By \cite[Remark B.$6$]{ee} $\gamma_T\in\mathcal{G}_K$ (recall \eqref{invar}), and we can invoke \cite[Theorem C.$1$]{ee} to infer that $\mathfrak{R}\psi_w\le 0$ in $[0,T]$. Given that  $T$ was arbitrary, such an inequality holds in the whole interval $[0,T_\text{max})$, completing the first step of the proof.

	\underline{\emph{Step \upperRomannumeral{2}}.} Our goal here is to show that $T_{\text{max}}=\infty$. Let us fix again a generic $T\in(0,T_\text{max})$. Taking the imaginary part in  \eqref{F_log} and \eqref{RV_log} we have, on the interval $[0,T]$,
	\begin{multline}
		\label{im}
		\mathfrak{Im}\psi_w=
		K\ast
		\Bigg[\left(\mathfrak{R}w-\frac{1}{2}\right)\mathfrak{Im}w+\left(b+\rho\sqrt{c}\,\mathfrak{R}w\right)\,\mathfrak{Im}\psi_w+\rho\sqrt{c}\,\mathfrak{Im}w\,\mathfrak{R}\psi_w+c\,\mathfrak{R}\psi_w\,\mathfrak{Im}\psi_w\\
		+
		\int_{\mathbb{R}_+}\left(e^{\mathfrak{R}\left(\psi_w-\Lambda w\right)\cdot z}\sin\left(\mathfrak{Im}\left(\psi_w-\Lambda w\right)\cdot z\right)-\mathfrak{Im}w\left(e^{-\Lambda z}-1\right)-\mathfrak{Im}\psi_w\cdot z\right)\nu\left(\dd z\right)
		\Bigg].
	\end{multline}
	Consider the function $d\colon \mathbb{R}_-\times \mathbb{R}\to \mathbb{R}$ defined as follows
	\begin{equation*}
		d\left(x,y\right)=\begin{cases}
			\frac{1}{y}\int_{\mathbb{R}_+}e^{ x z}\left(\sin\left(y\, z\right)-y\, z\right)\nu\left(\dd z\right),&y\neq0\\
			0,&y=0
		\end{cases},\quad x\le 0.
	\end{equation*}
	Note that $d$ is continuous and nonpositive in its domain. Moreover, by construction 
	\[
	y\cdot d\left(x,y\right)=\int_{\mathbb{R}_+}e^{ x z}\left(\sin\left(y\, z\right)-y\, z\right)\nu\left(\dd z\right),\quad \left(x,y\right)\in\mathbb{R}_-\times \mathbb{R}.
	\]
	To shorten the notation we define $\widetilde{\psi_w}=\psi_w-\Lambda w$. Using the function $d$ we rewrite \eqref{im} as
	\begin{align*}
		&\mathfrak{Im}\psi_w+\frac{\rho^+}{\sqrt{c}}\mathfrak{Im}w=\frac{\rho^+}{\sqrt{c}}\mathfrak{Im}w+
		K\ast
		\Bigg[\left(\mathfrak{R}w-\frac{1}{2}-\int_{\mathbb{R}_+}\left(e^{-\Lambda z}-1+\Lambda z\right)\nu\left(\dd z\right)
		-\frac{\rho^+}{\sqrt{c}}\left(b+\rho\sqrt{c}\,\mathfrak{R}w\right)\right)\mathfrak{Im}w\\&\qquad
		+\left(-\rho^-\sqrt{c}\,\mathfrak{R}\psi_w
		-\left(\Lambda+\frac{\rho^+}{\sqrt{c}}\right)\int_{\mathbb{R}_+} z\left(e^{\mathfrak{R}\widetilde{\psi_w}\cdot z}-1\right)\nu\left(\dd z\right)
		-\left(\Lambda+\frac{\rho^+}{\sqrt{c}}\right) d\left(\mathfrak{R}\widetilde{\psi_w},\mathfrak{Im}\widetilde{\psi_w}\right)\right)\mathfrak{Im}w
		\\&
		\qquad+\left(\left(b+\rho\sqrt{c}\,\mathfrak{R}w\right)+c\,\mathfrak{R}\psi_w
		+
		\int_{\mathbb{R}_+}z\left(e^{\mathfrak{R}\widetilde{\psi_w}\cdot z}-1\right)\nu\left(\dd z\right)
		+
		d\left(\mathfrak{R}\widetilde{\psi_w},\mathfrak{Im}\widetilde{\psi_w}\right)\right)\left(\mathfrak{Im}\psi_w+\frac{\rho^+}{\sqrt{c}}\mathfrak{Im}w\right)
		\Bigg]\\
		&
		\quad=\frac{\rho^+}{\sqrt{c}}\mathfrak{Im}w
		\\&\qquad+
		K\ast \left[\left(C_1-\frac{\rho^+}{\sqrt{c}}\left(b+\rho\sqrt{c}\,\mathfrak{R}w\right)\right)\!\mathfrak{Im}w+f_1\left(\cdot\right)\mathfrak{Im}w+\left(b+\rho\sqrt{c}\,\mathfrak{R}w+f_2\left(\cdot\right)\right)\left(\mathfrak{Im}\psi_w+\frac{\rho^+}{\sqrt{c}}\mathfrak{Im}w\right)\right]\!,
	\end{align*}
	which holds on $[0,T]$. In particular, note that $f_1\ge0$ and $f_2\le 0$ in $[0,T]$. We want to find a continuous function $u\colon\mathbb{R}_+\to\mathbb{R}_+$ such that $|\mathfrak{Im}\psi_w|\le u$ on $[0,T]$. To do this, we argue by cases on $\mathfrak{Im}w$. In the following, we denote  $\widetilde{\Lambda}=\max\{\rho^- c^{-1/2},\,\Lambda\}$. All the claims regarding the sign of solutions to linear Volterra equations are justified by  \cite[Theorem C.$1$]{ee}.
	
	If $\mathfrak{Im}w\ge0$,  then we can consider the unique, nonnegative, continuous solution $l_1\colon[0,T]\to \mathbb{R}_+$ of the linear equation
	\[
	l_1=\frac{\rho^+}{\sqrt{c}}\mathfrak{Im}w+
	K\ast \left[\left|C_1-\frac{\rho^+}{\sqrt{c}}\left(b+\rho\sqrt{c}\,\mathfrak{R}w\right)\right|\mathfrak{Im}w+\left(\left(b+\rho\sqrt{c}\,\mathfrak{R}w\right)+f_2\right)l_1\right].
	\]
	Since the function $\mathfrak{Im}\psi_w+\frac{\rho^+}{\sqrt{c}}\mathfrak{Im}w+l_1$ satisfies -- in $[0,T]$ -- the linear equation
	\[
	\chi=2\frac{\rho^+}{\sqrt{c}}\mathfrak{Im}w+
	K\ast\left[2\,\left(C_1-\frac{\rho^+}{\sqrt{c}}\left(b+\rho\sqrt{c}\,\mathfrak{R}w\right)\right)^+\mathfrak{Im}w
	+f_1\,\mathfrak{Im}w
	+
	\left(\left(b+\rho\sqrt{c}\,\mathfrak{R}w\right)+f_2\right)\chi
	\right],
	\]
	we deduce that $\mathfrak{Im}\psi_w\ge-l_1-\frac{\rho^+}{\sqrt{c}}\mathfrak{Im}w$ on $[0,T]$. Next, we introduce the unique, nonnegative, continuous solution $\widebar{l_1}\colon\mathbb{R}_+\to\mathbb{R}_+$ of the linear equation
	\begin{equation}\label{lo1}
		\widebar{l_1}=\frac{\rho^+}{\sqrt{c}}\left|\mathfrak{Im}w\right|+K\ast \left[
		\left|C_1-\frac{\rho^+}{\sqrt{c}}\left(b+\rho\sqrt{c}\,\mathfrak{R}w\right)\right|\left|\mathfrak{Im}w\right|+\left(b+\rho\sqrt{c}\,\mathfrak{R}w\right)\widebar{l_1}
		\right]
	\end{equation}
	and observe that $\widebar{l_1}-l_1\ge0$ on $[0,T]$, because $\widebar{l_1}-l_1$ solves on $[0,T]$
	\[
	\chi=K\ast\left[-f_2\,l_1+
	\left(b+\rho\sqrt{c}\,\mathfrak{R}w\right)\chi
	\right].
	\]
	Hence, $\mathfrak{Im}\psi_w\ge-\widebar{l_1}-\frac{\rho^+}{\sqrt{c}}|\mathfrak{Im}w|$ on $[0,T]$.  We now focus on the upper bound. Observe that
	\begin{multline*}
		\mathfrak{Im}\psi_w-\widetilde{\Lambda}\,\mathfrak{Im}w=-\widetilde{\Lambda}\,\mathfrak{Im}w+K\ast\Bigg[\left(C_1
		%\mathfrak{R}w-\frac{1}{2}-\int_{\mathbb{R}_+}\left(e^{-\Lambda z}-1+\Lambda z\right)\nu\left(\dd z\right)
		+\left(b+\rho\sqrt{c}\,\mathfrak{R}w\right)\widetilde{\Lambda}\right)\mathfrak{Im}w
		+\left(b+\rho\sqrt{c}\,\mathfrak{R}w+f_2\right)\left(\mathfrak{Im}\psi_w-\widetilde{\Lambda}\,\mathfrak{Im}w\right)\\
		+\left(\left(\widetilde{\Lambda}c+\rho\sqrt{c}\right)\mathfrak{R}\psi_w
		+\left(\widetilde{\Lambda}-\Lambda\right)\left(\int_{\mathbb{R}_+} z\left(e^{\mathfrak{R}\widetilde{\psi_w}\cdot z}-1\right)\nu\left(\dd z\right)+ d\left(\mathfrak{R}\widetilde{\psi_w},\mathfrak{Im}\widetilde{\psi_w}\right)\right)\right)\mathfrak{Im}w
		\Bigg].
	\end{multline*}
	We then take the unique, nonnegative, continuous solution $u_1\colon[0,T]\to\mathbb{R}_+$ of the linear equation
	\[
	u_1=\widetilde{\Lambda}\,\mathfrak{Im}w+K\ast \left[\left|C_1+\left(b+\rho\sqrt{c}\,\mathfrak{R}w\right)\widetilde{\Lambda}\right|\mathfrak{Im}w+\left(b+\rho\sqrt{c}\,\mathfrak{R}w+f_2\right)u_1\right].
	\]
	We infer that $u_1-(\mathfrak{Im}\psi_w-\widetilde{\Lambda}\,\mathfrak{Im}w)\ge0$ since $\widetilde{\Lambda}c+\rho\sqrt{c},\,\widetilde{\Lambda}-\Lambda\ge0$, and $u_1-(\mathfrak{Im}\psi_w-\widetilde{\Lambda}\,\mathfrak{Im}w)$ satisfies (on $[0,T]$)
	\begin{multline*}
		\chi=2\widetilde{\Lambda}\,\mathfrak{Im}w+K\ast \Bigg[2\left(C_1+\left(b+\rho\sqrt{c}\,\mathfrak{R}w\right)\widetilde{\Lambda}\right)^-\mathfrak{Im}w+\left(b+\rho\sqrt{c}\,\mathfrak{R}w+f_2\right)\chi\\
		-
		\left(\left(\widetilde{\Lambda}c+\rho\sqrt{c}\right)\mathfrak{R}\psi_w
		+\left(\widetilde{\Lambda}-\Lambda\right)\left(\int_{\mathbb{R}_+} z\left(e^{\mathfrak{R}\widetilde{\psi_w}\cdot z}-1\right)\nu\left(\dd z\right)+ d\left(\mathfrak{R}\widetilde{\psi_w},\mathfrak{Im}\widetilde{\psi_w}\right)\right)\right)\mathfrak{Im}w
		\Bigg].
	\end{multline*}
	To end, we introduce the unique, nonnegative, continuous solution $\widebar{u_1}\colon\mathbb{R}_+\to\mathbb{R}_+$ of the linear equation
	\begin{equation}\label{up1}
		\widebar{u_1}=\widetilde{\Lambda}\left|\mathfrak{Im}w\right|+K\ast \left[
		\left|C_1+\left(b+\rho\sqrt{c}\,\mathfrak{R}w\right)\widetilde{\Lambda}\right|\left|\mathfrak{Im}w\right|+\left(b+\rho\sqrt{c}\,\mathfrak{R}w\right)\widebar{u_1}
		\right],
	\end{equation}
	and since $\widebar{u_1}-u_1$ satisfies the linear equation $\chi=K\ast [-f_2\,u_1+(b+\rho\sqrt{c}\,\mathfrak{R}w)\chi]$ on $[0,T]$, we conclude that $\widebar{u_1}\ge u_1$ on the same interval. Therefore, $\mathfrak{Im}\psi_w\le\widebar{u_1}+\widetilde{\Lambda}\,\mathfrak{Im}w$ on $[0,T]$.
	
	In the case $\mathfrak{Im}w\le0$ the argument is analogous, but the upper  and lower bounds are inverted. Specifically, with the same steps as the ones just carried out, we have $-\widebar{u_1}-\widetilde{\Lambda}|\mathfrak{Im}w|\le\mathfrak{Im}\psi_w\le \widebar{l_1}+\frac{\rho^+}{\sqrt{c}}|\mathfrak{Im}w|$ on $[0,T]$.
	
	Therefore, defining the continuous function $u\colon\mathbb{R}_+\to\mathbb{R}_+$ by $u=\widebar{l_1}+\widebar{u_1}+(\widetilde{\Lambda}+\frac{\rho^+}{\sqrt{c}})|\mathfrak{Im}w|$, we have
	\begin{equation}\label{bound_im}
		\left|\mathfrak{Im}\psi_w\left(t\right)\right|\le u\left(t\right),\quad 0\le t\le T.
	\end{equation}
	
	Taking the real part in \eqref{RV_log} and using  \eqref{r_F} we deduce that
	\begin{align*}
		\mathfrak{R}\psi_w
		=
		K\ast 
		&\Bigg[
		\frac{1}{2}\left(\left|\mathfrak{R}w\right|^2-\mathfrak{R}w\right)
		+\left(b+\rho\sqrt{c}\,\mathfrak{R}w\right)\mathfrak{R}\psi_w
		+\frac{c}{2}\left|\mathfrak{R}\psi_w\right|^2
		\\&-\frac{1}{2}\left(\left|\mathfrak{Im}w\right|^2+{c}\,\left|\mathfrak{Im}\psi_w\right|^2	+2\rho\sqrt{c}\,\mathfrak{Im}w\,\mathfrak{Im}\psi_w\right)
		-\left|\int_{\mathbb{R}_+}
		e^{\mathfrak{R}\widetilde{\psi_w}\cdot z}\left(\cos\left(\mathfrak{Im}\widetilde{\psi_w}\cdot z\right)-1\right)\nu\left(\dd z\right)\right|
		\\&+
		\int_{\mathbb{R}_+}\left(e^{\mathfrak{R}{\psi_w}\cdot z}\left(
		e^{-\Lambda\mathfrak{R}w z}-1
		\right)-\mathfrak{R}w\left(e^{-\Lambda z}-1\right)\right)\nu\left(\dd z\right)
		+\int_{\mathbb{R}_+}\left(e^{\mathfrak{R}{\psi_w}\cdot z}-1-\mathfrak{R}\psi_w\cdot z\right)\nu\left(\dd z\right)
		\Bigg]
	\end{align*} 
	on $[0,T]$. Since $|\cos(x)-1|=1-\cos (x)\le{x^2}/{2},\,x\in\mathbb{R}$, by  \eqref{bound_im} we have
	\begin{multline}\label{b_1}
		\left|	\int_{\mathbb{R}_+}e^{\mathfrak{R}\widetilde{\psi_w}\cdot z}\left(\cos\left(\mathfrak{Im}\widetilde{\psi_w}\cdot z\right)-1\right)\nu\left(\dd z\right)\right|
		\le \frac{1}{2}\left(\int_{\mathbb{R}_+} \left|z\right|^2\nu\left(\dd z\right)\right)\left|\mathfrak{Im}\widetilde{\psi_w}\right|^2
		\\
		\le \left(\int_{\mathbb{R}_+} \left|z\right|^2\nu\left(\dd z\right)\right)\left(u^2+\Lambda^2\left|\mathfrak{Im}w\right|^2\right),\quad \text{on }\left[0,T\right].
	\end{multline}
	Moreover, notice that by \eqref{bound_im}, since $|\rho|\le1$
	\begin{equation}\label{eccl}
		\frac{1}{2}\left|\left|\mathfrak{Im}w\right|^2+{c}\,\left|\mathfrak{Im}\psi_w\right|^2	+2\rho\sqrt{c}\,\mathfrak{Im}w\mathfrak{Im}\psi_w\right|\le\frac{1}{2} \left(\left|\mathfrak{Im}w\right|+\sqrt{c}\left|\mathfrak{Im}\psi_w\right|\right)^2
		\le \left|\mathfrak{Im}w\right|^2+cu^2
		.
	\end{equation}
	These facts coupled with \eqref{bah} suggest to consider the linear equation
	\begin{multline}\label{lot}
		l=
		K\ast\Bigg[\frac{1}{2}\left(\left|\mathfrak{R}w\right|^2-\mathfrak{R}w-2\left|\mathfrak{Im}w\right|^2\right)+\int_{\mathbb{R}_+}\left(e^{-\Lambda\mathfrak{R}wz}-1-\mathfrak{R}w\left(e^{-\Lambda z}-1\right)\right)\nu\left(\dd z\right)-c\,u^2
		\\-\left(\int_{\mathbb{R}_+}\left|z\right|^2\nu\left(\dd z\right)\right)\left(u^2+\Lambda^2\left|\mathfrak{Im}w\right|^2\right)
		+\left(b+\rho\sqrt{c}\,\mathfrak{R}w\right)l
		\Bigg],
	\end{multline}
	which has a unique, continuous, nonpositive solution $l$ defined on the whole $\mathbb{R}_+.$ At this point, observe that the difference $\mathfrak{R}\psi_w-l$ satisfies the linear equation
	\begin{align*}
		\chi=K\ast
		\Bigg[&\left(b+\rho\sqrt{c}\,\mathfrak{R}w\right)\chi+\frac{c}{2}\left|\mathfrak{R}\psi_w\right|^2+
		\left(\left|\mathfrak{Im}w\right|^2+cu^2-\frac{1}{2}\left(\left|\mathfrak{Im}w\right|^2+{c}\,\left|\mathfrak{Im}\psi_w\right|^2	+2\rho\sqrt{c}\,\mathfrak{Im}w\mathfrak{Im}\psi_w\right)\right)\\&
		+
		\int_{\mathbb{R}_+}\left(e^{\mathfrak{R}\psi_w\cdot z}-1-\mathfrak{R}\psi_w\cdot z\right)\nu\left(\dd z\right)
		+\int_{\mathbb{R}_+}\left(e^{\mathfrak{R}\psi_w\cdot z}-1\right)\left(e^{-\Lambda\mathfrak{R}wz}-1\right)\nu\left(\dd z\right)
		\\&
		+\left(\left(\int_{\mathbb{R}_+} \left|z\right|^2\nu\left(\dd z\right)\right)\left(u^2+\Lambda^2\left|\mathfrak{Im}w\right|^2\right)-	\left|\int_{\mathbb{R}_+}e^{\mathfrak{R}\widetilde{\psi_w}\cdot z}\left(\cos\left(\mathfrak{Im}\widetilde{\psi_w}\cdot z\right)-1\right)\nu\left(\dd z\right)\right|\right)
		\Bigg].
	\end{align*}
	It admits a unique, continuous solution on $[0,T]$ which is nonnegative by \eqref{b_1}, \eqref{eccl} and the fact that $e^x-1-x\ge0,\,x\in\mathbb{R}.$ Since $T\in(0,T_{\text{max}})$ was chosen arbitrarily, we infer that
	\[
	l\left(t\right)\le \mathfrak{R}\psi_w\left(t\right)\le0\text{\quad and \quad  }\left|\mathfrak{Im}\psi_w\left(t\right)\right|\le u\left(t\right),\quad 0\le t< T_\text{max}.
	\]
	Recalling that $l$ and $u$ are continuous on $\mathbb{R}_+$, and in particular bounded on every compact interval, we conclude that $T_\text{max}=\infty$, as desired.
	
	\underline{\emph{Step \upperRomannumeral{3}}.} Consider two global solutions $\psi_w,\,\psi'_w$ of \eqref{RV_log}, and let $\delta=\psi_w-\psi'_w$ and  $\tilde{\delta}=\psi'_w\vee \psi_w$. Then, for every $t\ge0$,
	\begin{multline}\label{u1}
		\delta\left(t\right)=\int_{0}^{t}K\left(t-s\right)\Bigg[\left(b+\rho\sqrt{c}\,w+\frac{c}{2}\left(\psi_w+\psi'_w\right)\left(s\right)
		+\int_{\mathbb{R}_+}z\left(e^{\left(-\Lambda w+\tilde{\delta}\left(s\right)\right)z}-1\right)\nu\left(\dd z\right)
		\right)\delta\left(s\right)\\+
		\int_{\mathbb{R}_+}e^{\left(-\Lambda w+\tilde{\delta}\left(s\right)\right) z}\left(e^{\left(\psi_w-\tilde{\delta}\right)\left(s\right)z}-e^{\left(\psi'_w-\tilde{\delta}\right)\left(s\right)z}-\delta\left(s\right)z\right)\nu\left(\dd z\right)
		\Bigg]\dd s.
	\end{multline}
	We introduce the function $k_w\colon \mathbb{C}_-\times \mathbb{C}_-\to\mathbb{C}$ defined for $(u,v)\in\mathbb{C}_-\times \mathbb{C}_-$ by 
	\begin{equation}\label{k_def}
		k_w\left(u,v\right)=
		\begin{cases}
			\frac{1}{v-u}\int_{\mathbb{R}_+}e^{\left(-\Lambda w+u\vee v\right)z}\left(e^{\left(v-u\vee v\right) z}-e^{\left(u-u\vee v\right)z}-\left(v-u\right)z\right)\nu\left(\dd z\right),&u\neq v\\
			0,&\text{otherwise}
		\end{cases}.
	\end{equation}
	We claim that $k_w$ is continuous on its domain. This is a consequence of an application of the mean value theorem to the functions $f_z(u)=e^{uz}-uz,\,u\in\mathbb{C}_-$, with the parameter $z\in\mathbb{R}_+$.  Indeed, using the inequality $|1-\cos x|\le x^2,\,x\in\mathbb{R}$,
	\begin{align}\label{import}
		\left|f_z\left(v\right)-f_z\left(u\right)\right|&\le z \sup_{\xi\in \left[u,v\right]}\left|e^{\xi z}-1\right|\left|v-u\right|\notag\\\notag&
		\le
		z\sup_{\xi\in \left[u,v\right]}\left(\left|e^{\mathfrak{R}\xi\cdot  z}-1\right|+\sqrt{2}e^{\frac{1}{2}\mathfrak{R}\xi\cdot z}\left(1-\cos\left(\mathfrak{Im}\xi \cdot z\right)\right)^\frac{1}{2}\right)\left|v-u\right|
		\\&
		\le z\left(\left(1-e^{\left(\mathfrak{R}u\wedge\mathfrak{R}v\right)z}\right)+\sqrt{2} \left(\left|\mathfrak{Im}u\right|\vee\left|\mathfrak{Im}v\right|\right)\left|z\right|\right)\left|v-u\right|
		,\quad u,\,v\in \mathbb{C}_-,\,z\in\mathbb{R}_+.
	\end{align}
	Consequently, the continuity of $k_w$ follows from
	\begin{equation}\label{pure}
		\left|f_z\left(v-u\vee v\right)-f_z\left(u-u\vee v\right)\right|\le \left|z\right|^2\left(1+\sqrt{2}\right)\left|v-u\right|^2, \quad u,\,v\in \mathbb{C}_-,\,z\in\mathbb{R}_+.
	\end{equation}
	Coming back to \eqref{u1} we have (on $\mathbb{R}_+$)
	\begin{equation}\label{refer}
		\delta=K\ast\left[\left(b+\rho\sqrt{c}\,w+\frac{c}{2}\left(\psi_w+\psi'_w\right)\left(\cdot\right)
		+\int_{\mathbb{R}_+}z\left(e^{\left(-\Lambda w+\tilde{\delta}\left(\cdot\right)\right)z}-1\right)\nu\left(\dd z\right)\\+k_w\left(\psi'_w\left(\cdot\right),\psi_w\left(\cdot\right)\right)
		\right)\delta\right],
	\end{equation}
	which is a linear equation admitting the zero function as its unique solution. Hence $\psi'_w=\psi_w$ on $\mathbb{R}_+$, completing the proof of this step.
	
	The fact that $\psi_{\mathfrak{R}w}$ is $\mathbb{R}_--$valued follows from \eqref{bound_im}, because in this case $u\equiv 0$. This concludes the proof of the statement in \emph{\ref{1t1}}.
	
	\emph{\ref{2t1}} 
	From \eqref{r_F} and \eqref{contr} we deduce that $\mathfrak{R}{F}(w,v)\le F(\mathfrak{R}w,\,\mathfrak{R}v)$ for every $v\in\mathbb{C}_-$. Taking the real part in \eqref{RV_log} and recalling that -- under Hypothesis \ref{c1} -- the kernel $K$ is nonnegative on $(0,\infty)$ we obtain
	\begin{equation*}
		\mathfrak{R}\psi_w\left(t\right)\le\int_0^t K\left(t-s\right){F}\left(\mathfrak{R}w,\,\mathfrak{R}\psi_w\left(s\right)\right)\dd s,\quad t\ge0.
	\end{equation*}
	We can then introduce a nonnegative function $\widetilde{\gamma}\colon\mathbb{R}_+\to\mathbb{R}_+$ defined by the  relation
	\begin{equation}\label{j1}
		\mathfrak{R}\psi_w\left(t\right)=-\widetilde{\gamma}\left(t\right)+\int_0^t K\left(t-s\right){F}\left(\mathfrak{R}w,\,\mathfrak{R}\psi_w\left(s\right)\right)\dd s,\quad t\ge0.
	\end{equation}
	Using \eqref{RV_log}, one can rewrite $\widetilde{\gamma}$ as  
	\begin{equation*}
		\widetilde{\gamma}\left(t\right)=\int_{0}^{t}K\left(t-s\right)\left({F}\left(\mathfrak{R}w,\mathfrak{R}\psi_w\left(s\right)\right)-\mathfrak{R}F\left(w,\psi_w\left(s\right)\right)\right)	\dd s,\quad t\ge0.
	\end{equation*}
	Thus $\widetilde{\gamma}\in\mathcal{G}_K$ by \cite[Remark B.$6$]{ee}. At this point we subtract \eqref{j1} from \eqref{RV_log} (with $\mathfrak{R}w$ instead of $w$) to deduce that $\delta={\psi_{\mathfrak{R}w}}-\mathfrak{R}\psi_w$ satisfies
	\begin{equation}\label{delta}
		\delta\left(t\right)=\widetilde{\gamma}\left(t\right)+\int_{0}^t K\left(t-s\right)\left({F}\left(\mathfrak{R}w,{\psi_{\mathfrak{R}w}}\left(s\right)\right)
		-{F}\left(\mathfrak{R}w,\,\mathfrak{R}\psi_w\left(s\right)\right)
		\right)\dd s,\quad t\ge0.
	\end{equation}
	If we denote by  $\tilde{\delta}=\mathfrak{R}\psi_w\vee \psi_{\mathfrak{R}w}$, we then need to study (on $\mathbb{R}_+)$
	\begin{align*}
		{F}\left(\mathfrak{R}w,{\psi_{\mathfrak{R}w}}\right)
		-{F}\left(\mathfrak{R}w,\,\mathfrak{R}\psi_w\right)
		&=\left(b+\rho\sqrt{c}\,\mathfrak{R}w+\frac{c}{2}\left(\mathfrak{R}\psi_w+\psi_{\mathfrak{R}w}\right)+\int_{\mathbb{R}_+}z\left(e^{\left(-\Lambda\mathfrak{R}w+\tilde{\delta} \right)z}-1\right)\nu\left(\dd z\right)\right)\delta	\\&\qquad\quad	
		+\int_{\mathbb{R}_+}e^{\left(-\Lambda\mathfrak{R}w+\tilde{\delta}\right)z}\left(e^{\left(\psi_{\mathfrak{R}w}-\tilde{\delta}\right) z}
		-	e^{\left(\mathfrak{R}\psi_w-\tilde{\delta}\right) z}
		-\delta z\right)\nu\left(\dd z\right)
		\\
		&=\left(w_1\left(\cdot\right)+k_{\mathfrak{R}w}\left(\mathfrak{R}\psi_w\left(\cdot\right), \psi_{\mathfrak{R}w}\left(\cdot\right)\right)\right)\delta,
	\end{align*}
	with $k_{\mathfrak{R}w}$ as in \eqref{k_def}.
	Going back to \eqref{delta}, 
	\begin{equation*}
		\delta\left(t\right)=\widetilde{\gamma}\left(t\right)+\int_{0}^tK\left(t-s\right)\left(w_1\left(s\right)+k_{\mathfrak{R}w}\left(\mathfrak{R}\psi_w\left(s\right),\psi_{\mathfrak{R}w}\left(s\right)\right)\right)\delta\left(s\right)\dd s,\quad t\ge0.
	\end{equation*}
	We can now apply \cite[Theorem C.$1$]{ee} in order to conclude that $\delta\ge0$ on $\mathbb{R}_+$. This yields \eqref{comp_log} and concludes the proof of \emph{\ref{2t1}}.
\end{proof}
\section{Proof of Proposition \ref{price_Lewis}}\label{Lew}
This section is devoted to the proof of Proposition \ref{price_Lewis}, a result which allows to price options on the underlying asset $S$ with maturity $T>0$.
\begin{proof}
	Let us define the function $f\colon\mathbb{R}\to \mathbb{R}$ by
	\begin{equation}\label{f}
		f\left(m\right)=\mathbb{E}\left[e^{X_T}-\left(e^{X_T}-e^m\right)^+\right]e^{-\frac{1}{2}m}=\mathbb{E}\left[e^{X_T}1_{\left\{X_T\le m\right\}}+e^m1_{\left\{m<X_T\right\}}\right]e^{-\frac{1}{2}m},\quad m\in\mathbb{R}.
	\end{equation}
	Denote by $\mu_T$ the probability distribution of $X_T$ on $\mathbb{R}$ and note that $f\in L^1(\mathbb{R})$, because, thanks to Tonelli's theorem, 
	\begin{equation}\label{inverti}
		\int_{\mathbb{R}}e^{-\frac{1}{2}m}\left[\int_{\mathbb{R}}\left(e^x1_{\left\{x\le m\right\}}+e^m1_{\left\{m<x\right\}}\right)\mu_T\left(\dd x\right)\right]\dd m
		=4\int_{\mathbb{R}}e^{\frac{1}{2}x}\mu_T\left(\dd x\right)
		=4 \mathbb{E}\left[e^{\frac{1}{2}X_T}\right]<\infty.
	\end{equation}
	Therefore we can compute the Fourier transform of $f$ as follows
	\begin{multline*}
		\hat{f}\left(\lambda\right)=\int_{\mathbb{R}}e^{\left(-\frac{1}{2}+i\lambda\right) m}\left[\int_{\mathbb{R}}\left(e^x1_{\left\{x\le m\right\}}+e^m1_{\left\{m<x\right\}}\right)\mu_T\left(\dd x\right)\right]\dd m
		\\=\int_{\mathbb{R}}\left[e^x\int_{x}^{\infty}e^{\left(-\frac{1}{2}+i\lambda\right) m}\dd m+\int_{-\infty}^xe^{\left(\frac{1}{2}+i\lambda\right) m}\dd m\right]\mu_T\left(\dd x\right)
		=\frac{1}{\frac{1}{4}+\lambda^2}\Psi^{X_T}\left(\frac{1}{2}+i\lambda\right),\quad \lambda\in\mathbb{R},
	\end{multline*}
	where in the second equality we are allowed to use Fubini's theorem by \eqref{inverti}.
	
	Since $|\Psi ^{X_T}(\frac{1}{2}+i\lambda)|\le \mathbb{E}[e^{\frac{1}{2}X_T}]<\infty$ and, by dominated convergence, $f$ in continuous on $\mathbb{R}$, we invoke the Fourier inversion theorem, see for instance \cite[Theorem $9.11$]{rudin}, to obtain
	\begin{equation}\label{inversion}
		f\left(m\right)=\frac{1}{2\pi}\int_{\mathbb{R}}e^{-im\lambda}\frac1{\frac{1}{4}+\lambda^2}\Psi^{X_T}\left(\frac{1}{2}+i\lambda\right)\dd \lambda,\quad m\in\mathbb{R}.
	\end{equation}
	Combining \eqref{f} and \eqref{inversion} and recalling Corollary \ref{cor_mar} we deduce that
	\begin{equation}\label{pre}
		\mathbb{E}\left[\left(e^{X_T}-e^m\right)^+\right]=1-\frac{1}{2\pi}\int_{\mathbb{R}}e^{\left(\frac{1}{2}-i\lambda\right)m}\frac1{\frac{1}{4}+\lambda^2}\Psi^{X_T}\left(\frac{1}{2}+i\lambda\right)\dd \lambda,\quad m\in\mathbb{R}.
	\end{equation}
	Now, for every $k\in \mathbb{R}$, we can determine the  price $C_S(k,T)$ of a call option written on $S$ with log strike $k$ and maturity $T$. Indeed,  
	taking $m=k-\log(S_0)$ in \eqref{pre} we have
	\begin{align*}
		C_S\left(k,T\right)&=\mathbb{E}\left[\left(S_T-e^k\right)^+\right]=S_0
		-\frac{1}{2\pi}\sqrt{S_0e^k}\int_{\mathbb{R}}e^{i\lambda\left(\log\left(S_0\right)-k\right)}\frac1{\frac{1}{4}+\lambda^2}\Psi^{X_T}\left(\frac{1}{2}+i\lambda\right)\dd \lambda\notag\\
		&=S_0-\frac{1}{\pi}\sqrt{S_0e^k}\int_{\mathbb{R}_+}\mathfrak{R}\left[e^{i\lambda\left(\log\left(S_0\right)-k\right)}\Psi^{X_T}\left(\frac{1}{2}+i\lambda\right)\right]\frac1{\frac{1}{4}+\lambda^2}\,\dd \lambda,
	\end{align*}
	which coincides with \eqref{pr_SPX}. The expression \eqref{put_SPX} for the price $P_S(k,T)$ of a put option with the same underlying, log strike and maturity as before, follows from \eqref{pr_SPX}, Corollary \ref{cor_mar}, and the put-call parity formula. This completes the proof.
\end{proof}

\section{Proof of Theorem \ref{stime}}\label{ap_C}
This section is devoted to the proof of Theorem \ref{stime}, a result providing estimates for the multi-factor approximation of the Riccati-Volterra equations appearing in the Fourier-Laplace transform of the log returns and VIX$^2$.
\begin{proof}
		Fix $T>0$. We first prove Point \emph{\ref{t11}}. Take $w\in\mathbb{C}$ such that $\mathfrak{R}w\in[0,1]$ and $n\in\mathbb{N},$ and observe that $|\psi_{w,n}|\le \widebar{l_{1,n}}+\widebar{u_{1,n}}-l_n+(\widetilde{\Lambda}+\frac{\rho^+}{\sqrt{c}})|\mathfrak{Im}w|$ on $\mathbb{R}_+$. Here $\widetilde{\Lambda}=\max\{\rho^-c^{-1/2},\,\Lambda\}$ and $\widebar{l_{1,n}}$ [resp., $\widebar{u_{1,n}},\,l_n$] is the unique, continuous solution of \eqref{lo1} [resp., \eqref{up1}, \eqref{lot}] in Appendix \ref{ap_A} with $K_n$ instead of $K$.
		\cite[Corollary C.$4$]{ee}  guarantees the existence of a positive constant $C_1=C_1(\rho, b,c,\Lambda,\nu)$ such that
		\[
		\widebar{l_{1,n}}\left(t\right)+\widebar{u_{1,n}}\left(t\right)+\left(\widetilde{\Lambda}+\frac{\rho^+}{\sqrt{c}}\right)\left|\mathfrak{Im}w\right|\le C_1\left(1+\int_{0}^{T}\left|E_{b+\rho^+\sqrt{c},n}\left(s\right)\right|\dd s\right)\left|\mathfrak{Im}w\right|,\quad t\in\left[0,T\right].
		\]
		%Since $K_n\to K$ in $L^2\left(\left[0,T\right]\right)$, \cite[Theorem~$3.1$, Chapter $2$]{g}
		Then, recalling the hypothesis of boundedness for $(\int_{0}^{T}|E_{b+\rho^+\sqrt{c},n}(s)|\dd s)_n$ and using \eqref{lot}, another application of \cite[Corollary C.$4$]{ee} provides the existence of a constant $C_2=C_2(\rho,b,c,\Lambda,\nu,\mathbf{m}, \mathbf{x},T)>0$ such that $|l_n(t)|\le C_2(1+\left|\mathfrak{Im}w\right|^2),\,t\in[0,T]$. This implies, given that $n\in\mathbb{N}$ is arbitrary, that
		\begin{equation}\label{le1}
			\sup_{n\in\mathbb{N}}\sup_{t\in\left[0,T\right]}\left|\psi_{w,n}\left(t\right)\right|\le C_3\left(1+\left|\mathfrak{Im}w\right|^2\right),\quad  \text{for some }C_3=C_3\left(\rho,b,c,\Lambda,\nu,\mathbf{m},\mathbf{x},T\right)>0.
		\end{equation}
		Since the same argument works for $\psi_w$, without loss of generality, we assume that the upper bound in \eqref{le1} holds also for $\psi_w$. Now, from \eqref{RV_log} and \eqref{RV_log_n} we have (on $\mathbb{R}_+$)
		\begin{equation*}
			\psi_w-\psi_{w,n}= \left(K-K_n\right)\ast F\left(w,\psi_{w,n}\left(\cdot\right)\right)	+ 
			K\ast \left(F\left(w,\psi_w\left(\cdot\right)\right)-F\left(w,\psi_{w,n}\left(\cdot\right)\right)\right),\quad n\in\mathbb{N}.
		\end{equation*}
		For every $v\in\mathbb{C}_-$, recalling the inequality $e^x-1-x\le x^2/2,\,x\le 0$, and thanks to the computations in Appendix \ref{ap_A} (see \eqref{import})
		\begin{multline*}
			\left|\int_{\mathbb{R}_+}\left[e^{\left(v-\Lambda w\right)z}-w\left(e^{-\Lambda z}-1\right)-1-vz\right]\nu\left(\dd z\right)\right|\\\le 
			4\sqrt{2}\left[\frac{\Lambda^2}{2}\left(1+\left|\mathfrak{Im}w\right|\right)+\left|v\right|^2+\Lambda^2\left(1+\left|\mathfrak{Im}w\right|^2\right)\right]\int_{\mathbb{R}_+}\left|z\right|^2\nu\left(\dd z\right).	
		\end{multline*}
		Then by \eqref{le1} and  \eqref{F_log} we deduce that there exists a constant $C_4=C_4(\rho,b,c,\Lambda, \nu,\mathbf{m},\mathbf{x},T)>0$ such that
		\begin{equation}\label{ste1}
			\sup_{t\in\left[0,T\right]} \left|\left(\left(K-K_n\right)\ast \left(F\left(w,\psi_{w,n}\left(\cdot\right)\right)\right)\right)\left(t\right)\right|\le C_4\left(1+\left|\mathfrak{Im}w\right|^4\right)\int_{0}^{T}\left|K_n\left(s\right)-K\left(s\right)\right|\,\dd s,\quad n\in\mathbb{N}.
		\end{equation}
		In what follows, we denote by $h_n=(K-K_n)\ast F(w,\psi_{w,n}(\cdot)),$ i.e., the function that we have just bounded. Next, computations analogous to those carried out to obtain the Volterra equation \eqref{refer} in Appendix \ref{ap_A},
		allow us to write (on $\mathbb{R}_+$)
		\begin{multline*}
			F\left(w,\psi_w\right)-F\left(w,\psi_{w,n}\right)=\Bigg(b+\rho\sqrt{c}\,w+\frac{c}{2}\left(\psi_w+\psi_{w,n}\right)
			+\int_{\mathbb{R}_+}z\left(e^{\left(-\Lambda w+\psi_{w,n}\vee\psi_{w}\right)z}-1\right)\nu\left(\dd z\right)
			\\+k_w\left(\psi_{w,n},\psi_w\right)\Bigg)\left(\psi_w-\psi_{w,n}\right),
		\end{multline*}
		where $k_w$ is the continuous function  in \eqref{k_def}. Therefore, since $|k_w(u,v)|\le(1+\sqrt{2}) (\int_{\mathbb{R}_+}|z|^2\nu(\dd z))|v-u|$ for every $u,v\in\mathbb{C}_-$ (see \eqref{pure}) and recalling \eqref{le1}-\eqref{ste1}, an application of  \cite[Corollary C.$4$]{ee} yields
		\begin{multline}\label{eh_oh}
			\sup_{t\in\left[0,T\right]}\left|\psi_w\left(t\right)-\psi_{w,n}\left(t\right)-h_n\left(t\right)\right|\le C_5\left(1+\left|\mathfrak{Im}w\right|^6\right) 
			\frac{\int_{0}^{T}E_{b^++\rho^+\sqrt{c}+c_{\nu}C_3\left(1+\left|\mathfrak{Im}w\right|^2\right),K}\left(s\right)\dd s}{\int_{0}^{T}\left|E_{b^++\rho^+\sqrt{c},K}\left(s\right)\right|\dd s}\\
			\times\int_{0}^{T}\left|K_n\left(s\right)-K\left(s\right)\right|\dd s
			,\quad n\in\mathbb{N}.
		\end{multline}
		for some $C_5=C_5(\rho, b,c,\Lambda,\nu,\mathbf{m}, \mathbf{x}, T)>0$ and where $c_\nu=2(1+\sqrt{2})(\int_{\mathbb{R}_{+}}|z|^2\nu(\dd z))$. Notice that by \cite[Proposition $8.1$, Chapter $9$]{g} and Hypothesis \ref{c1}, $E_{b^++\rho^+\sqrt{c}+c_{\nu}C_3(1+|\mathfrak{Im}w|^2),K}\ge0$. Consequently, thanks to \cite[Theorem C.$1$, Remark B.$6$]{ee}, $E_{b^++\rho^+\sqrt{c},K}\le E_{b^++\rho^+\sqrt{c}+c_{\nu}C_3(1+|\mathfrak{Im}w|^2),K}$ a.e. in $\mathbb{R}_+$. Hence the ratio in \eqref{eh_oh} is greater or equal to $1$. Combining \eqref{eh_oh} with \eqref{ste1} yields \eqref{bou_log}. 
		
		In order to prove the final remark about the independence of the constant $C$ in \eqref{bou_log} with respect to $\mathbf{m}$ and $\mathbf{x}$, note that in the previous argument such a dependence is only due to $\widetilde{C},$ the positive constant given by the hypothesis controlling the sequence $(\int_{0}^{T}|E_{b+\rho^+\sqrt{c},n}(s)|\dd s)_n$. When $b<0$, the kernels $-bK_n$  inherit the property of complete monotonicity from $K_n$. If in addition $\rho<0$, we can use \cite[Theorem $3.1$, Chapter~$5$]{g} to infer that $\int_{0}^{T}|E_{b+\rho^+\sqrt{c},n}(s)|\dd s=\int_{0}^{T}|E_{b,n}(s)|\dd s\le |b|^{-1}$ for every $n\in\mathbb{N}$, and $\int_{0}^{T}|E_{b^++\rho^+\sqrt{c},K}(s)|\dd s=\|K\|_{L^1([0,T])}$. In particular, in this case $C$ depends on $T$ only via the $L^1-$norm of $K$ in $[0,T]$ (see \eqref{le1}-\eqref{eh_oh}).
		
		The proof of Point \emph{\ref{t12}} follows by an analogous argument. In this case we use the estimates in \cite[Appendix B.$1$]{primo} and the fact that $\int_{0}^{{\delta}}K_n(s)\dd s\le\int_{0}^{T\vee{\delta}}E_{b^+,n}(s)\dd s\le \widetilde{C}$, $n\in\mathbb{N}$. We also combine \cite[Corollary C.$4$]{ee}, the comparison result for linear Volterra equations in \cite[Theorem~$2$]{comp}, and the inequality 
		\[
		\int_{0}^{{\delta}}h\left(s\right)K_n\left(s+t\right)\dd s
		\le \int_{0}^{{\delta}}h\left(s\right)K_n\left(s\right)\dd s,\quad t\ge0,
		\] 
		which holds also for $K$ by Hypothesis \ref{c1}. 	
		\end{proof}

%%%%% FIGURES %%%%%%%
\newpage 
\hspace{0pt}
\vfill
\begin{center}
\begin{figure}[h]
  %  \centering
   \vspace{-0.4cm} 
     \includegraphics[width=8.3cm]{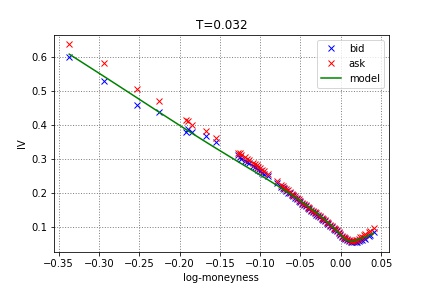} \hspace{-0.75cm}
     \includegraphics[width=8.3cm]{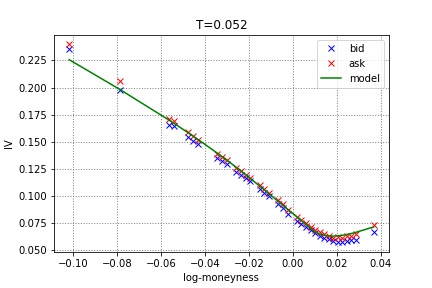}
     \includegraphics[width=8.3cm]{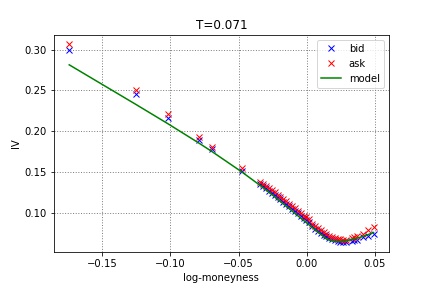} \hspace{-0.75cm}
      \includegraphics[width=8.3cm]{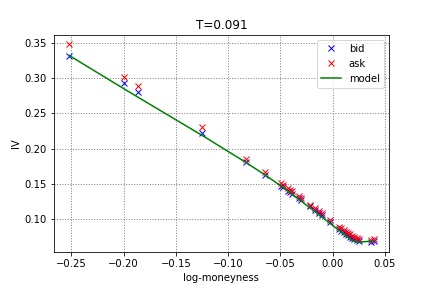} 
    \caption{Calibrated implied volatility of SPX options on 19 May 2017, see Table \ref{Table:CalibratedParameters}. 
    The blue and red crosses are
respectively the bid and ask of market implied volatilities. The implied volatility smiles from the
model are in green. The abscissa is in log-moneyness and T is time to expiry in years.}
  \label{graph-calibration-SP} 
\end{figure}
\end{center}
\hspace{0pt}
\vfill

\newpage
\hspace{0pt}
\vfill
 \begin{center}
\begin{figure}[h]
  %  \centering
   \vspace{-0.4cm} 
     \includegraphics[width=8.3cm]{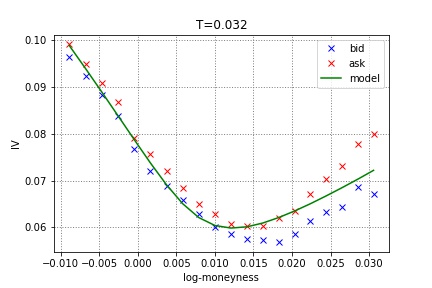} \hspace{-0.75cm}
     \includegraphics[width=8.3cm]{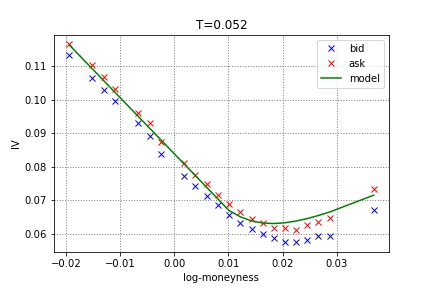}
     \includegraphics[width=8.3cm]{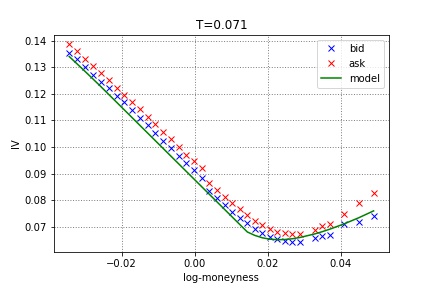} \hspace{-0.75cm}
      \includegraphics[width=8.3cm]{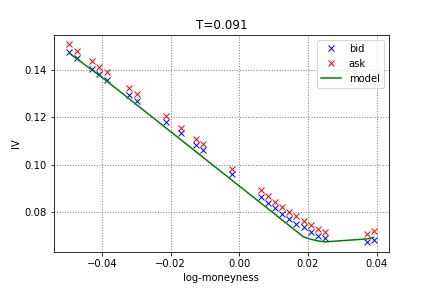} 
    \caption{Zoom around the money of calibrated implied volatility of SPX options on 19 May 2017.}
  \label{graph-calibration-SP-zoom} 
\end{figure}
\end{center}
\hspace{0pt}
\vfill
\newpage
\hspace{0pt}
\vfill
\begin{center}
\begin{figure}[h]
  %  \centering
   \vspace{-0.4cm} 
     \includegraphics[width=8.3cm]{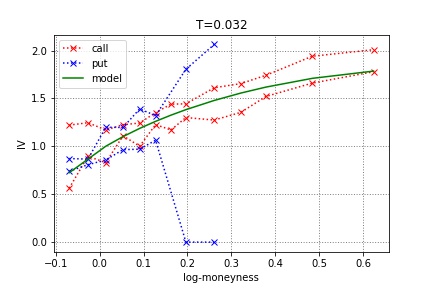} \hspace{-0.75cm}
     \includegraphics[width=8.3cm]{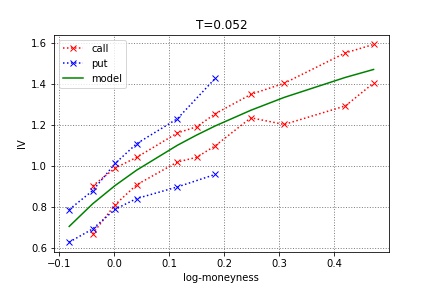}
     \includegraphics[width=8.3cm]{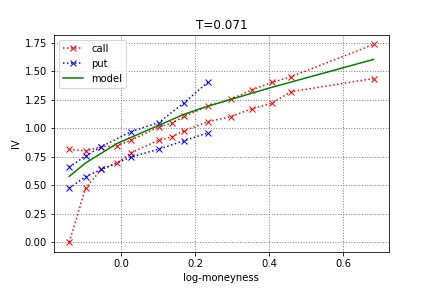} \hspace{-0.75cm}
      \includegraphics[width=8.3cm]{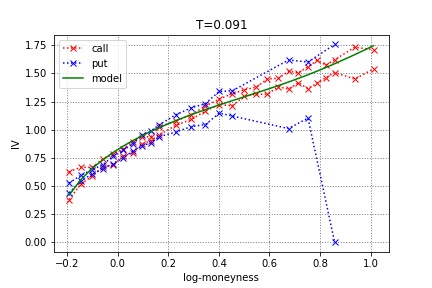} 
    \caption{Calibrated implied volatility of VIX options on 19 May 2017, see Table \ref{Table:CalibratedParameters}. 
    The blue and red crosses are the bid-ask corridors of market implied volatilities computed from put and call options, respectively. The implied volatility smiles from the
model are in green. The abscissa is in log-moneyness and T is time to expiry in years.}
  \label{graph-calibration-VIX}
\end{figure}
\end{center}
\hspace{0pt}
\vfill
\newpage
\hspace{0pt}
\vfill
   \begin{center}
\begin{figure}[h]
  %  \centering
   \vspace{-0.4cm} 
     \includegraphics[width=5cm]{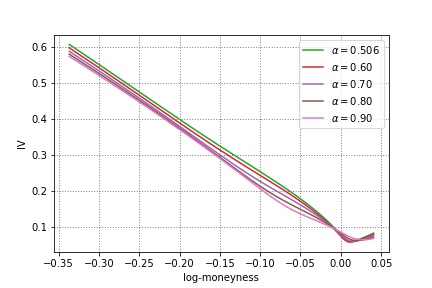} \hspace{-0.3cm}
     \includegraphics[width=5cm]{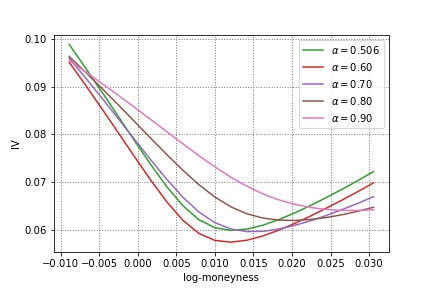} \hspace{-0.3cm}
      \includegraphics[width=5.7cm]{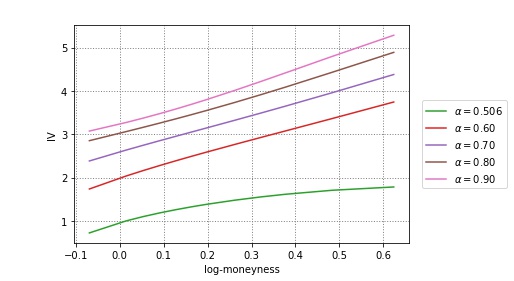} 
    \caption{Sensitivity of implied volatility for SPX (left and center) and VIX (right) options with respect to the kernel power $\alpha$ for the shortest maturity.}
  \label{fig:sensitivity-alpha}
\end{figure}
\end{center}

\begin{center}
\begin{figure}[h]
  %  \centering
   \vspace{-0.4cm} 
     \includegraphics[width=5.4cm]{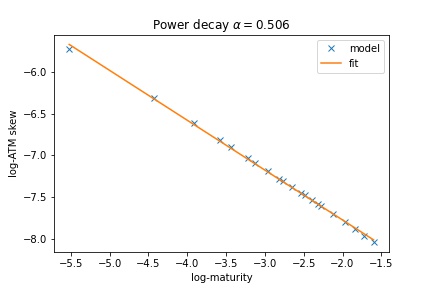} \hspace{-0.75cm}
     \includegraphics[width=5.8cm]{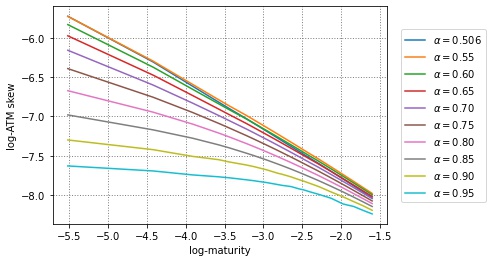} \hspace{-0.25cm}
      \includegraphics[width=5.4cm]{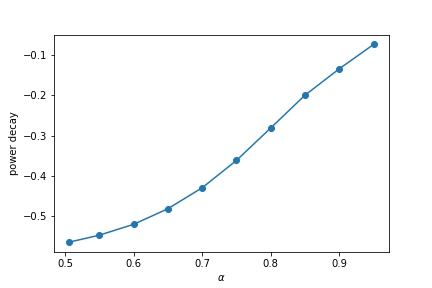} \hspace{-0.8cm}  
    \caption{Power decay of the ATM volatility skew.
     On the left, the log-log plot of ATM volatility skew for the calibrated parameters of Table \ref{Table:CalibratedParameters}. At the center, the log-log plot of ATM volatility skew for different values of $\alpha$; the other parameters are as in Table \ref{Table:CalibratedParameters}. On the right, the fitted power decay of the ATM volatility skew as function of $\alpha$; the power decay is estimated using the five shortest maturities, i.e. $\log(T) \in [-5.5, -3.5]$.}
    \label{graph-skew-decay}
 %       \vspace{-0.6cm} 
\end{figure}
\end{center}
\hspace{0pt}
\vfill
\newpage
\hspace{0pt}
\vfill
 \begin{center}
\begin{figure}[h!]
  %  \centering
   \vspace{-2cm} 
    \includegraphics[width=5cm]{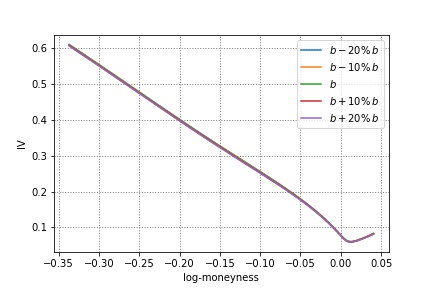} \hspace{-0.3cm}
     \includegraphics[width=5cm]{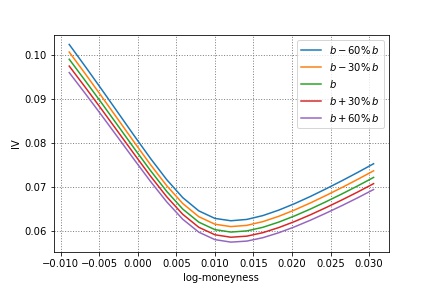} \hspace{-0.3cm}
      \includegraphics[width=5.5cm]{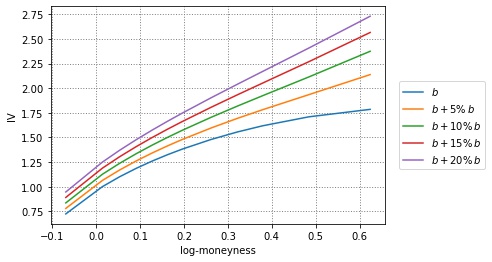} 
     \includegraphics[width=5cm]{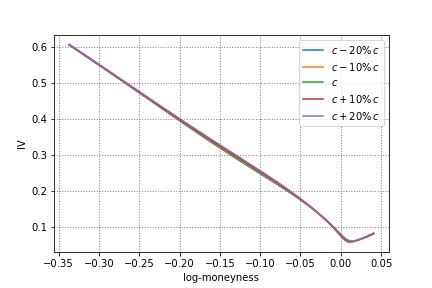} \hspace{-0.75cm}
     \includegraphics[width=5cm]{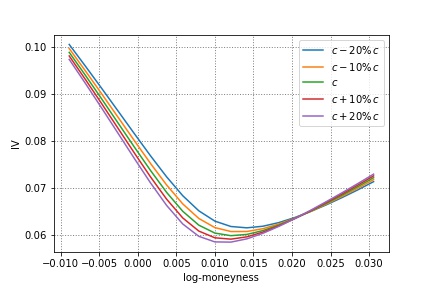} \hspace{-0.75cm}
      \includegraphics[width=5.5cm]{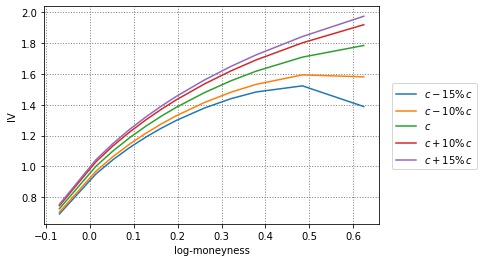} 
       \includegraphics[width=5cm]{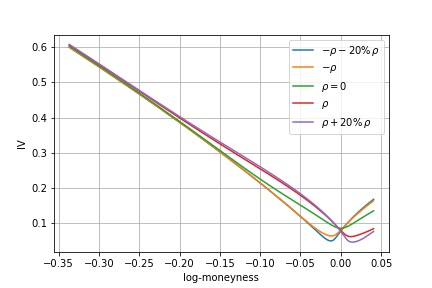} \hspace{-0.75cm}
     \includegraphics[width=5cm]{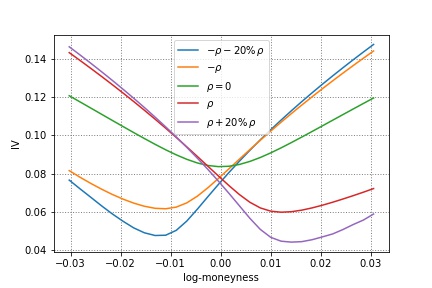} \hspace{-0.75cm}
      \includegraphics[width=5.5cm]{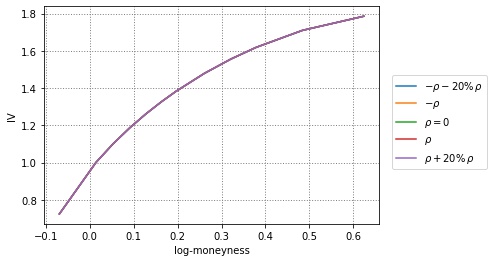} 
        \includegraphics[width=5cm]{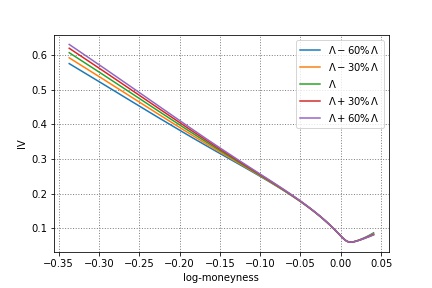} \hspace{-0.05cm}
     \includegraphics[width=5cm]{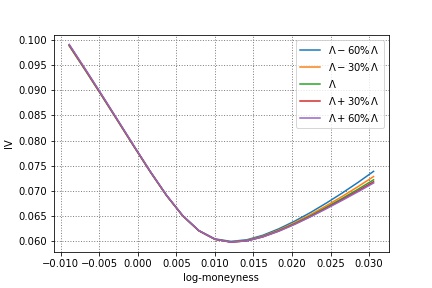}  \hspace{-0.05cm}
      \includegraphics[width=5.5cm]{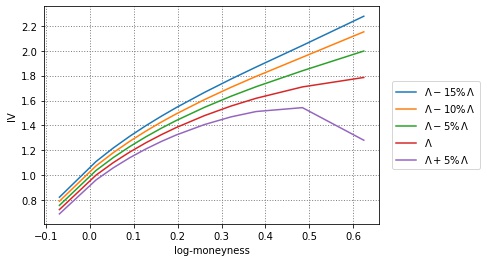} 
      \vspace{-0.1cm} 
    \caption{Sensitivity of implied volatility for SPX (left, center) and VIX (right) options for the shortest maturity
     with respect to: the volatility of volatility $c$ (first line), the correlation $\rho$ (second line),
     the jump-leverage $\Lambda$ (third line), and the mean reversion speed parameter $b$ (fourth line).}
  \label{fig:sensitivity-others}
\end{figure}
\end{center} 
\vspace{-2ex}
 \begin{center}
\begin{figure}[h!]
  %  \centering
   \vspace{-0.4cm} 
    \includegraphics[width=3.7cm]{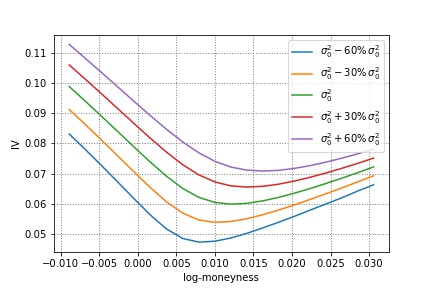} \hspace{-0.5cm}
       \includegraphics[width=3.7cm]{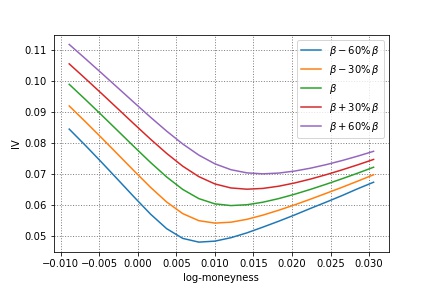} \hspace{-0.5cm}
      \includegraphics[width=4.2cm]{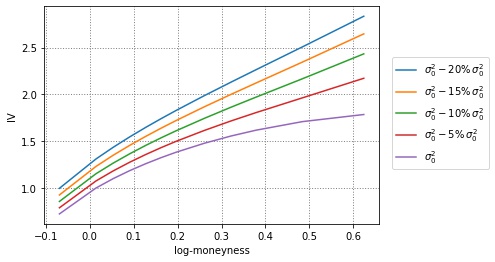}  \hspace{-0.2cm}
      \includegraphics[width=4.2cm]{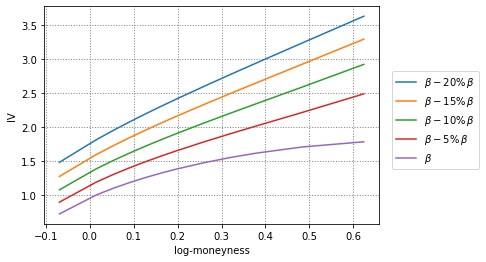} 
      \includegraphics[width=3.7cm]{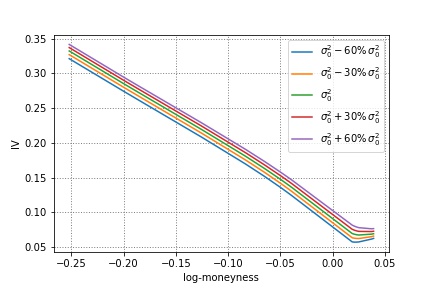} \hspace{-0.3cm}
     \includegraphics[width=3.7cm]{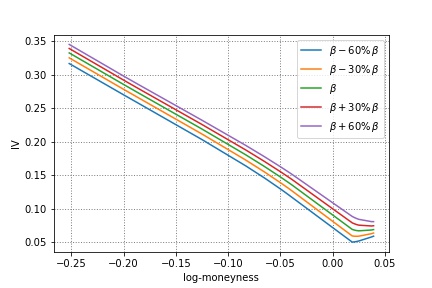}\hspace{-0.15cm}
      \includegraphics[width=4.2cm]{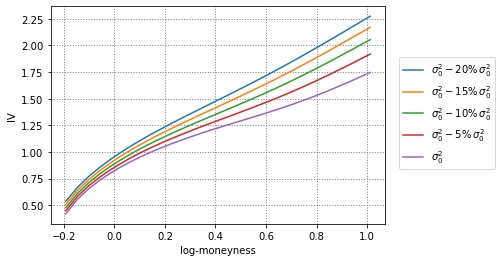}  \hspace{+0.2cm}
      \includegraphics[width=4.2cm]{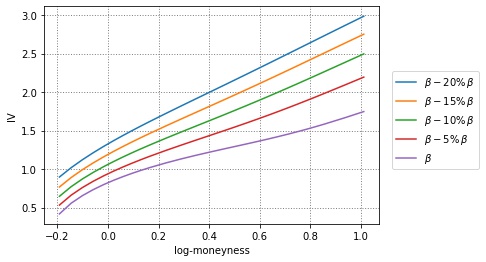} 
      \vspace{-0.1cm} 
    \caption{Sensitivity of implied volatility for SPX (left, left-center)  and VIX (right-center, right) options for the shortest (first line)
    and longest maturity (second line)
     with respect to the initial spot variance curve, i.e. intercept $\sigma_0^2$, and proportional coefficient $\beta$.}
  \label{fig:sensitivity-spot-curve}
\end{figure}
\end{center} 
%%%%%  %%%%%%%
\hspace{0pt}
\vfill

\newpage

	%\nocite{*}

\end{document}